\documentclass[11pt]{article}

\usepackage{amssymb,amsmath,amsfonts,amssymb}
\usepackage{graphics,graphicx,color}
\usepackage{empheq}


\def\@abssec#1{\vspace{.05in}\footnotesize \parindent .2in
{\bf #1. }\ignorespaces}

\graphicspath{{/EPSF/}{Figures/}}
\DeclareGraphicsExtensions{.eps}

\setlength\textwidth{37.2pc}
\setlength\textheight{56pc}
\setlength\topmargin{-2cm}
\addtolength\oddsidemargin{-1.2cm}
\addtolength\evensidemargin{-1.2cm}

\newtheorem{theorem}{Theorem}[section]
\newtheorem{lemma}[theorem]{Lemma}

\newtheorem{remark}[theorem]{Remark}

\def \Rm {\mathbb R}
\def \Nm {\mathbb N}
\def \Cm {\mathbb C}
\def \Zm {\mathbb Z}
\def \Sm {\mathbb S}

\def \Mm {\mathbb M}
\newcommand{\eps}{\varepsilon}

\newcommand{\dsum}{\displaystyle\sum}
\newcommand{\dint}{\displaystyle\int}

\newcommand{\aver}[1]{\langle {#1} \rangle}

\newcommand{\mF}{\mathcal F}
\newcommand{\mH}{\mathcal H}

 \newcommand{\mL}{\mathcal L}
 
 \newcommand{\mS}{\mathcal S}
 
     \newcommand{\rh}{{\rm h}}

\newcommand{\mD}{\mathfrak D}

\newcommand{\fa}{{\mathfrak a}}

\newcommand{\fS}{{\mathfrak S}}
\newcommand{\rS}{{\rm S}}

\newcommand{\cout}[1]{}

\newcommand{\sgn}[1]{\,{\rm sign}(#1)}

\newcommand{\sign}{{\rm sign}}
\newcommand{\ow}{{\rm Op}^w}

\newcommand{\R}{{\rm R}}

\newcommand{\Tr}{{\rm Tr}}
\newcommand{\tdeg}{\widetilde{\deg\, }}
\newcommand{\ind}{{\rm Index}\,}


\newcommand{\horm}{{H\"ormander\ }}
\newcommand{\schr}{{Schr\"odinger\ }}
\newcommand{\tio}{\"o}

\title{Topological charge conservation for continuous insulators}

\author{Guillaume Bal \thanks{Departments of Statistics and Mathematics and CCAM, University of Chicago, Chicago, IL 60637; guillaumebal@uchicago.edu}}

\begin{document}

\maketitle

\begin{abstract}

This paper proposes a classification of elliptic (pseudo-)differential Hamiltonians describing topological insulators and superconductors in Euclidean space by means of domain walls. Augmenting a given Hamiltonian by one or several domain walls results in confinement that naturally yields a Fredholm operator, whose index is taken as the topological charge of the system. A Fedosov-\horm formula implementing in Euclidean spaces an Atiyah-Singer index theorem allows for an explicit computation of the index in terms of the symbol of the Fredholm operator. For Hamiltonians admitting an appropriate decomposition in a Clifford algebra, the index is given by the easily computable degree of a naturally associated map.  

A practically important property of topological insulators is the asymmetric transport observed along one-dimensional lines generated by the domain walls. This asymmetry is captured by a line conductivity, a physical observable of the system. We prove that the line conductivity is quantized and given by the index of a second Fredholm operator of Toeplitz type. We also prove a topological charge conservation stating that the two aforementioned indices agree. This result generalizes to higher dimensions and higher-order topological insulators the bulk-edge correspondence of two-dimensional materials. 

We apply this procedure to evaluate the topological charge of several classical examples of (standard and higher-order) topological insulators and superconductors in one, two, and three spatial dimensions.

\end{abstract}

\noindent
{\bf Keywords.} Topological charge conservation; topological insulator; higher-order topological insulator; topological superconductor; Fredholm index;  line conductivity; bulk-edge correspondence; pseudo-differential functional calculus. 


%
\section{Introduction}

This paper considers topological insulators and topological superconductors modeled in the continuum limit by single particle Hamiltonians in the form of (pseudo-)differential operators. See, e.g.,  \cite{bernevig2013topological,liu2010model,RevModPhys.83.1057,sato2017topological,volovik2009universe,witten2016three} for background and details on these materials and topological phases of matter. Topological insulators are characterized by insulating regions separated by interfaces where transport may occur. One of their most important properties practically is that such an interface transport is asymmetric with asymmetry quantized and stable against perturbations. We consider here interfaces modeled by a domain wall, i.e., a scalar function $m(x)$ with a (smooth) interface  (asymptotically a hyperplane) described by the $0-$level set $m^{-1}(0)$ separating two bulk insulating half-spaces where $m>0$ and $m<0$.

A general principle called a bulk-edge correspondence, relates the transport asymmetry along the interface to the bulk properties of the insulator in the regions $\pm m(x)>0$. The bulk-edge correspondence has been derived mathematically in a number of settings for discrete \cite{elbau2002equality,fukui2012bulk,Graf2013,prodan2016bulk,schulz2000simultaneous} as well as continuous Hamiltonians \cite{bourne2017k,bourne2018chern,drouot2021microlocal,
ludewig2020cobordism}; see also \cite{essin2011bulk,volovik2009universe} for a bulk-boundary correspondence and a notion of topological charge conservation similar to the one we will describe in this paper. 
For two-dimensional insulators modeled by differential equations, a bulk-interface correspondence was established in \cite{bal2022topological,QB-NUMTI-2021} by relating a physical observable, an interface conductivity characterizing the interface asymmetry, to the  index of a Fredholm operator and interpreting the integral characterizing the topological index (a Fedosov-\horm formula \eqref{eq:FH} below) as a bulk-difference invariant. These papers consider bounded domain walls in two space dimensions. 

The objective of this paper is to generalize such a correspondence to arbitrary dimensions for a class of elliptic pseudo-differential Hamiltonians with unbounded domain walls. The topology of an insulator is characterized by the index of two different Fredholm operators. The first Fredholm operator is constructed by augmenting the Hamiltonian by confining domain walls. Its index defines the topological charge of the insulator. The second Fredholm operator in Toeplitz form captures the quantized asymmetric transport along {\em one-dimensional} lines. A topological charge conservation result generalizing the bulk-interface correspondence then shows that the two indices agree. The main practical appeal of such a correspondence is that the computation of the first index is typically significantly simpler than that of the second.

\medskip

The construction of the first Fredholm operator is presented in section \ref{sec:local}. Denote by $d$ the spatial dimension and consider operators on the Euclidean space $\Rm^d$. For a given Hamiltonian $H_k$ for $0\leq k<d$, which should be interpreted as confining in the first $k$ spatial  dimensions, we construct an operator $H_{d-1}$ implementing domain walls in the next $d-k-1$ dimensions and finally a Fredholm operator $F=H_{d-1} -i m(x_d)$ where $m(x_d)$ implements a domain wall in the remaining variable $x_d$. The topological charge of $H_k$ is defined as $\ind F={\rm dim\ Ker}\,F-{\rm dim\ Ker}\,F^*$.    The Fredholm operator is {\em local} in the sense that $F=\gamma_1\otimes H_k+\gamma_2$ for $\gamma_1$ a constant matrix and $\gamma_2$ a multiplication operator in the physical variables.  The construction is carried out for Hamiltonians with no symmetry assumption (class A) when $d+k$ is odd and Hamiltonians with a necessary chiral symmetry assumption (class AIII) when $d+k$ is even \cite{bernevig2013topological,prodan2016bulk}. The complex classes A and AIII are the only ones considered in this paper. The operators $H_k$, $H_{d-1}$, and $F$ we consider here are all pseudo-differential operators written in a Weyl quantization. The relevant notation for this paper on pseudo-differential, functional, and semiclassical calculus is recalled in Appendix \ref{sec:notation}. The main guiding choice for the functional setting is that the index of the resulting Fredholm operator $F$ may be computed by means of a Fedosov-\horm formula in Theorem \ref{thm:FH} below. This implementation of an Atiyah-Singer result in Euclidean geometry provides an explicit formula in terms of the symbol of $F$ and hence of that of the original Hamiltonian $H_k$.  The main constraints we impose on the symbol of $F$ are in a nutshell that it be {\em elliptic} and grow to infinity at infinity in phase space in a sufficiently isotropic fashion.

\medskip

While systematic and straightforward, the above classification is unrelated to any physical observable. Such an observable may be assigned to the above intermediate (self-adjoint) operator $H_{d-1}$ as follows. The  operator $H_{d-1}$ confines in $d-1$ directions while allowing transport in the remaining dimension parametrized by $x_d$. The following line conductivity quantifies asymmetric transport in that one dimension.  Let $H:=H_{d-1}$ and $\varphi\in\fS[0,1]$ a smooth non-decreasing switch function and $P=P(x_d) \in \fS[0,1]$ a smooth spatial switch function \footnote{A function $f:\Rm\to\Rm$ is called a switch function $f\in\fS[0,1]$ if $f$ is bounded measurable and there are $x_L$ and $x_R$ in $\Rm$ such that $f(x)=0$ for $x<x_L$ and $f(x)=1$ for $x>x_R$.}. We define $\varphi'(H)$ by functional calculus and then
\begin{equation}\label{eq:sigmaI}
  \sigma_I(H) := \Tr\, i[H,P] \varphi'(H).
\end{equation}
The above conductivity has been used to analyze the bulk-edge correspondence of two dimensional materials in a number of contexts; see, e.g., \cite{bal2022topological,drouot2021microlocal,elbau2002equality,graf2007aspects,Graf2013,prodan2016bulk,QB-NUMTI-2021}. It has the following interpretation. Let $\psi(t)=e^{-i t H} \psi$ be a solution of the \schr equation $i\partial_t\psi(t)=H\psi(t)$ with initial condition $\psi$ and let $P$ be a Heaviside function defined as $P(x_d)=1$ for $x_d>x_0$ while $P(x_d)=0$ for $x_d<x_0$ for some $x_0\in\Rm$. Then $\aver{P}_t=(\psi(t),P\psi(t))_{L^2}$ is interpreted as the amount of signal on the left of $x_0$. Its derivative $\frac{d}{dt} \aver{P}_t = (\psi(t), i[H,P]\psi(t)) = \Tr\, i[H,P]\psi(t)\psi(t)^*$ describes current crossing the interface $x_d=x_0$. Formally replacing the density $\psi(t)\psi(t)^*$ by the spectral density $\varphi'(H)$ heuristically gives the interpretation of $\sigma_I$ as the rate of signal propagating from the left to the right of the hyperplane $x_d=x_0$ per unit time. 

We show in section \ref{sec:sigmaI} that $i[H,P] \varphi'(H)$ is a trace-class operator for the class of operators $H_k$ we consider. Moreover, $\sigma_I$ is related to a Fredholm operator of Toeplitz type as follows. Let $\tilde P\in \fS[0,1]$ be a projector $\tilde P^2=\tilde P$, for instance a Heaviside function. Then $T:=\tilde P U(H)\tilde P_{|{\rm Ran}\tilde P}$ for $U(H)=e^{2\pi i \varphi(H)}$ is a (bounded) Fredholm operator  and $2\pi\sigma_I=\ind T$ showing that $2\pi\sigma_I$ is indeed quantized. 

\medskip

The main result of this paper is the topological charge conservation in Theorem \ref{thm:tcc} of section  \ref{sec:tcc} stating that $2\pi \sigma_I$ is given by the indices of both $F$ and $T$. This topological charge conservation result (see \cite{volovik2009universe}) shows that transport along the line is indeed asymmetric with quantized asymmetry that is stable against any perturbations of the Hamiltonian that preserve the above indices and that this asymmetry may be computed using a (relatively simple) Fedosov-\horm formula. Perturbations may be arbitrary large in a suitable sense and the non-trivial topology may be interpreted as an obstruction to Anderson localization \cite{B-EdgeStates-2018,prodan2016bulk}. 

In two space dimensions, the above topological charge conservation relating the transport asymmetry to the index of $F$ may be interpreted as a bulk-interface correspondence by identifying the explicit integral in \eqref{eq:FH} as a bulk-difference invariant directly related to the bulk properties of $H_{d-1}$ on either side of the interface \cite{bal2022topological,QB-NUMTI-2021}. For three-dimensional materials, the line conductivity or {\em hinge} conductivity \cite{schindler2018higher} provides a simple classification of some higher-order topological insulators.

\medskip

While explicit, the computation of the Fedosov-\horm integral in \eqref{eq:FH} below is not always straightforward analytically. However, it significantly simplifies when the symbol $a_k$ of the operator $H_k$ admits a decomposition of the form $ a_k(X) = \rh^k(X) \cdot \Gamma_k$ where  $\rh^k(X)$ is a vector field on phase space $X\in\Rm^{2d}$ of dimension $d+k$ and $\Gamma_k$ is a vector of Clifford matrices satisfying appropriate anti-commutation properties; see \eqref{eq:clifcomm} in section \ref{sec:deg}.  We show in that section that we may define an appropriate degree of the map $\rh^k(X)$. Theorem \ref{thm:tccp} then states that the topological charge of $H_k$ is, up to a sign depending on choices of orientation for the Clifford matrices and the phase space variables $X$, equal to the degree of $\rh^k$ defined for any regular value $y_0$ of $\rh^k$ in the vicinity of $0$ (see section \ref{sec:deg}) by
\begin{equation}\label{eq:degexplicit}
   \deg \rh^k  = \dsum_{\zeta \in (\rh^k)^{-1}(y_0)} \sign \det J(\zeta)
\end{equation}
with $J$ the non-degenerate Jacobian matrix of the map $\rh^k$.
%
This provides a particularly simple means to compute the topological charge of several of the operators that appear in the analysis of topological insulators and topological superconductors.

\medskip

In section\ref{sec:appli} , we apply the above results to the computation of the topological charge of several operators that appear in the modeling of topological insulators and superconductors, higher-order topological insulators, and topological models of fluid waves. 

\medskip

The above classification based on domain walls shares similarities with other classification mechanisms. Topological phases of matter are characterized by non-trivial topologies of Hamiltonians in dual, Fourier, variables \cite{bernevig2013topological,prodan2016bulk,volovik2009universe}. This non-trivial topology may be tested in several ways. Linear response theory in two dimensions tests a domain by applying a linear electric field in one direction and assessing the resulting transport in the transverse direction leading to the notion of Hall conductivity. While physically different, adding a domain wall to $H_0$ in two dimensions and testing asymmetric transport by $\sigma_I(H_1)$ in the transverse direction plays a similar classifying role. Toeplitz operators of the form $T=PUP$ with $P$ a projector and $U$ unitary also appear naturally in the classification of topological insulators by means of Fredholm modules \cite{prodan2016bulk}; see also \cite{B-BulkInterface-2018} for an application to (regularized) Dirac operators in Euclidean space.  The main advantage in the classification based on $\ind F$ is the explicit Fedosov-\horm formula it satisfies. The latter is also used in a different context of operators in Euclidean space by \cite{bott1978some,callias1978axial} to test the topology of a physical potential with appropriate behavior at infinity in the physical variables using a Dirac operator. 

As in \cite{bal2022topological,QB-NUMTI-2021}, we do not classify bulk phases but rather transitions from one phase to the other. Bulk phases are not defined for many unperturbed operators \cite{B-BulkInterface-2018,bal2022topological} showing that it is easier to describe phase transitions rather than absolute phases.

\section{Fredholm operator and topological charge}
\label{sec:local}
%


\paragraph{\bf Classification of Dirac operators.}
To illustrate how the topology of Hamiltonians is tested by domain walls, we present the constructions for Dirac operators, which are generic models for band crossings \cite{bernevig2013topological,drouot2021microlocal,FLW-ES-2015,sato2017topological} and arguably the simplest models for topological phases of matter.

Consider first a one-dimensional material and asymmetric transport modeled by the Hamiltonian $H_0=D_x$ with $D_x=-i\partial_x$, which may be seen as a self-adjoint operator on $L^2(\Rm)$ with domain $\mD(H_0)=H^1(\Rm)$.  We introduce the operator $F=D_x-ix=-i\fa$ with $\fa=\partial_x+x$ an annihilation operator. The term $-ix$ is interpreted as a domain wall confining the particle to the vicinity of $x=0$. The operator $F$ is now a Fredholm operator from its domain of definition $\mD(F)=\{f\in L^2(\Rm); f' \in L^2(\Rm) \mbox{ and } xf \in L^2(\Rm)\}$ to $L^2(\Rm)$. Moreover, we verify that $\ind F=1$ with kernel associated to the eigenfunction $e^{-\frac12 x^2}$. 
The line conductivity $\sigma_I(H_0)$ in \eqref{eq:sigmaI} describes the asymmetric transport associated to $H_0$. 

Consider next the two-dimensional version of the above example, where $H_0=D_1\sigma_1+D_2\sigma_2$ with $D_j=-i\partial_j$ for $j=1,2$ and $\sigma_{1,2,3}$ are the standard Pauli matrices. The operator $H_0$ appears as a generic low-energy description of energy band crossings and is ubiquitous in works on topological insulators. We classify $H_0$ by augmenting it with a domain wall along one direction and assessing the resulting asymmetric transport in the transverse direction. We implement a domain wall along the first variable by introducing $H_1=H_0 + x_1 \sigma_3$. This models insulating regions when $\pm x_1>0$. The line conductivity $\sigma_I(H_1)$ in \eqref{eq:sigmaI} describes the asymmetric transport of $H_1$. Associated to $H_0$ and $H_1$ is the operator $F=H_1-ix_2$. This is again a Fredholm operator from its domain of definition to $L^2(\Rm^2)\otimes \Cm^2$ and we verify that $\ind F=1$, which defines the topological charge associated to $H_0$ (and $H_1$). The kernel of $F$ has for eigenfunction the spinor $e^{-\frac12 |x|^2} (1,i)^t$. 

This construction generalizes to higher dimensions in a straightforward way except for the fact that the construction of the domain walls requires additional degrees of freedom as dimension increases. Consider in $\Rm^3$ the Weyl Hamiltonian $H_0=D_1\sigma_1+D_2\sigma_2+D_3\sigma_3$. As an operator acting on spinors in $\Cm^2$, the latter operator is stable against gap opening by domain walls \cite{bernevig2013topological}. We therefore introduce the operator $H_1=\sigma_1\otimes H_0+ \sigma_2\otimes I_2 x_1$ with a domain wall in the first direction but now acting on spinors in $\Cm^4$. The operator $H_1$ thus admits surface states concentrated in the vicinity of $x_1=0$, as does the operator $H_0$ in the two-dimensional setting. Its topology is then characterized by asymmetric transport in the third dimension after a second domain wall in the $x_2$ direction is introduced: $H_2=H_1+\sigma_3\otimes I_2 x_2$. Asymmetric transport along the line $x_1=x_2=0$ is again described by a line conductivity $\sigma_I(H_2)$. The topological charge of $H_0$ (and that of $H_k$ for $k=1,2$) is then defined as the index of the Fredholm operator (from its domain of definition to $L^2(\Rm^3)\otimes \Cm^4$) $F=H_2-ix_3$. We verify (and will show in greater generality) that the topological charge of $H_0$ is $\ind F=-1$, with a sign change here reflecting the fact that indices depend on the orientation of the Clifford matrices used to construct the operators $H_j$ as well as the orientation of the domain walls. The kernel of $F^*$ has for eigenfunction the spinor $e^{-\frac12 |x|^2}(1,-1,-1,-1)^t$.

\paragraph{Pseudo-differential elliptic operators.} 

Consider a spatial dimension $d\geq1$ and operators defined on functions of the Euclidean space $\Rm^d$.  We denote by $\xi\in\Rm^d$ the dual (Fourier) variable and $X=(x,\xi)\in\Rm^{2d}$ the phase space variable.  We generalize the above construction to an appropriate algebras of pseudo-differential  operators (PDO) with Weyl quantization defined as
\begin{equation}\label{eq:weyl}
  H \psi (x) = \ow a\  \psi (x) := \dfrac{1}{(2\pi)^d} \dint_{\Rm^{2d}} e^{i\xi\cdot (x-y)} a(\frac{x+y}2,\xi)  \psi(y)dy d\xi,
\end{equation}
for $a(X)$ a matrix-valued symbol in $\Mm(n)$ and $\psi(x)$ a spinor with values in $\Cm^n$. 

We will start with an operator $H_k=\ow a_k$ with $a_k$ a matrix-valued symbol in $\Mm(n_k)$ interpreted as confining in the first $k$ variables and aim to construct the operators  $H:=H_{d-1}=\ow a_{d-1}$ and $F=\ow a$. To do so, we introduce the following notation and functional setting.
 
We decompose the spatial variables $x=(x'_k,x''_k)$ with $x'_k\in\Rm^k$ and $x''_k\in\Rm^{d-k}$. We use the notation $\aver{y}=\sqrt{1+|y|^2}$ and $\aver{y_1,y_2}=\sqrt{1+|y_1|^2+|y_2|^2}$ and define the weights
\begin{equation}\label{eq:weights1}
w_k(X)=\aver{x_k',\xi} . 
\end{equation}
For a given spinor dimension $n=n_k$ with $0\leq k\leq d$, and an order $m>0$, we denote by $S^m_k=S^m_k[n_k]$ the class of symbols $a_k$ such that for each $d-$dimensional multi-indices $\alpha$ and $\beta$, there is a constant $C_{\alpha,\beta}$ such that for each component $b$ of $a_k\in \Mm(n_k)$, we have
\begin{equation}\label{eq:Sk}
   \aver{x}^{|\alpha|} \aver{\xi}^{|\beta|} |\partial^\alpha_x\partial^\beta_\xi b(X)| \leq C_{\alpha,\beta} w_k^m(X),\qquad \forall X\in\Rm^{2d}.
\end{equation}
We also define the space of symbols $\tilde S^m$ as $S^m_d$ above but acting on spinors of (lower) dimension $n_{d-1}$ instead of $n_d$.  Here $m$ is the order of the operator. 

For the two-dimensional  Dirac operator, we find $m=1$, $H_0= \ow a_0$ for $a_0=\xi_1\sigma_1+\xi_2\sigma_2$ while $H_1=\ow a_1$ for $a_1=a_0+x_1\sigma_3$ and $F=\ow a$ for $a=a_1-ix_2$. For $n_0=n_1=2$, we observe that $a_j\in S^1_j$ for $j=0,1$, while $a\in \tilde S^1$.

Associated to the spaces of symbols $S^m_k$ and $\tilde S^m$ are Hilbert spaces $\mH^m_k$ and $\tilde \mH^m$;  see \eqref{eq:Hspaces} in Appendix \ref{sec:notation}. These spaces are constructed so that for $a_k\in S^m_k$, we have that $\ow a_k$ defined in \eqref{eq:weyl} maps $\mH^m_k$ to $\mH^0_k=L^2(\Rm^d)\otimes \Mm(n_k)$ continuously.

The main assumption we impose on $H_k$, beyond the Hermitian symmetry $a_k=a_k^*$ and a chiral symmetry (see \eqref{eq:chsym} below)  when $d+k$ is even, is that it be {\em elliptic} as an operator from  $\mH^m_k$ to $\mH^0_k$. This is the case when there are constants $C_{1,2}>0$ such that 
\begin{equation}\label{eq:ellip}
  | \det a_k(X) | ^{\frac 1{n_k}} \geq C_1w_k^m(X) - C_2,\quad \forall X\in\Rm^{2d}.
\end{equation}
In other words,  all eigenvalues of the Hermitian matrix $a_k(X)$ are bounded away from $0$ by at least $Cw_k^m(X)$ for $X$ outside of a compact set in $\Rm^{2d}$. Since $a_k\in S^m_k$, all (positive and negative) eigenvalues of $a_k$ are of order $w_k^m(X)$ away from a compact set.  

We denote by $ES^m_k$ the {\em elliptic symbols} in $S^m_k$ and $E\tilde S^m$ the elliptic symbols in $\tilde S^m$. For $a_k\in ES^m_k$, we obtain that $H_k$ is an unbounded self-adjoint operator with domain of definition $\mH^m_k$ while for $a\in E\tilde S^m$, we obtain that $F=\ow a$ is an unbounded operator with domain of definition $\tilde \mH^m$; see Appendix \ref{sec:notation}.

\medskip

\paragraph{Classification by domain walls.}
We start from an elliptic (self-adjoint) operator $H_k=\ow a_k$ for $a_k\in ES^m_k$ for $m>0$. By ellipticity assumption, $H_k$ is an unbounded self-adjoint operator with domain $\mD(H_k)=\mH_k^m$ and acts of spinors in $\Cm^{n_k}$. The ellipticity of $H_k$ and the construction of the weight $w_k(X)=\aver{x_k',\xi}$ imply that the first $k$ variables parametrized by $x_k'$ are confined in the sense that $|a_k|$ is large for $|x_k'|$ large.

To be non-trivial topologically, the operator $H_k$ needs to satisfy a chiral symmetry when $d+k$ is even (complex class AIII \cite{prodan2016bulk}). When $d+k$ is odd, then $H_k$ is in the complex class A with no symmetry imposed beyond the Hermitian structure.

Assume first that $d+k$ is even with $k\leq d-2$. Recall that $\sigma_{1,2,3}$ are the Pauli matrices, or more generally any set of Hermitian $2\times2$ matrices such that $\sigma_i\sigma_j+\sigma_j\sigma_i=2\delta_{ij}$ and $\sigma_1\sigma_2=i\sigma_3$. Using the notation $\sigma_\pm=\frac12(\sigma_1\pm i\sigma_2)$, the chiral symmetry takes in a suitable basis the following form:
\begin{equation}\label{eq:chsym}
  H_k =  \begin{pmatrix} 0 & F_k \\ F_k^* & 0 \end{pmatrix} = \sigma_- \otimes F_k^* + \sigma_+\otimes F_k.
\end{equation}
 We next introduce the domain wall 
\begin{equation}\label{eq:dwall}
m_{k}(X):=\aver{x_{k}}^{m-1} x_{k}.
\end{equation}
They are constructed to have the same asymptotic homogeneity of order $m$ as the Hamiltonian $H_k$. 
We then define the new spinor dimension $n_{k+1}=n_k$  and the augmented Hamiltonian
\begin{equation}\label{eq:augeven}
  H_{k+1} := H_k + m_{k+1} \sigma_3 \otimes I.
\end{equation}
This implements a domain wall in the variable $x_{k+1}$.

Assume now $d+k$ odd.  Then we define the new spinor dimension $n_{k+1}=2n_k$ and the augmented Hamiltonian
\begin{equation}\label{eq:augodd}
 H_{k+1} := \sigma_1\otimes H_k+m_{k+1} \sigma_2 \otimes I  = \sigma_-\otimes F^*_{k+1} + \sigma_+\otimes F_{k+1},\quad
  F_{k+1} = H_{k} -i m_{k+1}.
\end{equation}
The operator $H_{k+1}$ satisfies a chiral symmetry of the form \eqref{eq:chsym}, as requested since $d+k+1$ is now even.

We denote by $a_{k+1}$ the symbol of $H_{k+1}=\ow a_{k+1}$ and observe that $a_{k+1}=a_k+m_{k+1}\sigma_3\otimes I$ when $d+k$ is even and $a_{k+1}= \sigma_1\otimes a_k+m_{k+1}\sigma_2\otimes I$ when $d+k$ is odd. 

The procedure is iterated until $H_d$ has been constructed. Note that $a_{k+2}(X)\in \Mm(2n_k)$ with dimension of the spinor space on which the matrices act that doubles every time $k$ is raised to $k+2$. Since $2d$ is even, $H_{d}=\sigma_-\otimes F^* + \sigma_+ \otimes F$ for an operator $F=F_{d}=H_{d-1}-im_d =: \ow a$, or equivalently $a=a_{d-1}-im_d$.

For $0 < l \leq d-k$, the intermediate Hamiltonians all have the form
\[
  H_{k+l} =  \gamma_0 \otimes H_k + \mu\cdot\gamma\otimes I_{n_k}
\]
where for some integer $p=p(l,k)$ and for some matrices $\gamma_j$ such that $\{\gamma_i,\gamma_j\}:=\gamma_i\gamma_j+\gamma_j\gamma_i=0$ for all $i\not=j$ in $\{0,\ldots,l\}$, we have
\[
 \mu=(m_{k+1},\ldots, m_{k+l}), \quad \gamma_0=\sigma_1^{\otimes p} ,\quad \gamma=(\gamma_{1},\ldots,\gamma_{l} ).
\]

We now show that all operators $H_l$ are elliptic and that $H_d$ and $F$ are Fredholm operators.
\begin{lemma}\label{lem:ellj}
Let $a_k\in ES_k^m$
and $H_k=\ow a_k$ satisfying the chiral symmetry \eqref{eq:chsym} when $d+k$ is even. Then $a_j\in ES_j^m$ for all $k\leq j\leq d$ and $a\in E\tilde S^m$.
\end{lemma}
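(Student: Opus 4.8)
\textbf{Proof proposal for Lemma \ref{lem:ellj}.}

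The plan is to argue by induction on $j$, stepping from $H_j$ to $H_{j+1}$ using the explicit formulas \eqref{eq:augeven} and \eqref{eq:augodd}, and treating the two parity cases ($d+j$ even versus odd) in parallel since the algebraic structure is the same: in both cases the new symbol is built from $a_j$ and a domain-wall term $m_{j+1}\gamma$ for an anticommuting constant matrix $\gamma$ (either $\sigma_3\otimes I$ or, after tensoring $a_j$ with $\sigma_1$, the matrix $\sigma_2\otimes I$), so that $a_{j+1}^2$ (as a matrix, using anticommutation) decouples into $a_j^2 \otimes I + m_{j+1}^2 I$ on the appropriate spinor space. First I would check membership in the symbol class: the entries of $a_j$ satisfy \eqref{eq:Sk} with weight $w_j^m$, and $m_{j+1}(X) = \aver{x_{j+1}}^{m-1}x_{j+1}$ is a scalar symbol lying in $S(w_{j+1}^m, g^s)$ — indeed $|m_{j+1}| \lesssim \aver{x_{j+1}}^m \leq \aver{x'_{j+1},\xi}^m = w_{j+1}^m$, and each $x$-derivative gains a factor $\aver{x_{j+1}}^{-1}$, consistent with \eqref{eq:Sk}. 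Since $w_j \leq w_{j+1}$ pointwise (one more coordinate inside the bracket), $a_j \in S_j^m \subset S_{j+1}^m$ up to the tensor padding, and hence $a_{j+1} \in S_{j+1}^m$; tensoring with fixed Pauli matrices does not affect symbol estimates.

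The heart of the argument is ellipticity, i.e.\ verifying \eqref{eq:ellip} for $a_{j+1}$ with weight $w_{j+1}^m$. Using the anticommutation relation between the $H_j$-block and the new $\gamma$-term, one computes $a_{j+1}(X)^2 = \big(a_j(X)^2 + m_{j+1}(X)^2\big)\otimes I$ (with $a_j^2$ acting on the original spinor factor). Since $a_j$ is Hermitian, $a_j^2 \geq 0$, so every eigenvalue of $a_{j+1}(X)^2$ is of the form $\lambda_i(X)^2 + m_{j+1}(X)^2$ where $\lambda_i$ are the eigenvalues of $a_j$; hence every eigenvalue of $|a_{j+1}(X)|$ equals $\sqrt{\lambda_i(X)^2 + m_{j+1}(X)^2} \geq \tfrac{1}{\sqrt2}(|\lambda_i(X)| + |m_{j+1}(X)|)$. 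By the inductive ellipticity of $a_j$, $|\lambda_i(X)| \geq C_1 w_j^m(X) - C_2$ outside a compact set, while $|m_{j+1}(X)| \geq \tfrac12 \aver{x_{j+1}}^m$ once $|x_{j+1}|$ is bounded below. The elementary inequality to push through is that $w_{j+1}^m = \aver{x'_j, x_{j+1}, \xi}^m \lesssim w_j^m + \aver{x_{j+1}}^m$ (since $\aver{u,v}^m \lesssim \aver{u}^m + \aver{v}^m$ for $m\geq 1$), so $|\lambda_i(X)| + |m_{j+1}(X)| \gtrsim w_j^m(X) + \aver{x_{j+1}}^m(X) \gtrsim w_{j+1}^m(X)$ outside a compact set. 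Taking the $n_{j+1}$-th root of the determinant (the geometric mean of the eigenvalue magnitudes, each of which is $\gtrsim w_{j+1}^m$) gives \eqref{eq:ellip} for $a_{j+1}$, so $a_{j+1}\in ES_{j+1}^m$. The base case $j=k$ is the hypothesis, and iterating to $j=d$ gives $a_{d-1}\in ES_{d-1}^m$; finally $a = a_{d-1} - i m_d$ and the same squaring computation — now $a a^* = a_{d-1}^2 + m_d^2$ since $a_{d-1}$ is Hermitian — shows $|aa^*|^{1/n} \gtrsim w_d^{2m}$, i.e.\ $a \in E\tilde S^m$ (recalling $\tilde S^m$ is built from the weight $w_d^m$).

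The main obstacle I anticipate is not the matrix algebra, which is clean, but bookkeeping the weight comparison $w_{j+1}^m \lesssim w_j^m + \aver{x_{j+1}}^m$ together with the fact that the domain-wall lower bound $|m_{j+1}| \geq \tfrac12\aver{x_{j+1}}^m$ only holds where $|x_{j+1}|$ is large, so one must handle the complementary region $\{|x_{j+1}| \leq R\}$ separately — there $w_{j+1} \sim w_j$ up to a constant and the inductive bound on $a_j$ alone suffices. One should also double-check that the chiral symmetry \eqref{eq:chsym} assumed at even stages $d+j$ is genuinely produced at the next odd-to-even step by the block form \eqref{eq:augodd}, so that the induction hypothesis is self-consistent; but this is immediate from $H_{j+1} = \sigma_-\otimes F_{j+1}^* + \sigma_+ \otimes F_{j+1}$. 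A minor point is that the $S(M,g^s)$ estimates on $m_{j+1}$ in the $\xi$-variables are trivial (no $\xi$-dependence) so only the anisotropic $x$-weights need care, and those are exactly matched by the $\aver{x}^{|\alpha|}$ factors in \eqref{eq:Sk}.
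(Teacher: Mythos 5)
Your proposal is correct and follows essentially the same route as the paper: both exploit the anticommutation built into the augmentation (chiral symmetry at even steps, the Clifford-type tensor structure at odd steps) to obtain $a_{j+1}^2=(a_j^2+m_{j+1}^2)\otimes I$ — the paper does this in one shot as $a_{k+l}^2=I\otimes a_k^2+\sum_j m_{k+j}^2\otimes I$ — and then deduce ellipticity for the weight $w_{j+1}$ and pass the symbol estimates to the augmented symbols, your step-by-step induction with the explicit eigenvalue bound and the comparison $w_{j+1}^m\lesssim w_j^m+\aver{x_{j+1}}^m$ merely spelling out details the paper leaves implicit. Your treatment of $a=a_{d-1}-im_d$ via $aa^*=a_{d-1}^2+m_d^2$ likewise matches the paper's appeal to the corresponding result for the Hermitian symbol $a_d$.
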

\begin{proof}
  Let $a_{k+l}$ be the symbol of $H_{k+l}=\ow a_{k+l}$ for $0<l\leq d-k$. By construction and commutativity $\{\gamma_i,\gamma_j\}=0$ as recalled above, we obtain that $a_{k+l}^2=I\otimes a_k^2 + \sum_{j=1}^l m_{k+l}^2 \otimes I$ with $I$ identity matrices with appropriate dimensions. This shows that $a_{k+l}$ satisfies the ellipticity condition \eqref{eq:ellip} for the weight $w_{k+l}(X)$. The decay properties for derivatives of $a_{k+l}$ in \eqref{eq:Sk} with $k$ replaced by $k+l$ follow from the corresponding properties for $a_{k}$. That $a\in E\tilde S^m$ comes from the corresponding result for $a_d$ and the construction of $F$.
\end{proof}

Let $\Lambda= \sqrt{-\Delta +|x|^2+1} = \sqrt{\ow \aver{x,\xi}^2}$ be an elliptic self-adjoint operator, which by construction, maps $\mH(w_{d-1}^m)$ to $\mH(1)=L^2(\Rm^d)$ \cite{nicola2011global}.
\begin{lemma}\label{lem:fredholm}
The above operators $H_d$ and $F$ are Fredholm operators from $\mH_d^m$ to $\mH_d^0$ and  $\tilde\mH^m$ to $\tilde\mH^0$, respectively. Equivalently $\Lambda^{-m}H_d$ and $\Lambda^{-m}F$ are Fredholm operators on $\mH_d^0$ and $\tilde\mH^0$, respectively.
\end{lemma}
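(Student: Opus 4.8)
The plan is to derive the Fredholm property from the ellipticity established in Lemma \ref{lem:ellj} together with the mapping and Wiener properties recalled in Section \ref{sec:notation}. First I would argue that it suffices to treat $F = \ow a$ with $a \in E\tilde S^m$, since the claim for $H_d = \ow a_d$ with $a_d \in ES^m_d$ is of exactly the same nature (and in fact $H_d$ is, up to conjugation by a fixed unitary, block-diagonal with blocks $F$ and $F^*$, so its Fredholmness follows from that of $F$). The equivalence of the two formulations — $F$ Fredholm from $\tilde\mH^m$ to $\tilde\mH^0$ versus $\Lambda^{-m}F$ Fredholm on $\tilde\mH^0$ — is immediate once one knows $\Lambda^m$ is an isomorphism from $\mH(w_{d-1}^m)$ onto $L^2$, because then $\Lambda^m \colon \tilde\mH^m \to \tilde\mH^0$ is an isomorphism and composing a Fredholm operator with an isomorphism on either side preserves the index; so I would state this equivalence once and then work with whichever form is convenient.

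The core step is to construct a parametrix. Since $a \in E\tilde S^m$ is elliptic, $|\det a(X)|^{1/n} \ge C_1 w_d^m(X) - C_2$, so $a(X)$ is invertible for $|X| \ge R$ with $a^{-1}(X)$ bounded by $C w_d^{-m}(X)$ there; after modifying $a$ inside the compact set $\{|X|\le R\}$ to make it globally invertible (this changes $F$ by a smoothing, hence compact, operator and does not affect the Fredholm property), one gets a symbol $b \in \tilde S^{-m} = S(w_d^{-m},g^s)\otimes\Mm(n_{d-1})$ with $ab = ba = I$. Here I would invoke the Wiener property quoted in Section \ref{sec:notation}: $F$ is a bijection from $\tilde\mH^m$ to $\tilde\mH^0$ modulo the compact correction, or more directly, $\ow b \in \ow \tilde S^{-m}$ maps $\tilde\mH^0$ to $\tilde\mH^m$ and, by the symbolic (Moyal) calculus, $\ow b\, F = I + \ow r_1$ and $F\, \ow b = I + \ow r_2$ with $r_1, r_2 \in S^{-1}_d$ (one order of decay gained from the commutator term in the composition formula). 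Since $w_d^{-1} \to 0$ at infinity and $S^{-1}_d \hookrightarrow S^{-1}_d$ embeds compactly into the bounded operators on the relevant Hilbert spaces — this is the standard Rellich-type compactness for the Bony–Chemin spaces, valid because $M_k h^N \to 0$ at infinity as noted after \eqref{eq:weights} — the remainders $\ow r_1, \ow r_2$ are compact. Hence $F$ has a two-sided parametrix modulo compacts and is Fredholm.

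The main obstacle, and the place where one must be careful rather than just quote, is the compactness of the remainder operators $\ow r_j$: one needs that an operator with symbol in $S(w_d^{-1}, g^s)$ is compact as a map $\tilde\mH^0 \to \tilde\mH^0$ (equivalently $L^2 \to L^2$ after tensoring with the matrix factor). This is where the hypothesis $M_k \le C\lambda^p$ with $M_k h^N \to 0$ enters decisively, as already flagged in Section \ref{sec:notation}: it guarantees the embedding $\mH(M_1) \hookrightarrow \mH(M_2)$ is compact whenever $M_2/M_1 \to 0$ at infinity, which is precisely the Rellich lemma in this calculus (see \cite{bony2013characterization}). I would therefore structure the proof as: (1) reduce to $F$ and reduce to a globally invertible symbol modulo compacts; (2) build the parametrix $\ow b$ using Wiener/symbolic calculus, with remainders in $S^{-1}_d$; (3) invoke the Rellich compactness to conclude the remainders are compact; (4) note $\Lambda^{\pm m}$ intertwines the two formulations. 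Steps (2) and (3) are the substantive ones; everything else is bookkeeping with the index.
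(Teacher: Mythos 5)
Your argument is correct in substance, but it is worth knowing that the paper's own proof is a one-line citation: Lemma \ref{lem:fredholm} is obtained by invoking \cite[Theorem 18.6.6]{H-III-SP-94} (see also \cite{nicola2011global}), the only point checked being that the Planck function $h^s(X)=\aver{x}^{-1}\aver{\xi}^{-1}$ tends to $0$ as $|X|\to\infty$. What you wrote is essentially the standard proof of that cited theorem: elliptic symbol $\Rightarrow$ inverse symbol outside a compact set $\Rightarrow$ parametrix with remainder in a class whose weight vanishes at infinity $\Rightarrow$ compact remainder by the Rellich-type embedding for the Bony--Chemin spaces. So the two routes differ only in that the paper outsources steps (2)--(3) to H\"ormander/Nicola--Rodino, while you make the mechanism explicit; your version buys transparency (and makes clear exactly where $h^s\to 0$ enters), the paper's buys brevity and consistency with its systematic reliance on the Weyl--H\"ormander calculus. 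Two small corrections to your write-up: first, the reduction of $H_d$ to $F$ is unnecessary (Lemma \ref{lem:ellj} already gives $a_d\in ES^m_d$, so the identical parametrix argument applies to $H_d$ directly), and your justification of it is slightly off --- $H_d=\sigma_-\otimes F^*+\sigma_+\otimes F$ is block \emph{anti}-diagonal and is not conjugated to ${\rm Diag}(F,F^*)$ by a constant unitary, although Fredholmness of $H_d$ does of course follow from that of $F$ and $F^*$ by the obvious anti-diagonal parametrix; second, the isomorphism needed for the equivalence with $\Lambda^{-m}F$ is $\Lambda^m:\mH(w_d^m)\to L^2$ (the relevant weight for $\tilde\mH^m$ is $w_d^m$, not $w_{d-1}^m$; the paper's sentence introducing $\Lambda$ contains the same slip). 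Neither point affects the validity of your proof.
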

\begin{proof}
This is  \cite[Theorem 18.6.6]{H-III-SP-94}; see also \cite{nicola2011global} since the Planck function $h^s(X)$ tends to $0$ as $|X|\to\infty$.
\end{proof}

The construction of $H_{l}$ implements $l-k$ domain walls to test the topology of the operator $H_k$.  When $l=d$, the operator $H_d$ has $d$ confined variables and is as we saw a Fredholm operator, i.e., an operator that is invertible modulo compact operators \cite[Chapter 19]{H-III-SP-94}. The operator $H_d$ is self-adjoint and so its index vanishes. However, it satisfies the chiral symmetry \eqref{eq:chsym} and the corresponding operator $F=H_{d-1}-im_d$ is also Fredholm. Its index may not vanish and provides a definition of the topological charge of $H_k$. 

The intermediate operator $H_{d-1}$ is physically relevant with $d-1$ confined spatial variables (close to $x'_{d-1}=0$) and transport allowed along the direction $x''_{d-1}=x_d$. As we show in section \ref{sec:sigmaI}, this transport is asymmetric and quantized by the topological charge of $H_k$. 

\paragraph{Topological Charge and Integral Formulation.}
We next apply \cite[Theorem 19.3.1']{H-III-SP-94} (see also \cite{fedosov1970direct}) to the operator $\Lambda^{-m}F$ to obtain that the index of $F$, which equals that of $\Lambda^{-m}F$ since $\Lambda^{-m}$ has trivial index, is given by the following Fedosov-\horm formula
\begin{theorem}\label{thm:FH}
 For the above operator $F=\ow a$, we have
\begin{equation}\label{eq:FH}
   {\rm Index}\ F =  - \dfrac{(d-1)!}{(2\pi i)^d (2d-1)!}\dint_{\Sm_R^{2d-1}} {\rm tr} (a^{-1} da)^{2d-1}.
\end{equation}
Here, $R$ is a sufficiently large constant so that $a$ is invertible outside of the ball of radius $R$ and the orientation of $\Rm^{2d}$ and that induced on $\Sm_R^{2d-1}$ is chosen so that $d\xi_1 \wedge dx_1 \wedge \ldots \wedge d\xi_d \wedge dx_d >0$. 
\end{theorem}
Note that  \cite[Theorem 19.3.1]{H-III-SP-94} applies to the smaller class of symbols $S(M,g^i)$ and the above theorem comes from the approximation of symbols in $S(M,g^s)$ by symbols in $S(M,g^i)$ as described in  \cite[Lemma 19.3.3]{H-III-SP-94}. We will use a similar approximation in Lemma \ref{lem:contiso} to prove the topological charge conservation in Theorem \ref{thm:tcc}.

Note also that \cite[Theorem 19.3.1']{H-III-SP-94} applies to $\Lambda^{-m}F$. However, the index is independent of $t\in[0,1]$ for $\Lambda^{-tm}F$ and the corresponding symbols $a_t$ are uniformly invertible for $|X|\geq R$. The formula \eqref{eq:FH} is then seen as the degree of the map $a_t$ from the sphere $\Sm_R^{2d-1}$ to $GL(n_{d-1};\Cm)$, which as such is a continuous integer and hence independent of $t\in [0,1]$; see, e.g., \cite{bott1978some}. This proves \eqref{eq:FH} for $F$ as an operator from $\tilde \mH^m$ to $\tilde \mH^0$.

Following \cite{volovik2009universe}, we call this index the topological charge of  $H_k$ and $F$.


%
%
\section{Physical observable and Toeplitz operator}
\label{sec:sigmaI}
We now present a second topological classification based on the physical observable given by the line conductivity $\sigma_I(H_{d-1})$ in \eqref{eq:sigmaI}. The line conductivity enjoys stability properties that can often be established directly from its definition as a trace \cite{drouot2021microlocal,elbau2002equality,Graf2013,prodan2016bulk,QB-NUMTI-2021}.  Here, we follow \cite{bal2022topological} and relate the conductivity to the index of the Toeplitz operator $T:=\tilde PU(H) \tilde P _{{\rm Ran} \tilde P}$ for $\tilde P$ an orthogonal projector in $\fS[0,1]$. We prove below that $T$ is a Fredholm operator from ${\rm Ran} \tilde P \subset \tilde \mH^0$ to itself, or equivalently that $\tilde PU(H) \tilde P + (I-\tilde P)$ is a Fredhom operator on $\tilde \mH^0$.

%

Let $a_k\in ES_k^m$ so that by Lemma \ref{lem:ellj}, $a_{d-1}\in ES_{d-1}^m$ while $a=a_{d-1}-ix_d \in E\tilde S^m$. We denote by $H=H_{d-1}=\ow a_{d-1}$. 
Let $U(H)=e^{i2\pi \varphi(H)}$ with $\varphi\in C^\infty\fS[0,1]$ (the set of $C^\infty$ switch functions) while $W(H)=U(H)-I$.
\begin{lemma}\label{lem:indexT}
  Let $a_k\in ES_k^m$ and $P\in C^\infty\fS[0,1]$. Then $[P,W(H)]$ and $[P,H]\varphi'(H)$ are trace-class operators with symbols in $S_{d}^{-\infty}$. When $\tilde P\in \fS[0,1]$ is an orthogonal projector, then  $T:=\tilde PU(H) \tilde P _{{\rm Ran} \tilde P}$ is a Fredholm operator on ${\rm Ran} \tilde P \subset\tilde \mH^0$ with index given by ${\rm Tr}[U(H),\tilde P]U^*(H) = {\rm Tr}[U(H), P]U^*(H)$.  All above operator traces may be computed by integrating the Schwartz kernel of the operator along the diagonal.
\end{lemma}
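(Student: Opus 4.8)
The plan is to establish the four claims in sequence, exploiting the pseudodifferential calculus of section \ref{sec:notation}. First, the trace-class and symbol regularity claims for $[P,W(H)]$ and $[P,H]\varphi'(H)$. I would write $W(H)=U(H)-I$ with $U(H)=e^{2\pi i\varphi(H)}$; since $\varphi$ is a switch function, $\varphi' \in C_c^\infty(\Rm)$, so Lemma \ref{lem:fccalc} gives $\varphi'(H) \in \ow S_{d-1}^{-\infty}$. For the commutator $[P,H]$, note $P=P(x_d)$ is a bounded multiplication operator and $[P,H]=\ow c$ where the Weyl symbol $c$ is, by the Moyal product expansion, supported (in the $x_d$ direction) where $P'(x_d)\neq 0$, hence where $x_d$ lies in a bounded interval $[x_L,x_R]$; moreover $c\in S_{d-1}^{m-1}$ by the usual order-drop in commutators. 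Composing, $[P,H]\varphi'(H)$ has symbol in $S_d^{-\infty}$ (the confinement from $\varphi'$ in the $\xi,x_{d-1}'$ directions combined with the confinement from $P'$ in $x_d$ yields decay in all $2d$ variables, which is exactly membership in $S_d^{-\infty}$, the space with weight $w_d^{-\infty}=\aver{x,\xi}^{-\infty}$). A symbol in $S_d^{-\infty}$ is Schwartz-class on phase space, so $\ow$ of it is trace-class with trace equal to $(2\pi)^{-d}\int c(X)\,dX$ — equivalently the integral of the Schwartz kernel along the diagonal. For $[P,W(H)]$, I would write $W(H)=\int_0^1 \frac{d}{ds}e^{2\pi i s \varphi(H)}\,ds = 2\pi i \int_0^1 \varphi(H) e^{2\pi i s\varphi(H)}\,ds$ is not quite the clean route; better is $[P,W(H)] = [P,U(H)]$ and use that $U(H)-I$ has a symbol that is $S_{d-1}^{-\infty}$-close to $I-I$ only away from the spectral support — more carefully, Duhamel applied to $U(H)=e^{2\pi i\varphi(H)}$ gives $[P,U(H)] = 2\pi i\int_0^1 e^{2\pi i s\varphi(H)}[P,\varphi(H)]e^{2\pi i(1-s)\varphi(H)}\,ds$, and $[P,\varphi(H)]$ has a symbol in $S_d^{-\infty}$ by the Helffer–Sj\"ostrand formula \eqref{eq:hs}: $[P,\varphi(H)] = -\frac1\pi\int_\Cm \bar\partial\tilde\varphi(z)\,(z-H)^{-1}[P,H](z-H)^{-1}\,d^2z$, where $[P,H]=\ow c$ drops the order and provides the $x_d$-confinement, while the two resolvents (symbols in $ES_{d-1}^{-m}$ by Lemma \ref{lem:resiso}) and the $z$-integral against $\bar\partial\tilde\varphi$ absorb the $\xi,x_{d-1}'$ directions; conjugating by the unitaries $e^{2\pi i s\varphi(H)}$ (whose symbols are in $S_{d-1}^0$ with all seminorms controlled, by Remark \ref{rem:unifbd}) preserves $S_d^{-\infty}$. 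Hence $[P,W(H)]$ is trace-class with symbol in $S_d^{-\infty}$.

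Second, the Fredholm property of $T$. I would use the standard Toeplitz argument: $U(H)$ is unitary on $\tilde\mH^0=L^2(\Rm^d)\otimes\Mm(n_{d-1})$, and $\tilde P$ is an orthogonal projector, so $T = \tilde P U(H)\tilde P$ restricted to $\mathrm{Ran}\,\tilde P$ has the obvious approximate inverse $\tilde P U(H)^*\tilde P$; then
\begin{equation}\label{eq:toepinv}
  \tilde P U^* \tilde P \cdot \tilde P U \tilde P = \tilde P U^* \tilde P U \tilde P = \tilde P - \tilde P U^*(I-\tilde P)U\tilde P = \tilde P - \tilde P[U^*,\tilde P](I-\tilde P)U\tilde P.
\end{equation}
Since $[U^*,\tilde P] = -[U,\tilde P]^* \cdot(\text{unitary conjugation})$, more precisely $[\tilde P, U^*] = U^*[\,U,\tilde P\,]U^*$, and $[U,\tilde P]=[W(H),\tilde P]$ is trace-class (hence compact) by the first part applied with $P$ replaced by $\tilde P$, the error term in \eqref{eq:toepinv} is compact; symmetrically for the product in the other order. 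Thus $T$ is invertible modulo compacts on $\mathrm{Ran}\,\tilde P$, i.e. Fredholm.

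Third, the index formula. This is the algebraic identity at the heart of the relation between Toeplitz indices and winding-type traces. With $T=\tilde P U\tilde P$ Fredholm and parametrix $S=\tilde P U^*\tilde P$, $\mathrm{Index}\,T = \mathrm{Tr}(I_{\mathrm{Ran}\tilde P} - ST) - \mathrm{Tr}(I_{\mathrm{Ran}\tilde P} - TS)$ provided both $I-ST$ and $I-TS$ are trace-class — which holds here since by \eqref{eq:toepinv} they equal $\tilde P[U^*,\tilde P](I-\tilde P)U\tilde P$ and the analogous expression, both trace-class. Expanding and using cyclicity of the trace together with $\tilde P^2=\tilde P$, this telescopes to $\mathrm{Tr}\big([U,\tilde P]U^*\big) = \mathrm{Tr}\big(\tilde P U\tilde P U^* - U^*\tilde P U\tilde P\big)$ — here I must be careful that each individual operator appearing is trace-class before invoking cyclicity, which is guaranteed because every term carries at least one factor of $[U,\tilde P]$ or $[\tilde P, U^*]$. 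Finally, to replace $\tilde P$ by $P$: the difference $\tilde P - P$ is a bounded multiplication operator supported in a bounded $x_d$-interval (both are switch functions from $0$ to $1$, hence agree outside a compact set), so $\mathrm{Tr}([U,\tilde P]U^*) - \mathrm{Tr}([U,P]U^*) = \mathrm{Tr}([U,\tilde P - P]U^*)$, and $[U, \tilde P - P] = [W(H), \tilde P - P]$ — but more useful is that $\mathrm{Tr}([U,Q]U^*)$ is unchanged under $Q \mapsto Q + (\text{compactly supported bump})$ because $\mathrm{Tr}([U,Q]U^*) = \mathrm{Tr}(UQU^* - Q) = \mathrm{Tr}(U Q U^*) - \mathrm{Tr}(Q)$ whenever $UQU^* - Q$ is trace-class, and the trace of a commutator-type expression $[U,Q]U^*=[U,Q - c]U^*$ for any constant $c$; taking the constant to interpolate shows the dependence is only through $Q$ modulo constants and modulo the boundary data, which $\tilde P$ and $P$ share. (Equivalently, one invokes homotopy invariance of the index of $\tilde P_s U\tilde P_s$ along a path of projectors/switch functions from $\tilde P$ to the spectral projection compatible with $P$.)

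The main obstacle I anticipate is the third step: justifying the manipulations of traces of products of non-trace-class operators (the $U$, $U^*$, $\tilde P$ individually are only bounded), ensuring at every stage that the specific operator whose trace is taken genuinely lies in the trace class before cyclicity is applied, and cleanly passing from $\tilde P$ to the physically natural $P$ — the latter requires knowing that $[W(H),\tilde P]$ and $[W(H),P]$ differ by a trace-class operator and that the regularized trace $\mathrm{Tr}([U,\cdot]U^*)$ is insensitive to the compactly-supported discrepancy $\tilde P - P$. The first two steps are comparatively routine given the calculus already set up: they are essentially bookkeeping in the symbol classes $S_d^{-\infty}$ (for trace-class-ness, via the phase-space Schwartz property) plus the textbook Toeplitz parametrix argument (for the Fredholm property).
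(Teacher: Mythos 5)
Your treatment of the smooth switch function $P$ (composition calculus plus Lemma \ref{lem:fccalc}, Helffer--Sj\"ostrand/Duhamel for $[P,W(H)]$) is fine and close in spirit to the paper's argument, and your Toeplitz parametrix $S=\tilde P U^*\tilde P$ with the Calder\'on formula $\ind T=\Tr(I-ST)-\Tr(I-TS)$ is a legitimate alternative to the paper's citation of the Avron--Seiler--Simon index-of-a-pair-of-projections theorem. But there is a genuine gap at the point where everything hinges: you claim $[U(H),\tilde P]=[W(H),\tilde P]$ is trace class ``by the first part applied with $P$ replaced by $\tilde P$.'' The first part is a pseudodifferential/symbolic argument and requires $P$ to be a \emph{smooth} switch function; $\tilde P$ is only assumed to be a projector in $\fS[0,1]$ (e.g.\ a Heaviside function), so multiplication by $\tilde P$ is not an operator with a symbol in any of the classes you use, and the Moyal/composition calculus simply does not apply. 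This is not a technicality: establishing that $[W(H),\tilde P]$ is trace class (and that its trace, and traces of its products with bounded operators, can be computed from Schwartz kernels) is where the paper spends most of its proof — a kernel-level argument with a partition of unity $\sum_j\chi_j(x_d)$, integration by parts in the dual variable to gain $(x_d-x_d')^{-3}$ off-diagonal decay, and approximation by finite-rank operators with trace norms summable in the blocks. Without some substitute for this step, your compactness of $I-ST$, $I-TS$, the Fredholm property of $T$, and the index formula all lack their key input.

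A second, smaller problem is the passage from $\tilde P$ to $P$. Writing $\Tr([U,Q]U^*)=\Tr(UQU^*-Q)$ and then reasoning as if one could separate $\Tr(UQU^*)-\Tr(Q)$ is not legitimate, since neither $UQU^*$ nor $Q$ is trace class; the ``modulo constants/boundary data'' and ``homotopy invariance'' remarks are not arguments. The clean route (the paper's) is: show $(\tilde P-P)\phi(H)$ and $\phi(H)(\tilde P-P)$ are trace class (again this needs the non-smooth kernel analysis, since $\tilde P-P$ is compactly supported but not smooth), note $\Tr[W,\tilde P]=\Tr[W,P]=0$ because the kernels vanish on the diagonal, and then compute
\begin{equation*}
\Tr[W,\tilde P]W^*-\Tr[W,P]W^*=\Tr\big((\tilde P-P)(W^*W-WW^*)\big)=0,
\end{equation*}
using cyclicity justified by the trace-class factor $(\tilde P-P)W^*$ and the normality of $W=U-I$. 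Your telescoping computation for $\ind T$ can be made rigorous (every term carries a factor $[U,\tilde P]$ or $[U^*,\tilde P]$, and the $\tilde P$-block structure kills the discrepancy), but only once the trace-class property of $[U,\tilde P]$ for sharp $\tilde P$ is actually proved.
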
 
\begin{proof}
The proof are similar to those of \cite[Lemmas 4.1\&4.2 \& Proposition 4.3]{bal2022topological}. Since the setting and notation slightly differ, we provide a reasonably detailed derivation.

We recall the definition of the symbol class $S^0(M)$ in \eqref{eq:Sjh} with $j=0$ for $M$ an order function (with $h-$independent symbols thus corresponding to a choice of Euclidean metric $g_X=dx^2+d\xi^2$). By composition calculus \cite[Chapter 7]{dimassi1999spectral}, for any $A=\ow a$ with $a\in S^0(M)$, then the decomposition $[A,P]=(1-\chi(x_d))A\chi(x_d)-\chi(x_d)A(1-\chi(x_d))$ for $\chi(x_d)$ a smooth function equal $1$ for $x>1$ and $0$ for $x<-1$ shows that $[A,P]$ has symbol in $S^0(M\aver{x_d}^{-\infty})$ (i.e., in $S^0(M\aver{x_d}^{-N})$ for each $N\in\Nm$). By assumption on $a_k$ and using the functional calculus result in Lemma \ref{lem:fccalc}, we obtain for $\phi\in C^\infty_c(\Rm)$ that $\phi(H)\in S^0(\aver{x'_{d-1},\xi}^{-\infty})$, which is larger that $S_{d-1}^{-\infty}$. To simplify notation, we use the same notation for $S^0(M)$ and $S^0(M)\otimes \Mm(n)$ for any $n$. Therefore, from the above, $[\phi(H),P]$ and $[H,P]\phi(H)$ as well as $H^p[H^q,P]\phi(H)$ for $p,q\in\Nm$ all have symbols in $S^0(\aver{X}^{-\infty})$. We use this with $\phi=\varphi'$ and $\phi=W$. With additional effort, we verify that all symbols are in $S_d^{-\infty}$, although this is not necessary for the rest of the proof and so we leave the details to the reader.

It is then clear \cite[Theorems 9.3\&9.4]{dimassi1999spectral} that $[P,W(H)]$ and $[P,H]\varphi'(H)$ are trace-class operators with traces given as the integral of their Schwartz kernel along the diagonal, or equivalently as the phase-space integral of their symbol. Applying the latter directly yields that $\Tr\,[\phi(H),P]=0$ for instance.

\medskip

Let now $\tilde P\in \fS[0,1]$ a switch function that is not necessarily smooth (and so that for instance $\tilde P^2=\tilde P$). Up to possible rescaling of the variable $x_d$, we assume that $\tilde P(x_d)=0$ for $x_d<-1$ and $\tilde P(x_d)=1$ for $x_d>1$. The above composition calculus no longer applies and a more detailed decomposition of $[\phi(H),\tilde P]$ is necessary. We follow \cite[Lemma 4.2]{bal2022topological}. Let $\tilde w(x,\tilde x)$ be the Schwartz kernel of $\phi(H)$ and introduce $w$ such that $w(\frac12(x+x'),x-x',y,y')=\tilde w(y,x;y',x')$. Since $\phi(H)\in S^0(\aver{x'_{d-1},\xi}^{-\infty})$, we obtain that $w$ is smooth in all variables and rapidly decaying in the last three variables. 

Let $h_m$ be an orthonormal basis of $L^2(\Rm)$, for instance the Hermite functions. We decompose $w(x_1,x_2,y,y') =\sum_{m,n} w_{mn}(x_1,x_2) h_m(y) h_n(y')$ and $\phi(H)=\sum_{m,n} \phi_{mn}$. By assumption $\sum_{m,n} m^\alpha n^\beta |w_{mn}|$ is bounded uniformly in $(x_1,x_2)$.  It is thus sufficient to show that $[\phi,\tilde P]:=[\phi_{mn},\tilde P]$  (dropping the indices $(m,n)$, also in $w:=w_{mn}$) is trace-class with trace-norms summable in $(m,n)$ since $w_{mn}$ decays more rapidly than $\aver{m}^{-2}\aver{n}^{-2}$, say.

Now $w(\frac12(x+x'),x-x')$ is smooth in both (one-dimensional) variables and rapidly decaying in $x-x'$.  Let $k(x,x')$ be the Schwartz kernel of the corresponding $[\phi,\tilde P]$. We introduce a smooth partition of unity $1=\sum_{j\in\Zm}\chi_j(x)$ such that $\chi_0(x)$ has support in $(-2-\eta,2+\eta)$ and equals $1$ on $(-2+\eta,2-\eta)$ for $0<\eta<\frac12$; $\chi_j(x)$ has support in  $(1+j-\eta,2+j+\eta)$ and equals $1$ on $(1+j+\eta,2+j-\eta)$ while $\chi_{-j}(x)=\chi_j(-x)$ for $j\geq1$. The Schwartz kernel of $[\phi,\tilde P]$ is therefore  given by
\[
  k(x,x') = \dsum_{i,j\in\Zm} k_{ij}(x,x') , \quad 
   k_{ij}(x,x') = (\tilde P(x')-\tilde P(x))  w(\frac{x+x'}2,x-x') \chi_i(x) \chi_j(x').
\]
We now show that each $k_{ij}(x,x')$ is the kernel of a trace-class operator with trace-class norm summable in $(i,j)$. 

By assumption on $\tilde P$, we observe that $k_{ij}(x,x')=0$ when $0<i,j$ and $i,j<0$. By symmetry, it remains to consider the cases $\{i=0$ and $j\geq0\}$ as well as $\{i\leq-1$ and $j\geq1\}$. 
Assume first $i=0$ and $0\leq j\leq 5$. These contributions are of the form
\[
   (\tilde P(x')-\tilde P(x)) w(\frac{x+x'}2,x-x')  \phi_1(x)\phi_2(x')
\]
with $\phi_p$ compactly supported for $p=1,2$. Following \cite[Section 9]{dimassi1999spectral}, this is decomposed as
\[
   \dint_{\Rm^2} \hat w(\xi,\zeta) e^{i\frac{x+x'}2\zeta} e^{i(x-x')\xi} d\xi d\zeta  (\tilde P(x')-\tilde P(x)) \phi_1(x)\phi_2(x').
\]
This may be seen as a superposition in $(\xi,\zeta)$ of rank-one operators with traces uniformly bounded in $(\xi,\zeta)$ since $\phi_j$ and $\tilde P$ are bounded. By regularity assumptions on $w$, then $\hat w(\xi,\zeta)\in L^1(\Rm^2)\otimes \Mm(n)$. For instance by decomposing $\hat w$ in a basis of Hermite functions, the above trace is well approximated by that of finite rank operators so that all traces are computed as integrals of Schwartz kernels along the diagonal $x=x'$. 

Let us now consider the cases $i=0$ and $j\geq6$ or $i\leq-1$ and $j\geq1$. We observe for $x\not=x'$ that $e^{i(x-x')\xi}=\frac{1}{[i(x-x')]^N}\partial^N_\xi e^{i(x-x')\xi}$. We then find after integration by parts that 
\[
  k_{ij}(x,x') = i^3 \dint_{\Rm^2} \partial^3_\xi \hat w(\zeta,\xi) e^{i\frac{x+x'}2\zeta} e^{i(x-x')\xi} d\xi d\zeta  (\tilde P(x')-\tilde P(x)) \dfrac{\chi_i(x)\chi_j(x')}{(x-x')^3}.
\]
By assumption on $w$, we have that $\partial^3_\xi \hat w(\zeta,\xi)$ is also integrable.

In all cases, we observe that on the support of $\chi_i(x)\chi_j(x')$, we have $3<x'-x<5$. Therefore $x-x'=q+\tilde x-\tilde x'$ for $q\geq3$ an integer and $|\tilde x-\tilde x'|\leq2$. Thus,
\[
  \dfrac{1}{(x-x')^3} =\dfrac{1}{(q+ \frac{\tilde x-\tilde x'}q)^3} = \frac {1}{q^3}\dsum_{m_1,m_2,m_3\geq0}  \Big( \frac{\tilde x'-\tilde x}q\Big)^{m_1+m_2+m_3}.
\]
This absolutely convergent sum is well-approximated by finite sums, which all give rise to finite-rank operators (degenerate Schwartz kernels). This shows, using the regularity of $w$, that $k_{ij}$ is the kernel of an operator given as a limit in trace-class norm of trace-class operators with a trace norm bounded by $C(|i|+j)^{-3}$ when $i<0<j$ and bounded by $j^{-3}$ when $i=0$ and $j\geq6$. Since the latter bound is summable in $(i,j)$, this shows that $[\phi(H),\tilde P]$ is trace-class as limit of finite rank trace-class operators. Since the traces of the latter are given by the integral of their kernels along the diagonal, this property also holds for $[\phi(H),\tilde P]$. Since the kernel of that operator vanishes along the diagonal, we find that ${\rm Tr}[\phi(H),\tilde P]=0$.

\medskip

Let now $B$ be a bounded operator on $L^2(\Rm^{2d})\otimes \Mm(n)$. We just showed that $[U(H),\tilde P]=[W(H),\tilde P]$ could be decomposed as a trace-class limit of sums of rank-one operators. Let $K_{12}$ be such an operator. Then $K_{12}B$ is also a rank-one operator with trace norm increased at most by $\|B\|$. This shows that $[U(H),\tilde P]B$ is also trace-class. 

Let us finally consider the operators $(P-\tilde P)\phi(H)$ and $\phi(H)(P-\tilde P)$ for $P$ a smooth switch function. Since $\tilde P-P$ now has compact support, we can modify the above partition of unity with $\chi_0=1$ on the support of $\tilde P-P$ and observe that only $k_{00}$ is non-trivial. We thus obtain that $(P-\tilde P)\phi(H)$ and $\phi(H)(P-\tilde P)$  are trace-class. 

The previous two results show that $t_1:=\Tr [U(H),\tilde P] U^*(H)$ and $t_2:=\Tr [U(H),P] U^*(H)$ are defined and that $\Tr [U(H),\tilde P]=\Tr [U(H),P]=0$ so that $\Tr [W(H), \tilde P-P]=0$.  Also, $(\tilde P-P) W^*(H)$ is trace-class. Now, with $W=W(H)$, $t_1-t_2$ is given by 
\[
  \Tr [W,\tilde P]W^* - \Tr [W,P]W^*= \Tr  W(\tilde P-P)W^*- (\tilde P-P) WW^* = \Tr (\tilde P-P)(W^*W-WW^*)=0
\]
since $\Tr W(\tilde P-P)W^*=\Tr (\tilde P-P)W^*W$ as $W$ is bounded and $(\tilde P-P)W^*$ is trace-class and since $WW^*=W^*W$.  This shows that $\Tr [U(H),\tilde P] U^*(H)=\Tr [U(H),P] U^*(H)$.

That $T$ is then a Fredholm operator with index given by  $\Tr [U(H),\tilde P] U^*(H)$ for $\tilde P$ a projector is a non-trivial consequence of the trace-class nature of $[\tilde P,U(H)]$ and is shown in  \cite[Theorems 4.1\&5.3]{AVRON1994220}. This concludes the proof of the lemma.
\end{proof}
We now relate the Fredholm operator $T$ and the calculation of its index as a trace with the line conductivity $\sigma_I=\sigma_I(H)$ defined in \eqref{eq:sigmaI}. We have the result:
\begin{theorem}\label{thm:cond} Under the assumptions of Lemma \ref{lem:indexT}, we have:
 $2\pi \sigma_I  = {\rm Tr}[U(H),P] U^*(H) = {\rm Tr}[U(H),\tilde P] U^*(H) = {\rm Index}\ \tilde PU(H) \tilde P _{{\rm Ran} \tilde P}$.
\end{theorem}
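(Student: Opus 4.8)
The last two equalities are exactly the content of Lemma~\ref{lem:indexT}: for an orthogonal projector $\tilde P\in\fS[0,1]$ that lemma asserts that $T=\tilde P U(H)\tilde P_{|{\rm Ran}\tilde P}$ is Fredholm with ${\rm Index}\,T={\rm Tr}[U(H),\tilde P]U^*(H)={\rm Tr}[U(H),P]U^*(H)$, so the only new statement is the first equality $2\pi\sigma_I(H)={\rm Tr}[U(H),P]U^*(H)$, i.e.\ by \eqref{eq:sigmaI}, $2\pi i\,{\rm Tr}\big([H,P]\varphi'(H)\big)={\rm Tr}\big([U,P]U^*\big)$ with $U=e^{2\pi i\varphi(H)}$. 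The first step is to strip off the unbounded part of $U$: since $\varphi\in C^\infty\fS[0,1]$, the function $W(\lambda)=e^{2\pi i\varphi(\lambda)}-1$ and all its derivatives vanish outside the compact transition interval of $\varphi$, so $W\in C_c^\infty(\Rm)$ and $U=I+W(H)$. Hence $[U,P]=[W(H),P]$, $[U,P]U^*=[W(H),P]+[W(H),P]W^*(H)$, and since $W\in C_c^\infty$ Lemma~\ref{lem:indexT} gives ${\rm Tr}\,[W(H),P]=0$; thus ${\rm Tr}[U,P]U^*={\rm Tr}\big([W(H),P]W^*(H)\big)$, a trace of a trace-class operator built from functions of $H$ in $C_c^\infty$ only, with $W^*(H)=\bar W(H)$, $\bar W(\lambda)=e^{-2\pi i\varphi(\lambda)}-1$.

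The engine is a Helffer--Sj\"ostrand ``chain rule'': for all $f,g\in C_c^\infty(\Rm)$,
\begin{equation}\label{eq:chainrule}
 {\rm Tr}\big([f(H),P]\,g(H)\big)={\rm Tr}\big([H,P]\,f'(H)\,g(H)\big).
\end{equation}
To prove \eqref{eq:chainrule} I would insert \eqref{eq:hs} into $[f(H),P]$ together with $[(z-H)^{-1},P]=(z-H)^{-1}[H,P](z-H)^{-1}$, pair with $g(H)$, and take the trace. Cyclicity is legitimate because $[H,P]\phi(H)$ is trace class for $\phi\in C_c^\infty$ by Lemma~\ref{lem:indexT}, the resolvents are bounded by $|{\rm Im}\,z|^{-1}$, and $|\bar\partial\tilde f(z)|\le C_N|{\rm Im}\,z|^N$ for all $N$; using $(z-H)^{-2}=-\partial_z(z-H)^{-1}$ this reduces the right-hand side to $\tfrac1\pi\int_\Cm\bar\partial\tilde f(z)\,\partial_z G(z)\,d^2z$, where $G(z)={\rm Tr}\big([H,P]g(H)(z-H)^{-1}\big)=\int_\Rm(z-\lambda)^{-1}\,d\mu(\lambda)$ for the finite complex measure $d\mu(\lambda)={\rm Tr}\big([H,P]g(H)\,dE_\lambda(H)\big)$ supported in ${\rm supp}\,g$. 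Integrating by parts in $z$ (the boundary term on $\Rm$ vanishes since $\bar\partial\tilde f$ vanishes to infinite order there while $|\partial_z^kG(z)|\le C_k|{\rm Im}\,z|^{-k-1}$) and using that $\partial_z\tilde f$ is an almost analytic extension of $f'$, this becomes $-\tfrac1\pi\int_\Cm\bar\partial(\partial_z\tilde f)(z)\,G(z)\,d^2z=\int_\Rm f'(\lambda)\,d\mu(\lambda)$ by Fubini and \eqref{eq:hs} applied to $f'$, and $\int_\Rm f'(\lambda)\,d\mu(\lambda)={\rm Tr}\big([H,P]f'(H)g(H)\big)$.

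Applying \eqref{eq:chainrule} with $f=W$, $g=\bar W$ and using $W'(\lambda)\bar W(\lambda)=2\pi i\varphi'(\lambda)e^{2\pi i\varphi(\lambda)}\big(e^{-2\pi i\varphi(\lambda)}-1\big)=2\pi i\varphi'(\lambda)-W'(\lambda)$ would give
\[
 {\rm Tr}[U,P]U^*={\rm Tr}\big([W(H),P]\bar W(H)\big)=2\pi i\,{\rm Tr}\big([H,P]\varphi'(H)\big)-{\rm Tr}\big([H,P]W'(H)\big),
\]
so it remains to show ${\rm Tr}\big([H,P]W'(H)\big)=0$. I would pick $g_R\in C_c^\infty(\Rm)$ with $g_R\equiv1$ on an open neighborhood of ${\rm supp}\,W$; then $Wg_R=W$ and $W'g_R=W'$ (as ${\rm supp}\,W'\subset{\rm supp}\,W$), while $g_R'W\equiv0$ (as $g_R'$ vanishes near ${\rm supp}\,W$). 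By \eqref{eq:chainrule} with $f=W$, $g=g_R$, ${\rm Tr}\big([H,P]W'(H)\big)={\rm Tr}\big([W(H),P]g_R(H)\big)$; writing $[W(H),P]g_R(H)=[W(H)g_R(H),P]-W(H)[g_R(H),P]=[W(H),P]-W(H)[g_R(H),P]$ and using ${\rm Tr}[W(H),P]=0$, this equals $-{\rm Tr}\big(W(H)[g_R(H),P]\big)=-{\rm Tr}\big([g_R(H),P]W(H)\big)$ by cyclicity ($[g_R(H),P]$ trace class by Lemma~\ref{lem:indexT}), and \eqref{eq:chainrule} with $f=g_R$, $g=W$ identifies the last trace with ${\rm Tr}\big([H,P]g_R'(H)W(H)\big)=0$ since $g_R'W\equiv0$. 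Hence ${\rm Tr}[U,P]U^*=2\pi i\,{\rm Tr}\big([H,P]\varphi'(H)\big)=2\pi\,{\rm Tr}\big(i[H,P]\varphi'(H)\big)=2\pi\sigma_I(H)$, which closes the argument. I expect the chain rule \eqref{eq:chainrule} to be the one genuinely delicate ingredient: beyond handling the almost analytic extension, one must carefully justify each interchange of the trace with the $z$-integral and with the Fubini step against $d\mu$, together with the integration by parts across $\Rm$ — all controlled by the trace-norm estimates underlying Lemma~\ref{lem:indexT}; everything after \eqref{eq:chainrule} is algebra plus the already-established vanishing ${\rm Tr}[\phi(H),P]=0$ for $\phi\in C_c^\infty$.
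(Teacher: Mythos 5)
Your proposal is correct, and the overall architecture matches the paper's: the last two equalities are quoted from Lemma~\ref{lem:indexT}, the first is reduced to the trace identity ${\rm Tr}\big([W(H),P]\,g(H)\big)={\rm Tr}\big([H,P]\,W'(H)g(H)\big)$ (the paper's \eqref{eq:tracediff}), and the leftover term ${\rm Tr}\,[H,P]W'(H)$ is killed by localizing near ${\rm supp}\,W$ and using ${\rm Tr}\,[\phi(H),P]=0$ — your single cutoff $g_R$ plus two applications of the chain rule is equivalent to the paper's partition of unity $1=\psi_1^2+\psi_2^2$ with $W\psi_2=0$. Where you genuinely diverge is in how the key identity is proved. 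The paper approximates $W$ by polynomials $W_p$ with $\chi(W-W_p)$ and $\chi(W'-W_p')$ uniformly small, proves the identity for monomials via ${\rm Tr}\,[H^{n+1},P]g=(n+1)\,{\rm Tr}\,[H,P]H^ng$ (Leibniz rule, cyclicity, induction), and passes to the limit; this stays entirely inside the trace-class composition calculus of Lemma~\ref{lem:indexT} and needs no resolvent estimates, at the cost of the approximation bookkeeping. You instead get the identity in one stroke from the Helffer--Sj\"ostrand formula \eqref{eq:hs}, the commutator identity $[(z-H)^{-1},P]=(z-H)^{-1}[H,P](z-H)^{-1}$, $(z-H)^{-2}=-\partial_z(z-H)^{-1}$, and an integration by parts turning $\tilde f$ into an almost analytic extension of $f'$; the price is the quantitative control near the real axis ($\|[H,P]g(H)\|_1<\infty$, $\|(z-H)^{-k}\|\le|{\rm Im}\,z|^{-k}$, $|\bar\partial\tilde f|\le C_N|{\rm Im}\,z|^N$, the excision argument for the boundary term on $\Rm$, and Fubini against the finite measure $d\mu$), all of which are available here and consistent with machinery the paper deploys anyway in the proofs of Lemma~\ref{lem:conttf} and Theorem~\ref{thm:tcc}. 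Two small points to make explicit in a write-up: after applying the chain rule with $f=W$, $g=\bar W$, splitting ${\rm Tr}\big([H,P](2\pi i\varphi'-W')(H)\big)$ into two traces uses that $[H,P]\varphi'(H)$ and $[H,P]W'(H)$ are \emph{separately} trace class (the proof of Lemma~\ref{lem:indexT} gives this for any $C^\infty_c$ function of $H$); and the cyclicity inside the $z$-integral uses that $[H,P]g(H)(z-H)^{-1}$ is trace class for ${\rm Im}\,z\neq0$, with the blow-up in $|{\rm Im}\,z|^{-1}$ absorbed by $\bar\partial\tilde f$.
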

\begin{proof}
 This is essentially \cite[Proposition 4.3]{bal2022topological} with slightly different assumptions. The last two relations were proved in Lemma \ref{lem:indexT}. We focus on the first one. Let $g\in C^\infty_c(\Rm)$ and $\chi\in C^\infty_c(\Rm)$ with $\chi=1$ on the support of $W$ and $g$.  Let  $W_p$ be a sequence of polynomials chosen such that $\chi(W-W_p)$ and $\chi(W'-W'_p)$ converge to $0$ uniformly on $\Rm$ as $p\to\infty$. We find, with $\delta W_p:=W-W_p$
\[
  [\delta W_p, P] g\chi = \delta W_p[P,g]\chi + \delta W_p g[P,\chi] + [\delta W_p g\chi,P].
\]
We deduce from Lemma \ref{lem:indexT} and its proof that ${\rm Tr} [\delta W_p g\chi,P]=0$ and that $\delta W_p[P,g]$ is trace-class since $[P,g]$ has symbol in $S^0(\aver{X}^{-\infty})$. Since ${\rm Tr}AB={\rm Tr}BA$ when $A$ is trace-class and $B$ is bounded, we find that ${\rm Tr} \delta W_p[P,g]\chi ={\rm Tr} \chi\delta W_p[P,g]= {\rm Tr} [P,g] \chi\delta W_p$.  Therefore, 
\[
  {\rm Tr}  [W-W_p, P] g = {\rm Tr} [P,g](W-W_p)\chi + {\rm Tr} (W-W_p)g [P,\chi] \to 0 \quad \mbox{ as } p\to\infty.
\]
It remains to analyze ${\rm Tr}[W_p,P]g$. We verify that 
\[
{\rm Tr} [ H^n ,P] g =  {\rm Tr} \ n[ H ,P] H^{n-1} g ,\quad 
  {\rm Tr} [ W_p ,P] g =  {\rm Tr} [ H ,P] W_p' g.
\]
Indeed, from $[AB,C]=A[B,C] +[A,C]B$,
\[\begin{array}{l}
  {\rm Tr} [H^{n+1},P]g={\rm Tr} H^n[H,P]g+ [H^n,P]Hg = {\rm Tr} H^n[H,P]g\chi + [H^n,P]Hg \\
  = {\rm Tr} [H,P]H^ng\chi + [H^n,P]Hg = {\rm Tr} [H,P]H^ng+ [H^n,P]Hg = {\rm Tr}[H,P](n+1)H^ng
\end{array}\]
using that $H^n[H,P]g$ is trace-class so that $ {\rm Tr} H^n[H,P]g\chi= {\rm Tr} H^n\chi [H,P]g$ and for the last equality an induction in $n\geq1$. This proves the result for $W_p$ as well. Is remains to realize that $(W_p'-W')g$ is uniformly small as $p\to\infty$ to obtain that
\begin{equation}\label{eq:tracediff}
  {\rm Tr} [W,P] g = {\rm Tr} [H,P] W'g.
\end{equation}

We next compute
\[
  {\rm Tr} [U,P]U^*={\rm Tr} [W,P] + {\rm Tr}[W,P]W^* = {\rm Tr} [H,P] W' W^*={\rm Tr} [H,P] W' U^* -{\rm Tr} [H,P] W' .
\]
We now show that $0={\rm Tr} [H,P] W'$. Let $1=\psi_1^2+\psi_2^2$ a partition of unity with $0\leq \psi_j\leq 1$ for $j=1,2$ and such that $\psi_1\in C^\infty_c(\Rm)$ equals $1$ on the support of $W$. Then, using \eqref{eq:tracediff}, with $\psi_j=\psi_j(H)$, we have
$
  {\rm Tr} [H,P] W' = {\rm Tr} [H,P] W' \psi_1^2 = {\rm Tr} [W(H),P] \psi_1^2.
$
Now, since $[W(H),P]$ is trace-class,
\[
  {\rm Tr}[W,P] \psi_2^2= {\rm Tr}\  \psi_2 [W,P] \psi_2 =0
\]
since $W(H)\psi_2(H)=0$. Since ${\rm Tr}[W,P]=0$, we have ${\rm Tr} [H,P] W'={\rm Tr} [W,P] \psi_1^2={\rm Tr}[W,P]=0$.
This proves that ${\rm Tr} [U,P]U^* = 2\pi i {\rm Tr}[H,P]\varphi'(H)$ since $W'U^*=U'U^*=2\pi i\varphi'$.
\end{proof}

\medskip

We next need to ensure that the index of the Toeplitz operator $T=\tilde PU(H) \tilde P _{{\rm Ran} \tilde P}$ is appropriately stable. We first need to show that the index can be computed by approximating it by an isotropic symbol.
\begin{lemma}\label{lem:contiso}
 Let $T$ be as above with $H\in \ow ES^m_{d-1}(g^s)$. Then there is a sequence of operators $H_\eps$ for $0\leq\eps\leq1$ with symbol in $ES^m_{d-1}(g^{\rm i})$ for all $\eps>0$ and such that the corresponding $[0,1]\ni\eps\to T_\eps=\tilde P U(H_\eps) \tilde P_{{\rm Ran} \tilde P}$ is continuous in the uniform sense and $T_0=T$. Thus $\ind T_\eps$ is defined, independent of $\eps$ and equal to $\ind T$. Moreover, the symbols $a_\eps$ are chosen so that for any compact domain in $X=(x,\xi)$, $a_\eps=a_{d-1}$ on that domain for $\eps$ sufficiently small. 
\end{lemma}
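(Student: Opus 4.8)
\emph{Plan.} The statement breaks into three tasks: (1) construct the family $a_\eps$; (2) check that each $T_\eps$ is a well-defined Fredholm operator on the fixed space ${\rm Ran}\,\tilde P\subset\tilde\mH^0$ (note $n_{d-1}$ does not change with $\eps$, so the space is the same for all $\eps$); (3) show $\eps\mapsto T_\eps$ is continuous in operator norm on $[0,1]$. Once (2)--(3) hold, $\eps\mapsto\ind T_\eps$ is a continuous $\Zm$-valued function on $[0,1]$, hence constant, so $\ind T_\eps=\ind T_0=\ind T$, and the final assertion about compact sets is built into the construction in (1).

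For (1) I would follow the approximation of $S(M,g^s)$ symbols by $S(M,g^{\rm i})$ symbols used in \cite[Lemma 19.3.3]{H-III-SP-94} (the same device mentioned after Theorem \ref{thm:FH}), adapted to the anisotropic classes $S^m_{d-1}$ and the weight $w_{d-1}$. The symbol $a_{d-1}$ is modified only far from the origin: one produces Hermitian-valued $a_\eps$ that (i) coincide with $a_{d-1}$ on $\{|X|\le\eps^{-1}\}$ (hence with $a_{d-1}$ on any fixed compact set once $\eps$ is small), (ii) lie in $S^m_{d-1}(g^{\rm i})$ for every $\eps>0$, (iii) remain elliptic with the ellipticity constants of $a_{d-1}$ up to an $o(1)$ error, so $a_\eps\in ES^m_{d-1}$ and $H_\eps:=\ow a_\eps$ is self-adjoint on $\mH^m_{d-1}$, (iv) have $S^m_{d-1}(g^s)$-seminorms bounded uniformly in $\eps\in(0,1]$, and (v) depend continuously on $\eps$ for the $S^m_{d-1}(g^s)$-seminorm topology, with $a_\eps\to a_{d-1}$ as $\eps\to0$. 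This step is the technical heart of the lemma: reconciling ``equal to $a_{d-1}$ on growing balls'' with ``convergent and uniformly bounded seminorms'' is exactly the quantitative version of Hörmander's approximation that has to be carried out here; everything else is bookkeeping with machinery already in place in Section \ref{sec:notation}.

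Granting the family $a_\eps$, task (2) is immediate: since $a_\eps\in ES^m_{d-1}$, the argument proving Lemma \ref{lem:indexT} applies verbatim to $H_\eps$ (it uses only $a_\eps\in ES^m_{d-1}$, Lemma \ref{lem:fccalc}, and that $\tilde P$ is a projector), so $W(H_\eps):=U(H_\eps)-I$ has symbol in $S^{-\infty}$, $[W(H_\eps),\tilde P]$ is trace-class, and $T_\eps=\tilde PU(H_\eps)\tilde P_{|{\rm Ran}\tilde P}$ is Fredholm; for $\eps=0$ this is the given $T$. For (3) it suffices to prove $\eps\mapsto W(H_\eps)$ is norm-continuous, since $T_\eps-T_{\eps'}=\tilde P\big(W(H_\eps)-W(H_{\eps'})\big)\tilde P_{|{\rm Ran}\tilde P}$. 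Because $\varphi$ is a smooth switch function, $W=e^{2\pi i\varphi}-1$ is compactly supported on $\Rm$, so by the Helffer--Sj\"ostrand formula \eqref{eq:hs} with a compactly supported almost-analytic extension $\tilde W$ satisfying $|\bar\partial\tilde W(z)|\le C_N|{\rm Im}\,z|^N$,
\begin{equation*}
 W(H_\eps)-W(H_{\eps'})=-\tfrac1\pi\int_\Cm\bar\partial\tilde W(z)\,(z-H_\eps)^{-1}\,\ow(a_\eps-a_{\eps'})\,(z-H_{\eps'})^{-1}\,d^2z .
\end{equation*}
Here $(z-H_{\eps'})^{-1}$ maps $L^2$ into $\mH^m_{d-1}$ (the resolvent isomorphism recalled before Lemma \ref{lem:resiso}), $\ow(a_\eps-a_{\eps'})$ maps $\mH^m_{d-1}$ to $L^2$ with norm bounded by finitely many $S^m_{d-1}(g^s)$-seminorms of $a_\eps-a_{\eps'}$, and $(z-H_\eps)^{-1}$ is bounded on $L^2$; by the uniform seminorm bounds (iv) and the argument of Remark \ref{rem:unifbd}, the $L^2\!\to\!L^2$ and $L^2\!\to\!\mH^m_{d-1}$ resolvent norms grow at most polynomially in $|{\rm Im}\,z|^{-1}$, uniformly in $\eps,\eps'\in(0,1]$ and in $z$ in the support of $\tilde W$. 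Taking $N$ large makes the $z$-integral converge absolutely, so $\|W(H_\eps)-W(H_{\eps'})\|\le C\cdot(\text{a fixed }S^m_{d-1}(g^s)\text{-seminorm of }a_\eps-a_{\eps'})$, which tends to $0$ as $\eps'\to\eps$ by (v).

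Thus $\eps\mapsto T_\eps$ is a norm-continuous family of Fredholm operators on ${\rm Ran}\,\tilde P$, its index is locally constant hence constant on $[0,1]$, and $\ind T_\eps=\ind T$ for all $\eps$. The only genuinely delicate point is step (1); the continuity estimate, while a little lengthy, is routine given the functional calculus and resolvent bounds of Section \ref{sec:notation}.
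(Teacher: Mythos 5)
Your construction step and the Fredholm step match the paper: the family $a_\eps$ is built exactly as in \cite[Lemma 19.3.3]{H-III-SP-94} (the paper takes $a_\eps(X)=a_{d-1}(v(\eps|X|)x,v(\eps|X|)\xi)$ with $v=1$ on $[0,1]$, $v(r)=2/r$ on $[2,\infty)$), and Fredholmness of each $T_\eps$ is obtained, as you say, by running Lemma \ref{lem:indexT} for $H_\eps$. Where you diverge is the mechanism for index constancy: you want norm continuity of $\eps\mapsto T_\eps$ via the Helffer--Sj\"ostrand resolvent identity (this is in fact the technique the paper uses in Lemma \ref{lem:conttf}), whereas the paper never proves norm continuity of the family; it shows that $W(H_\eps)$ has symbol bounds uniform in $\eps$ (Remark \ref{rem:unifbd}), hence that $\tilde T_\eps^*\tilde T_\eps-I$ and $\tilde T_\eps\tilde T_\eps^*-I$ are uniformly trace-class, and then invokes \cite[Theorem 19.1.10]{H-III-SP-94}, the index-stability theorem designed for families that are only strongly continuous with well-controlled compact errors.

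The gap in your route is your property (v), on which the endpoint $\eps=0$ of your continuity argument entirely rests. The cut-off construction guarantees $a_\eps=a_{d-1}$ on $\{\eps|X|\le1\}$ and uniform $S^m_{d-1}(g^s)$ bounds, but it does \emph{not} give $a_\eps\to a_{d-1}$ in the $S^m_{d-1}(g^s)$ seminorms, and no construction can in general: the weight $w_{d-1}(X)=\aver{x'_{d-1},\xi}$ does not grow in the $x_d$ direction, so agreement on the expanding balls does not force smallness of $(a_\eps-a_{d-1})/w^m_{d-1}$. Concretely, at points with $|x_d|$ large and $(x'_{d-1},\xi)$ bounded one has $a_\eps(X)=a_{d-1}(v(\eps|X|)X)$ with $v(\eps|X|)|X|\simeq 2/\eps$, and for symbols with bounded but nontrivial dependence in the unweighted directions (e.g.\ a term $\sin(\log\aver{x})$, which is admissible in $S^m_{d-1}(g^s)$) the difference stays of size $O(1)$ while $w_{d-1}=O(1)$ there. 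Since $\mH^m_{d-1}$ imposes no decay in $x_d$, wave packets located in that region show that $\|\ow(a_\eps-a_{d-1})\|_{\mH^m_{d-1}\to L^2}$ need not tend to $0$, so your bound $\|W(H_\eps)-W(H)\|\le C\,\|a_\eps-a_{d-1}\|_{k}$ does not close the loop at $\eps=0$ --- which is precisely the point needed to identify $\ind T_\eps$ with $\ind T$. (On $(0,1]$ your Helffer--Sj\"ostrand estimate is fine.) To repair the argument you would have to do what the paper does: settle for strong continuity of $T_\eps$ and $T_\eps^*$ at $\eps=0$, combine it with the uniform trace-class control of the errors $\tilde T_\eps^*\tilde T_\eps-I$, $\tilde T_\eps\tilde T_\eps^*-I$ coming from Lemma \ref{lem:indexT} and Remark \ref{rem:unifbd}, and conclude with \cite[Theorem 19.1.10]{H-III-SP-94} rather than with the elementary stability of the index under norm-continuous deformations.
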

\begin{proof}
  The proof is based on  \cite[Lemma 19.3.3]{H-III-SP-94} showing that symbols in $S(M,g^s)$ may be suitably approximated by symbols in $S(M,g^i)$ as follows. Let $v(r):\Rm_+\to\Rm_+$ be a smooth non-increasing function such that $v(r)=1$ on $[0,1]$ and $v(r)=2/r$ on $[2,\infty)$. Let $a\in S(M,g^s)$. We then define the family of regularized symbols:
\[
   a_\eps(X) = a(v(\eps |X|)x,v(\eps |X|)\xi).
\]  
We observe that $a_\eps(X)=a(X)$ for $\eps|X|\leq1$ while $a_\eps(X)=a(\frac{X}{\eps|X|})$ homogeneous of degree $0$ for $\eps|X|\geq2$. Then \cite[Lemma 19.3.3]{H-III-SP-94} (where the metrics $g^s$ and $g^i$ are called $g$ and $G$, respectively) proves that $a_\eps(X)\in S(M,g^s)$ uniformly in $0\leq \eps\leq 1$ (i.e., every semi-norm defining the symbol space is bounded for $a_\eps$ uniformly in $\eps$). Moreover, $a_\eps(X)\in S(M,g^i)$ when $\eps>0$ with a bound that now depends on $\eps$. Since $a_\eps(X)=a(X)$ for $\eps|X|\leq1$, we obtain that $a_\eps$ converges to $a$ in $S(M,g^s)$ as $\eps\to0$. 

We now use the proof of \cite[Theorem 19.3.1']{H-III-SP-94} extending the index theorem \eqref{eq:FH} from the isotropic metric $g^i$ to the metric $g^s$. For $a_\eps(X) = a_{d-1}(v(\eps |X|)x,v(\eps |X|)\xi)$, we find that $a_\eps\in ES^m_{d-1}(g^s)$ uniformly in $\eps$ and $a_\eps\in ES^m_{d-1}(g^i)$ for $\eps>0$.  Let  $H_\eps=\ow a_\eps$ and $\tilde T_\eps=\tilde P U(H_\eps) \tilde P+I-\tilde P$. By uniformity of $a_\eps(X)\in S^m_{d-1}(g^s)$ in $\eps$ and uniformity of bounds in Lemma \ref{lem:fccalc} (see Remark \ref{rem:unifbd}), we obtain that $W(H_\eps)$ has symbol in $\tilde S^0(g^s)$ uniform in $\eps$. 

We now observe that $\tilde T^*_\eps \tilde T_\eps-I=\tilde P[W^*(H_\eps),\tilde P]U(H_\eps)\tilde P$ and $\tilde T_\eps \tilde T^*_\eps-I$ are uniformly compact in $\eps$ and even uniformly trace-class from the results of Lemma \ref{lem:indexT}. We then apply   \cite[Theorem 19.1.10]{H-III-SP-94} to obtain that the indices of $\tilde T_\eps$ and $\tilde T^*_\eps$, and hence that of $T_\eps$, are independent of $0<\eps\leq 1$. In the limit $\eps\to0$, this is the index of $T$. Thus $\ind T=\ind T_\eps$ for $\eps>0$ but now for a symbol $a_\eps\in ES^m_{d-1}(g^i)$.
\end{proof}

The above result shows that we can replace the symbol in $H_{d-1}$ by that obtained at $\eps>0$. We also observe that \eqref{eq:FH} is independent of $\eps$ for $\eps$ small. We may therefore assume that $a_{d-1}\in ES^m_{d-1}(g^i)$ in the computation of $\ind T$. The main advantage of the more constraining metric $g^i$ is that the corresponding symbol classes are now invariant under suitable rotations and permutation of the phase space variables $X$. The following result is then used. Let $Y=(x_1,\ldots,x_{d-1},\xi_1,\ldots \xi_{d-1})$.
\begin{lemma}\label{lem:conttf}
  Let $g=g^i$. Let $[0,1]\ni t \to L_t$ be a continuous family of linear invertible transformations in $GL(2d-2,\Rm)$ in the $Y$ variables leaving the variables $(x_d,\xi_d)$ fixed.
  Let $a(X)\in ES^m_{d-1}(g^i)$. Then $a(t,X)=a(L_t X)\in ES^m_{d-1}(g^i)$. Let $T_t$ be the corresponding Toeplitz operator. Then $T_t$ is Fredholm with index independent of $t\in [0,1]$.
\end{lemma}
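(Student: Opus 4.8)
The plan is to proceed in three steps: first verify that $a(t,\cdot):=a(L_t\,\cdot)$ stays in $ES^m_{d-1}(g^i)$ with seminorms uniform in $t$; then read off from Lemmas~\ref{lem:fccalc} and \ref{lem:indexT} that each $T_t$ is Fredholm (and the relevant operators uniformly trace class); and finally show that $t\mapsto T_t$ is norm continuous, so that $\ind T_t$ is locally constant, hence constant on $[0,1]$. For the first step I would extend each $L_t$ to $\tilde L_t\in GL(2d,\Rm)$ acting as the identity on $(x_d,\xi_d)$. Since $t\mapsto L_t$ is continuous on the compact $[0,1]$, the norms $\|\tilde L_t\|$ and $\|\tilde L_t^{-1}\|$ are bounded uniformly in $t$, so $\aver{\tilde L_t X}\asymp\aver{X}$ and, because $\xi_d$ is fixed, $w_{d-1}(\tilde L_t X)=\aver{L_tY,\xi_d}\asymp\aver{Y,\xi_d}=w_{d-1}(X)$, with constants uniform in $t$. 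The point of having reduced (in Lemma~\ref{lem:contiso}) to the isotropic metric $g^i$ is precisely that its symbol seminorms weight $\partial^\gamma_X$ by the linearly invariant factor $\aver{X}^{|\gamma|}$ rather than the anisotropic $\aver{x}^{|\alpha|}\aver{\xi}^{|\beta|}$ of $g^s$; hence the chain rule converts the estimates defining $S^m_{d-1}(g^i)$ for $a$ into the same estimates for $a(\tilde L_t\,\cdot)$, with constants multiplied only by powers of $\sup_t\|\tilde L_t\|$. Hermitian symmetry is preserved, and ellipticity \eqref{eq:ellip} transfers since $|\det a(\tilde L_t X)|^{1/n_{d-1}}\geq C_1 w_{d-1}^m(\tilde L_t X)-C_2\geq C_1' w_{d-1}^m(X)-C_2'$ uniformly. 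So $a(t,\cdot)\in ES^m_{d-1}(g^i)\subset ES^m_{d-1}(g^s)$ with uniform seminorms, and $H_t:=\ow a(t,\cdot)$ is self-adjoint with domain $\mH^m_{d-1}$ by the results of Section~\ref{sec:notation}.

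For the second step, since $\varphi\in C^\infty\fS[0,1]$, the function $W(\lambda):=e^{2\pi i\varphi(\lambda)}-1$ vanishes off the transition region of $\varphi$, so $W\in C^\infty_c(\Rm)$. By Lemma~\ref{lem:fccalc} and Remark~\ref{rem:unifbd} (with the uniform bounds of Step~1), $W(H_t)=U(H_t)-I$ has symbol in $S^{-\infty}_{d-1}$ with seminorms bounded uniformly in $t$. Lemma~\ref{lem:indexT} then gives, for each $t$, that $[W(H_t),\tilde P]$ is trace class with trace norm bounded uniformly in $t$, that $\tilde T_t:=\tilde P U(H_t)\tilde P+I-\tilde P$ satisfies that $\tilde T_t^*\tilde T_t-I$ and $\tilde T_t\tilde T_t^*-I$ are trace class uniformly in $t$, and that $T_t$ is Fredholm with $\ind T_t=\ind\tilde T_t$.

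It remains to show $t\mapsto\tilde T_t$ is norm continuous; then $\ind T_t=\ind\tilde T_t$ is locally constant, hence constant, as in Lemma~\ref{lem:contiso} via \cite[Theorem~19.1.10]{H-III-SP-94}. Since $\tilde T_t=I+\tilde P W(H_t)\tilde P$, it suffices to prove $t\mapsto W(H_t)$ norm continuous on $\mH^0_{d-1}$. I would first show $\|a(t,\cdot)-a(s,\cdot)\|_{S^m_{d-1}(g^i)}\to0$ as $t\to s$: by the chain rule plus a first-order Taylor expansion of $a$ along the segment from $\tilde L_sX$ to $\tilde L_tX$, each extra $X$-derivative supplied by the expansion gains a factor $\aver{X}^{-1}$ (as $a\in S^m_{d-1}(g^i)$) that absorbs $|(\tilde L_t-\tilde L_s)X|\lesssim\|\tilde L_t-\tilde L_s\|\,\aver{X}$, while the weights $\aver{\cdot}$ and $w_{d-1}$ stay uniformly comparable along the segment (for $t$ near $s$), yielding a seminorm bound $\lesssim\|\tilde L_t-\tilde L_s\|$. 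Hence $\|H_t-H_s\|_{\mL(\mH^m_{d-1},\mH^0_{d-1})}\to0$, and since by Lemma~\ref{lem:resiso} (and the elliptic a priori estimate of Section~\ref{sec:notation}) the resolvents $(z-H_t)^{-1}$ map $\mH^0_{d-1}$ into $\mH^m_{d-1}$ with norm polynomial in $|z|$ and $|{\rm Im}\,z|^{-1}$, we get $\|(z-H_t)^{-1}-(z-H_s)^{-1}\|_{\mL(\mH^0_{d-1})}\to0$ for each nonreal $z$. Feeding this into the Helffer--Sj\"ostrand formula \eqref{eq:hs} for the almost analytic extension $\tilde W$ of $W$,
\[
  W(H_t)-W(H_s)=-\frac1\pi\dint_{\Cm}\bar\partial\tilde W(z)\,(z-H_t)^{-1}(H_t-H_s)(z-H_s)^{-1}\,d^2z,
\]
and using $|\bar\partial\tilde W(z)|\leq C_N|{\rm Im}\,z|^N$ for all $N$ with $\tilde W$ compactly supported to control the $z$-integrand, one obtains $\|W(H_t)-W(H_s)\|\leq C\,\|H_t-H_s\|_{\mL(\mH^m_{d-1},\mH^0_{d-1})}\to0$. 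This gives the norm continuity of $\tilde T_t$ and hence the constancy of $\ind T_t$.

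The bookkeeping in the first two steps is routine once the isotropic metric is in force. I expect the main obstacle to be the last step: showing that $a(L_t\,\cdot)-a(L_s\,\cdot)$ tends to $0$ in the $S^m_{d-1}(g^i)$ \emph{topology}, not merely pointwise — this is what the Taylor-with-weight-gain estimate handles, and it is what makes $H_t\to H_s$ in $\mL(\mH^m_{d-1},\mH^0_{d-1})$ — and checking that the Helffer--Sj\"ostrand integral still converges despite the resolvent bounds into $\mH^m_{d-1}$ growing in $|z|$, which is ensured by the infinite-order vanishing of $\bar\partial\tilde W$ on the real axis.
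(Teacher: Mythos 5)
Your proposal is correct and follows essentially the same route as the paper: norm continuity of $U(H_t)$ via the Helffer--Sj\"ostrand resolvent-difference formula, with the decisive point being that the symbol difference $a(L_t\cdot)-a(L_s\cdot)$ is small in the $S(M_{d-1},g^i)$ seminorms because each derivative gains $\aver{X}^{-1}$, absorbing the linear growth of $(L_t-L_s)X$ (exactly the reason the isotropic metric is needed). The only difference is bookkeeping: the paper tames the unbounded difference by composing with $(i-H_s)^{-1}$ and bounding the Weyl product $(a_t-a_s)\sharp r_s$ on $L^2$, whereas you measure $H_t-H_s$ from $\mH^m_{d-1}$ to $\mH^0_{d-1}$ and use the resolvent's $\mH^0\to\mH^m$ mapping bound, which is an equivalent argument.
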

\begin{proof}
We compute
\[
  U(H_t)-U(H_s)=W(H_t)-W(H_s)=-\frac1\pi \dint_{\Cm} \bar\partial \tilde W(z) (z-H_t)^{-1} (H_t-H_s)(z-H_s)^{-1} d^2z.
\]
We use $(z-H_s)^{-1}=(i-H_s)^{-1}(I+(i-z)(z-H_s)^{-1})$ and $\|(z-H_s)^{-1}\|_{{\cal L}(L^2)}\leq |{\rm Im} z|^{-1}$. We choose $|\bar\partial \tilde W(z)|\lesssim |{\rm Im} z|^{2}$. We know that $(i-H_s)^{-1}= \ow r_{s}$ with $r_{s}\in S(M_{d-1}^{-1},g^i)$ uniformly in $s\in[0,1]$ so that $(H_t-H_s) (i-H_s)^{-1}=\ow (a_t-a_s)\sharp r_s$.
Then, from \cite[(18)]{bony1994espaces} and composition calculus, there exists a seminorm $k$ independent of $t$ such that
\[
  \|(H_t-H_s) (i-H_s)^{-1}\| \lesssim \|a(t,X)-a(s,X)\|_{k;S(M_{d-1},g^i)} \|r_{s}\|_{k;S(M_{d-1}^{-1},g^i)} 
\]
which is bounded by $C\|a(t,X)-a(s,X)\|_{k;S(M_{d-1},g)}$. This involves contributions of the form of powers of $(L_t-L_s)X$ times derivatives of $a$ by chain rule.  We thus obtain terms of the form $X_i\partial_{X_j}a$ (with $X_j\not\in\{x_d,\xi_d\}$), which are operations that are stable from $S(M_{d-1},g^i)$ to itself provided that $g=g^i$. Note that the vector field (implementing rotation in the variables $(x_j,\xi_j)$) $\xi_j\partial_{x_j}-x_j\partial_{\xi_j}$ does not preserve $S(M_{d-1},g^s)$ and hence the importance of working with symbols in the smaller isotropic class.

Higher-order derivatives are bounded in the same way, allowing us to obtain that  $C\|a(t,X)-a(s,X)\|_{k;S(M_{d-1};g)}$ is bounded by a constant times $|t-s|$. Therefore, $U(H_t)-U(H_s)$ is small in the uniform sense for small $(t-s)$ so that the index of $T_t$ is continuous in $t$ and hence independent of $t\in[0,1]$.
\end{proof}
The above result states in particular that for $a_{d-1}\in ES_{d-1}(g^i)$, then the index of $T$ is independent of any rescaling $Y_j\to\lambda Y_j$ for $\lambda>0$ (leaving all other variables fixed) as well as any rotation in the phase space variables mapping $(Y_j,Y_k)$ to $(Y_k,-Y_j)$ (note the sign change to preserve orientation). We will use the above lemma only for such transformations (dilations and permutations).

We could show similarly that $\sigma_I$ is independent of changes in $\varphi$ although this property will automatically come from \eqref{eq:tcc} proved in the next section.  However, we need the following straightforward result.
\begin{lemma}\label{lem:contind}
  Let $g\in \{g^s,g^i\}$. For $t\in [0,1]$, let $t\mapsto a_t\in ES_{d-1}^m(g)$ be a continuous path of elliptic symbols. Then the indices of the corresponding Fredholm operators $F_t=\ow (a_t-im_d)$ and $T_t=\tilde P U(\ow a_t)\tilde P_{|{\rm Ran}\tilde P}$ are independent of $t\in [0,1]$. 
\end{lemma}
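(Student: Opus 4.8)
The plan is to deduce both statements from the elementary fact that the Fredholm index is locally constant along a norm-continuous path of Fredholm operators between two fixed Hilbert spaces (e.g., \cite[Theorem 19.1.10]{H-III-SP-94}), together with the connectedness of $[0,1]$; the two continuity estimates that make this work are exactly of the kind already carried out in the proofs of Lemmas \ref{lem:conttf} and \ref{lem:contiso}.

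For $F_t=\ow(a_t-im_d)$ I would argue as follows. Each $F_t$ is Fredholm from $\tilde\mH^m$ to $\tilde\mH^0$: since $a_t\in ES^m_{d-1}(g)$ is Hermitian, Lemma \ref{lem:ellj} gives $a_t-im_d\in E\tilde S^m(g)$ (one has $|a_t-im_d|^2=a_t^2+m_d^2$), and Lemma \ref{lem:fredholm} applies. The domain-wall term is independent of $t$, so $F_t-F_s=\ow(a_t-a_s)$; using $w_{d-1}\le w_d$, hence $S^m_{d-1}(g)\subset\tilde S^m(g)$, and the boundedness of the Weyl quantization $\tilde S^m(g)\ni b\mapsto\ow b\in\mathcal L(\tilde\mH^m,\tilde\mH^0)$ with norm controlled by finitely many symbol seminorms (as in the proof of Lemma \ref{lem:conttf}, via \cite[(18)]{bony1994espaces} and composition calculus), continuity of $t\mapsto a_t$ in $S^m_{d-1}(g)$ yields continuity of $t\mapsto F_t$ in $\mathcal L(\tilde\mH^m,\tilde\mH^0)$. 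Local constancy of the index then forces $\ind F_t$ to be independent of $t\in[0,1]$.

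For $T_t$, write $H_t=\ow a_t$ and $\tilde T_t:=\tilde P U(H_t)\tilde P+I-\tilde P$. Each $\tilde T_t$ is Fredholm on $\tilde\mH^0$ by Lemma \ref{lem:indexT}, so it suffices to show $t\mapsto\tilde T_t$ is continuous in $\mathcal L(\tilde\mH^0)$, i.e., that $t\mapsto U(H_t)$ is. This is the estimate from the proof of Lemma \ref{lem:conttf}: by the Helffer--Sj\"ostrand formula \eqref{eq:hs} for $W$,
\[
 U(H_t)-U(H_s)=-\frac1\pi\int_{\Cm}\bar\partial\tilde W(z)\,(z-H_t)^{-1}(H_t-H_s)(z-H_s)^{-1}\,d^2z,
\]
and, splitting $(z-H_s)^{-1}=(i-H_s)^{-1}\bigl(I+(i-z)(z-H_s)^{-1}\bigr)$, using $\|(z-H_s)^{-1}\|_{\mathcal L(L^2)}\le|\Im z|^{-1}$ and an almost analytic extension with $|\bar\partial\tilde W(z)|\lesssim|\Im z|^2$, the integrand is controlled by $\|(H_t-H_s)(i-H_s)^{-1}\|_{\mathcal L(L^2)}$. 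By Lemma \ref{lem:resiso}, $(i-H_s)^{-1}=\ow r_s$ with $r_s\in S^{-m}_{d-1}(g)$; since $s\mapsto a_s$ is continuous, its image in the symbol space is compact, so the relevant seminorms of $a_s$, and therefore (by the Wiener property invoked in Lemma \ref{lem:resiso}) of $r_s$, are bounded uniformly in $s\in[0,1]$. Composition calculus then gives $\|(H_t-H_s)(i-H_s)^{-1}\|_{\mathcal L(L^2)}\lesssim\|a_t-a_s\|_{k;S^m_{d-1}(g)}$ for a seminorm $k$ independent of $s,t$, whence $\|U(H_t)-U(H_s)\|_{\mathcal L(L^2)}\to0$ as $s\to t$. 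Combining norm-continuity of $t\mapsto\tilde T_t$ with the fact that each $\tilde T_t$ is Fredholm, \cite[Theorem 19.1.10]{H-III-SP-94} and connectedness of $[0,1]$ give that $\ind T_t$ is constant.

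I do not expect a genuinely new obstacle here: the lemma is essentially a repackaging of the continuity arguments of Lemmas \ref{lem:conttf} and \ref{lem:contiso}. The only point that needs care is the uniformity of the symbol-space bounds (on $a_s$, and through the Wiener property on the resolvent symbols $r_s$) along the path, and this is automatic because a continuous map from $[0,1]$ into the Fr\'echet symbol space $S^m_{d-1}(g)$ has compact image, so every defining seminorm is uniformly bounded along the path.
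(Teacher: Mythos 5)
Your proposal is correct and takes essentially the same route as the paper: the paper's own proof is just the assertion that $\Lambda^{-m}F_t$ and $T_t$ are norm-continuous in $t$ (as operators into $\tilde\mH^0$) so their indices are constant, which is exactly what you establish in detail. The continuity of $U(H_t)$ via the Helffer--Sj\"ostrand estimate and the uniform symbol bounds along the path are the same ingredients the paper already uses in the proof of Lemma \ref{lem:conttf}, so no new idea is needed or missing.
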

\begin{proof}
  By assumption and construction, $\Lambda^{-m}F_t$ and $T_t$ are continuous in time as operators from $\tilde H^0$ to  itself. Their indices are therefore constant in time.
\end{proof}
We apply the preceding lemma to $a_1=a(x_{d-1}',x_d,\xi)$ and $a_0=a(x_{d-1}',0,\xi)$ while $a_t=ta_1+(1-t)a_0$.  The path of symbols belongs to $ES_{d-1}^m$ so that the indices of the respective operators are defined with clear continuity in $t$. This allows us to replace an $x_d-$dependent elliptic symbol $a_{d-1}$ by an $x_d-$independent one, which is used below in the proof of the topological charge conservation.  

%
%
\section{Topological Charge Conservation}
\label{sec:tcc}
This section shows that the two classifications in Theorems \ref{thm:FH} and \ref{thm:cond} are in fact the same since we have the following topological charge conservation:
\begin{theorem} \label{thm:tcc}
  Let $a\in E\tilde S^m$. Then we have 
\begin{equation}\label{eq:tcc}
  2\pi \sigma_I  =  - \dfrac{(d-1)!}{(2\pi i)^d (2d-1)!} \dint_{\Sm_R^{2d-1}} {\rm tr} (a^{-1}da)^{2d-1} = {\rm Index}\  F.
\end{equation}
\end{theorem}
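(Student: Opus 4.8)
By Theorem~\ref{thm:FH} the middle member of \eqref{eq:tcc} equals $\ind F$, and by Theorem~\ref{thm:cond} the left member $2\pi\sigma_I$ equals $\ind T$, where $T=\tilde PU(H)\tilde P_{|{\rm Ran}\tilde P}$ and $H=H_{d-1}$. The only remaining content is therefore the identity $\ind T=\ind F$, a bulk--interface correspondence relating an $x_d$-Toeplitz index to a Callias-type index in the direction $x_d$. The plan is to realize both integers as the spectral flow of the family of confined operators obtained by freezing the dual variable $\xi_d$. First I would put the symbol in normal form: by Lemma~\ref{lem:contiso} replace $a_{d-1}$ by an isotropic symbol in $ES^m_{d-1}(g^{\rm i})$; by Lemma~\ref{lem:contind} applied to the homotopy $t\,a_{d-1}+(1-t)\,a_{d-1}|_{x_d=0}$ assume in addition that $a_{d-1}$ is independent of $x_d$; and by Lemma~\ref{lem:conttf} rescale the remaining phase-space variables at will. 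None of $\ind F$, $\ind T$, or the integral in \eqref{eq:FH} is affected.

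Next I would take a partial Fourier transform in the scalar variable $x_d$. Since $a_{d-1}=a_{d-1}(x'_{d-1},\xi'_{d-1},\xi_d)$ is now $x_d$-independent, $H_{d-1}$ becomes the decomposable operator $\int^{\oplus}_{\Rm}H(\xi_d)\,d\xi_d$ on $L^2(\Rm_{\xi_d};\mathcal H)$, $\mathcal H:=L^2(\Rm^{d-1})\otimes\Cm^{n_{d-1}}$, where $H(\xi_d)$ is the Weyl quantization in $(x'_{d-1},\xi'_{d-1})$ of $a_{d-1}(\cdot,\cdot,\xi_d)$. The ellipticity bound \eqref{eq:ellip} for the weight $w_{d-1}=\aver{x'_{d-1},\xi}$ forces $H(\xi_d)$ to be elliptic and confining in \emph{all} $2(d-1)$ of its phase-space variables, hence self-adjoint with compact resolvent, with all eigenvalues of size $\gtrsim\aver{\xi_d}^m$; thus $H(\xi_d)$ is invertible for $|\xi_d|\ge R_0$, and for such $\xi_d$ one has $\varphi(H(\xi_d))=\Pi_+(\xi_d)$ (spectral projector onto the positive part) and $U(H(\xi_d))=e^{2\pi i\Pi_+(\xi_d)}=I$. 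Under the same transform, $F=H_{d-1}-im_d(x_d)$ becomes an operator of the type $\mathcal A\,\partial_{\xi_d}+H(\xi_d)$ with $\mathcal A$ a positive invertible Fourier multiplier in $\xi_d$ ($\mathcal A=I$ when $m=1$) --- a suspended/Callias operator with invertible endpoints $H(\pm\infty)$ --- while $T$ becomes the Wiener--Hopf operator on $L^2(\Rm_{\xi_d};\mathcal H)$ with operator-valued symbol $\xi_d\mapsto U(H(\xi_d))$, which is smooth, equals $I$ outside a compact set, and differs from $I$ by a trace-class operator for each $\xi_d$ (Lemma~\ref{lem:indexT}).

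I would then compute both indices as $\pm{\rm sf}\big(\{H(\xi_d)\}_{\xi_d\in\Rm}\big)$. For $F$, the index of a suspended operator with invertible endpoints is the spectral flow (Atiyah--Patodi--Singer / Robbin--Salamon), which the positive factor $\mathcal A$ does not change: in the eigendecomposition $H(\xi_d)e_j(\xi_d)=\lambda_j(\xi_d)e_j(\xi_d)$, a mode $\sum_j c_j(\xi_d)e_j(\xi_d)$ with $c_j'=-\lambda_j c_j$ lies in $L^2(\Rm_{\xi_d};\mathcal H)$ exactly when the branch $\lambda_j$ changes sign from $-$ to $+$ for $\ker F$, and from $+$ to $-$ for $\ker F^*$. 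For $T$, by Lemma~\ref{lem:indexT} and the Gohberg--Krein index formula for Wiener--Hopf operators with operator-valued symbol that is a trace-class perturbation of $I$ (\cite[Thms.\ 4.1\&5.3]{AVRON1994220}), $\ind T$ equals, up to the orientation sign, the winding number $\frac1{2\pi i}\int_{\Rm}\Tr_{\mathcal H}\big(U^*\partial_{\xi_d}U\big)\,d\xi_d$; and $\Tr_{\mathcal H}\big(U^*\partial_{\xi_d}U\big)=2\pi i\,\Tr_{\mathcal H}\partial_{\xi_d}\varphi(H(\xi_d))$ modulo traces of commutators that vanish, so this winding number counts the net number of eigenvalue branches of $H(\xi_d)$ crossing ${\rm supp}\,\varphi'$ --- the same spectral flow. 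Fixing the orientation conventions (the sign in \eqref{eq:FH}, the orientation $d\xi_1\wedge dx_1\wedge\cdots\wedge d\xi_d\wedge dx_d>0$, and the half-line used to define $\tilde P$) once on the one-dimensional model $H_0=D_x$, $F=D_x-ix$ --- where all three members of \eqref{eq:tcc} equal $1$ --- gives $\ind F={\rm sf}\big(\{H(\xi_d)\}\big)=\ind T=2\pi\sigma_I$, and the middle member of \eqref{eq:tcc} equals $\ind F$ by Theorem~\ref{thm:FH}.

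\noindent\textbf{Expected main obstacle.} The spectral-flow bookkeeping is routine; the real difficulty is reconciling the two operator-theoretic frameworks. One must check that $F=\ow a$ as a Fredholm map between the H\"ormander--Bony spaces $\tilde\mH^m\to\tilde\mH^0$ of Section~\ref{sec:notation} has the same index as the ``bare'' suspended operator on $L^2(\Rm_{\xi_d};\mathcal H)$ (index stability under the change of domains, using Lemma~\ref{lem:fredholm}); that $\xi_d\mapsto H(\xi_d)$ is continuous in the gap topology with only finitely many, transversal spectral crossings after an arbitrarily small endpoint-preserving perturbation, so that ${\rm sf}$ is well defined and equals $\ind F$; that the general confining term $m_d(x_d)=\aver{x_d}^{m-1}x_d$ (which for $m\ge2$ cannot be homotoped to $x_d$ while staying in $E\tilde S^m$) may be handled via the invertible factor $\mathcal A$ without changing the index; and that the trace manipulations equating the winding number with the spectral flow are legitimate. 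All of these rest on the uniform symbol estimates recalled in Section~\ref{sec:notation} (cf.\ Remark~\ref{rem:unifbd}), and keeping track of signs through the partial Fourier transform is the other hazard.
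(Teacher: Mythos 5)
Your opening reductions (Lemmas \ref{lem:contiso} and \ref{lem:contind} to make the symbol isotropic and $x_d$-independent, then the partial Fourier transform in $x_d$ producing the direct-integral family $\hat H[\xi_d]$ with compact resolvent, invertible for $|\xi_d|$ large) coincide with the opening of the paper's proof, and using Theorems \ref{thm:FH} and \ref{thm:cond} you correctly isolate the remaining content as $\ind T=\ind F$. But precisely that step is where your proposal has a genuine gap rather than a proof. The identity $\ind F=\mathrm{sf}(\{\hat H[\xi_d]\})$ is justified only by a mode-decoupling heuristic that is not valid: kernel elements of the suspended operator are not of the form $\sum_j c_j(\xi_d)e_j(\xi_d)$ with $c_j'=-\lambda_jc_j$, because the eigenframe $e_j(\xi_d)$ rotates with $\xi_d$; and for $m\ge2$ the factor $\mathcal A$ (the conjugate of multiplication by $\aver{x_d}^{m-1}$) is a nonlocal unbounded operator in $\xi_d$ that does not commute with $\hat H[\xi_d]$, so neither Robbin--Salamon nor a standard Callias-type index theorem applies off the shelf, and nothing in the paper or in the references you cite supplies an index-equals-spectral-flow theorem for $\mathcal A\partial_{\xi_d}+\hat H(\xi_d)$ acting between the Bony spaces $\tilde\mH^m\to\tilde\mH^0$. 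Likewise, the passage from $\ind T$ to the operator-valued winding number $\frac1{2\pi i}\int_{\Rm}\Tr_{\mathcal H}(U^*\partial_{\xi_d}U)\,d\xi_d$, and from there to the same spectral flow, needs fiberwise trace-class and regularity estimates plus a transversality/perturbation argument that you only list among ``expected obstacles''; note that Lemma \ref{lem:indexT} is a statement about the full operator $[U(H),\tilde P]$ on $L^2(\Rm^d)$, not about the fibers. Since these items are the entire mathematical content of the theorem beyond Theorems \ref{thm:FH} and \ref{thm:cond}, what you have is a plausible program, not a proof.

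For comparison, the paper avoids spectral flow altogether. After the same reductions it writes $2\pi\sigma_I=\Tr\int_{\Rm}\partial_\xi\hat H\,\varphi'(\hat H[\xi])\,d\xi$, passes to the semiclassical rescaling $\zeta\to h\zeta$ (with $h$-independence of $\sigma_I$ guaranteed by Lemma \ref{lem:conttf}), expands the resolvent symbol inside the Helffer--Sj\"ostrand representation \eqref{eq:hs}, uses a dilation argument to show that only terms carrying exactly one derivative in each phase-space variable survive the trace integral, and finally applies Stokes' theorem to recognize the Fedosov--H\"ormander integral of \eqref{eq:FH}; the equality with $\ind F$ is then Theorem \ref{thm:FH}, and $\ind T$ enters only through Theorem \ref{thm:cond}. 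If you wish to pursue your route, the honest workload is to prove the suspension/spectral-flow theorem in this weighted, nonlocal setting; that is comparable in effort to the paper's direct computation.
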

%
%
\begin{proof}
The first step is to continuously deform $a_{d-1}(x_{d-1}',x_d,\xi)$ to $a_{d-1}(x_{d-1}',0,\xi)$ using Lemma \ref{lem:contind} and the paragraph that follows it. Note that all terms in \eqref{eq:tcc} are stable under this change of symbols.  We next use the approximation of a symbol in $a_{d-1}\in ES_{d-1}^m =ES_{d-1}^m(g^s)$ by $a_{d-1}\in ES_{d-1}^m(g^i)$ using Lemma \ref{lem:contiso}. Note that again, all terms in \eqref{eq:tcc} are stable under this change of symbols since both symbols agree on the support of $\Sm_R^{2d-1}$. To simplify, we change notation to $(y,x)=(x_{d-1}',x_d)$ and to $(\zeta,\xi)=\xi$ with the new $\zeta ,y \in \Rm^{d-1}$ and $\xi ,x\in\Rm$.  The symbol $a_{d-1}=a_{d-1}(y,\zeta,\xi)$. 

We finally replace $a(X)$ by $a_{d-1}(y,\zeta,\xi)-ix$, which has no effect on the definition of $\sigma_I$. We observe that the integral in \eqref{eq:FH} recalled in \eqref{eq:tcc} is also not modified. The reason is that $a$ maps a sphere to $GL(n_{d-1};\Cm)$ and that the winding number of such functions classifies them and is certainly invariant by continuous deformation \cite{bott1978some}. Such a continuous deformation is based on the path $tx_d+(1-t)\aver{x_d}^{m-1}x_d$ with corresponding mapping $a_t$ in $GL(n_{d-1};\Cm)$ for all $t$ and hence continuity of the winding number. For the same reason, we may replace $a$ by $a_{d-1}(y,\zeta,\xi)+\alpha-ix$ for any $\alpha\in\Rm$ by continuity in $\alpha$.

Since $a_{d-1}$ does not depend on $x$, we introduce the partial spectral decomposition $\hat H=\hat H[\xi]$ such that 
\[
  H = \mF_{\xi\to x}^{-1} \dint_{\Rm}^{\oplus} \hat H[\xi] d\xi \ \mF_{x\to\xi},
\qquad \hat H[\xi]=\ow a_{d-1}(Y,\xi)
\]
with $Y=(y,\zeta)$ and a Weyl quantization in the variables $Y$ for each parameter $\xi\in\Rm$.
We thus obtain, using the trace-class properties of Lemma \ref{lem:indexT} providing traces as integrals of Schwartz kernels along diagonals, that
\[\begin{array}{rcl}
 2\pi\sigma_I &=&2\pi  {\rm Tr}_y \dint_{\Rm} i[H,P] (x,x') \varphi'(H)(x',x) dx'dx \\
  &=&  2\pi i {\rm Tr}_y \dint_{\Rm}  - x'H(x') \varphi'(H)(x'-x) dx' = {\rm Tr}_y \dint_{\Rm} \partial_\xi \hat H \varphi'(\hat H[\xi]) d\xi.
\end{array}\]
Here, ${\rm Tr}_y$ denotes the integration in all variables but $x=x_d$ (using Fubini), $H(x-x')$ is the dependence in $(x,x')$ of the Schwartz kernel of the operator $H$ and we used 
\[
  \phi(x') := \dint_{\Rm} (P(x-x')-P(x))dx =-x' \qquad \mbox{ since } \qquad \partial_{x'} \phi(x') =-\dint_{\Rm}  P'(x-x')dx=-1.
\]
Following \cite{bal2022topological}, we introduce the complex variable  $\Cm\ni z=\lambda+i\omega$ and identify the spatial variable $x=x_d$ with the imaginary part $\omega$. The dual variable $\xi$ is considered as another parameter and pseudo-differential operators and semiclassical operators are now defined in the variables $Y=(y,\zeta)$. We denote by $\sigma_z(y,\zeta,\xi)=z-a_{d-1}(y,\zeta,\xi)=\lambda-a(y,\omega,\zeta,\xi)$ the symbol of $z-\hat H[\xi]$. 

We now introduce the semiclassical parameter $0 <h\leq 1$ and the operator $\hat H_h$ with symbol $a_{d-1}(y,h\zeta,\xi)$. Using semiclassical notation (in the phase-space variable $Y$), we thus observe that $z-\hat H_h= \ow_h \sigma_z$. Using Lemma \ref{lem:resiso}, we know that $ (z-\hat H_h)^{-1}=\ow_h r_z$ is a PDO with semiclassical symbol $r_z(y,\zeta,\xi;h)$. Note that the latter term has a complicated dependence on $h$ that  will be made explicit asymptotically using that $r_z\sharp_h\sigma_z=I$. We know from Lemma \ref{lem:conttf} that $\sigma_I(H_h)$ is independent of $0 < h\leq 1$. We may therefore compute it for  $0<h\leq h_0$ with $h_0$ sufficiently small based on terms in the above integral that are independent of $h$ as $h\to0$.

We know from Lemma \ref{lem:fccalc} (or from the semiclassical version of the functional calculus \cite[Theorem 8.7]{dimassi1999spectral} for $h_0$ small enough) that $\varphi'(\hat H)$ is a PDO and define $s$ such that $\partial_\xi \hat H \varphi'(\hat H)=\ow s$. We therefore obtain from Lemma \ref{lem:indexT} that
\[
  2\pi\sigma_I = {\rm Tr}_{y} \dint_{\Rm} \partial_\xi \hat H \varphi'(\hat H) d\xi =\dfrac{1}{(2\pi)^{d-1}} \dint_{\Rm^{2d-1}}  {\rm tr}\  s(Y,\xi) dY d\xi.
\]
Passing to the semiclassical regime $\zeta\to h\zeta$ with $\partial_\xi \hat H_h \varphi'(\hat H_h)=\ow_h s$ with now $s(y,\zeta,\xi;h)$, we have
\[
   2\pi\sigma_I = {\rm Tr}_{y} \dint_{\Rm} \partial_\xi \hat H_h \varphi'(\hat H_h) d\xi = \frac{1}{(2\pi h)^{d-1}}\dint_{\Rm^{2d-1}} {\rm tr}\  s(Y,\xi;h) dYd\xi.
\]
However, we have the following bound coming from the functional calculus Lemma \ref{lem:fccalc} (or \cite[Theorem 8.7]{dimassi1999spectral} for $h_0$ small enough)
\[
  |\partial^\beta s(Y,\xi;h)| \leq C_{N,\beta} \aver{Y,\xi}^{-N}
\]
uniformly in $0<h\leq h_0$. Moreover, we have from semiclassical calculus \cite{dimassi1999spectral,zworski2012semiclassical} the decomposition
\[
  s(Y,\xi;h) = \dsum_{j=0}^M h^j s_j(Y,\xi) + h^{M+1} \rho_M(Y,\xi;h)
\]
with $\rho_M$ bounded in $S^0(\aver{Y,\xi}^{-N})$ for any $N$ (uniformly in $0<h\leq h_0$). We thus observe that
\[
    2\pi\sigma_I = \frac{1}{(2\pi )^{d-1}}\dint_{\Rm^{2d-1}} {\rm tr}\   s_{d-1}(Y,\xi) dY d\xi = \lim_{R\to\infty}  \frac{1}{(2\pi )^{d-1}}\dint_{[-R,R]^{2d-1}}  {\rm tr}\  s_{d-1}(Y,\xi) dYd\xi.
\]
It thus remains to identify the term of order $d-1$ in the expansion in powers of $h$ of $s(h)$. Let $\varsigma$ be such that $\varphi'(\hat H_h)=\ow_h\varsigma$ while $\partial_\xi \hat H_h=-\partial_\xi \sigma_z$. Then 
\[
  s=-\partial_\xi \sigma_z \sharp_h \varsigma,\qquad \varsigma(Y,\xi;h) = -\frac1\pi \dint_{\Cm} \bar\partial \tilde\varphi' r_z(Y,\xi;h) d^2z
\]
with as we recall $(z-\hat H_h)^{-1} = \ow_h r_z$ using the Helffer-Sj\tio strand formula \eqref{eq:hs}.

We can replace the integral over $\Cm$ by an integral over a compact domain $Z$ since $\tilde\varphi'$ may be chosen compactly supported. We then define $Z_\delta=Z\cap \{|{\rm Im}z|\geq\delta\}$ and $z=\lambda+i\omega$. Since $r_z(Y;h)$ is controlled as $\omega\to0$ uniformly in $[-R,R]^{2d-1}$ by \eqref{eq:resbd} and $\tilde\varphi'$ is of order $\omega^N$ for any $N$ for $|\omega|$ small, we can choose $\delta=h^\gamma$ for any $\gamma>0$ and observe that up to an error of order $h^N$ for any $N$, we can replace $Z$ by $Z_\delta$ in the calculation of the trace.

The symbol $r_z$ also admits the expansion
\[
  r_z(Y,\xi;h) = \dsum_{j=0}^N r_{jz}(Y,\xi) h^j + \rho_{z}(Y,\xi;h) h^{N+1}
\]
with $h^{N+1} \rho_z$ uniformly bounded by $h^{d+1}$, say, on $[-R,R]^{2d-1}\times Z_\delta$ for $h$ small.

In the integral of $s_{d-1}$, it is therefore sufficient to look at the terms of order $d-1$ 
\[
  (-\partial_\xi\sigma_z \sharp_h r_z)_{d-1} = \dsum _{j=0}^{d-1} (-\partial_\xi\sigma_z \sharp_h r_{jz})_{d-1-j}.
\]
We recall that the Moyal product $\sharp_h$ in \eqref{eq:sharph} is in the variables $Y=(y,\zeta)$ with $z$ and $\xi$ as parameters.

The next step is to write $r_{jz}(Y,\xi)$ for each $j\geq0$ in terms of $\sigma$ and $\sigma^{-1}$ as well as their derivatives. 
Note that $\sigma_z \sharp_h r_z=r_z \sharp_h \sigma_z=I$. We have in generic variables 
\[
  a\sharp_h b (x,\xi) =  \Big( e^{i\frac h2(\partial_x\cdot\partial_\zeta - \partial_y\cdot\partial_\xi)} a(x,\xi) b(y,\zeta)\Big)_{|y=x;\zeta=\xi}  = : \dsum_{j\geq0} \frac{1}{j!} (\frac{-ih}2)^j \{ a,b\}^j (x,\xi).
\]
We defined
\[
   \{a,b\}^j(x,\xi) := \Big( (\partial_\xi \cdot \partial_y - \partial_x \cdot \partial_\zeta)^j  a(x,\xi) b(y,\zeta)\Big)_{|y=x;\zeta=\xi}.
\]
For $j=1$, this is $\{a,b\}$ the standard Poisson bracket. Thus,
\[
  I = r_z \sharp_h \sigma_z = \dsum_{j\geq0} h^j  r_{zj}  \sharp_h\sigma_z 
   = \dsum_{j,k} h^{j+k} \dfrac{1}{k!} \big( \frac{-i}2\big)^k  \{r_{zj}, \sigma_z\}^k.
\]
We thus obtain from $\sigma_z \sharp_h r_z=r_z \sharp_h \sigma_z=I$ the equations
\[
  \dsum_{j+k=l} \frac{1}{k!} \big(\frac{- i}2\big)^k \{r_{zj},\sigma_z\}^k = \delta_{l0},\qquad l\geq0.
\]
The leading equation is $r_{z0}=\sigma_z^{-1}$, which is defined for $\omega\not=0$. Then higher-order equation can be solved iteratively for $r_{zj}$. The next two equations are for instance
\[
  -\frac i2 \{r_{z0},\sigma_z\} +  r_{z1}\sigma_z=0,\qquad
   -\frac18 \{r_{z0},\sigma_z\}^2- \frac i2 \{ r_{z1},\sigma_z\} +r_{z2} \sigma_z =0.
\]
We thus observe iteratively that $r_{zj}(Y,\xi)$ is a product of a maximum of $2j+1$ terms alternating a derivative (possibly of order $0$) of $\sigma_z$ with one  (possibly of order $0$) of $\sigma_z^{-1}$. 

The same property holds for 
\begin{align}\label{eq:rzjk}
\partial_\xi \sigma_z \sharp_h r_z = \dsum_{j,k} h^{j+k} \dfrac{1}{k!} \big( \frac{-i}2\big)^k \{ \partial_\xi \sigma_z,r_{zj}\}^k,
\end{align}
and more specifically for the terms of interest $j+k=d-1$ in the calculation of the trace. 

Let us now rescale $\tau\to \eta \tau$ for $\tau$ one variable in $(Y,\xi)\to (Y_\eta,\xi_\eta)$. By lemma \ref{lem:conttf}, the trace is independent of $\eta$ when $\hat H$ now has symbol $a_{d-1}(Y_\eta,\xi_\eta)$. Let $s_{d-1}(Y,\xi;\eta)$ be the corresponding symbol appearing in the trace calculation. We thus have
\[
  \dint  {\rm tr}\   s_{d-1}(Y,\xi;\eta) dY d\xi = \eta  \dint   {\rm tr}\   s_{d-1}(Y,\xi;1) dY d\xi .
\]
From the above considerations, we obtain
\[
  s_{d-1}(Y,\xi;\eta) = \prod_{j=1}^J \partial^{\alpha_j} \sigma^{\beta_j} (Y_\eta,\xi_\eta) = \eta^\gamma \prod_{j=1}^J (\partial^{\alpha_j} \sigma^{\beta_j})(Y_\eta,\xi_\eta)
\]
where $\beta_j\in \{-1,1\}$ and $\gamma$ is the number of derivatives in the variable $\tau$ that appear in $s_{d-1}$. Integrating the latter expression over $\Rm^{2d-1}$ and changing variables $(Y_\eta,\xi_\eta)\to (Y,\xi)$ shows that necessarily $\gamma=1$ in order for $2\pi \sigma_I$ to be independent of $\eta$. 

This implies that exactly one differentiation in each of the variables $(Y,\xi)$ appears in the terms that contribute to the integral of $s_{d-1}$.  Therefore, $\partial_\xi\sigma_z$ is the only term involving a derivative in $\xi$ and $r_{jz}$ contributes exactly one derivative in each of the components of $Y$. This implies that $j=d-1$ and $k=0$ is the only contributing term to the trace in \eqref{eq:rzjk}. The contributing terms to the trace are therefore a subset of $h^{d-1}  \partial_\xi \sigma_z r_{z,d-1}$. 

Since $\{\cdot,\cdot\}^j$ applies $j$ derivatives,  only $j=1$ contributes to the trace integral.  We thus obtain
\[
  r_{zj} = \frac i2 \{ r_{z,j-1},\sigma_z\} \sigma_z^{-1} \ +\  {\rm non\ contributing\ terms\ to\ the\ trace\ integral}.
\]
Let $\tilde r_{zk}$ collect the above contributing terms. We obtain 
\[
  \tilde r_{zj} =  \frac i2 \{ \tilde r_{z,j-1},\sigma_z\} \sigma_z^{-1} = (\frac i2)^2 \{  \{ \tilde r_{z,j-2},\sigma_z\} \sigma_z^{-1},\sigma_z\} \sigma_z^{-1} = \big(\frac i2)^j \{\sigma_z^{-1},\sigma_z\}^j_f \sigma_z^{-1}
\]  
where $\{\sigma_z^{-1},\sigma_z\}^j_f$ is the subset of $\{\sigma_z^{-1},\sigma_z\}^j$ where differentiation in a pair $(y_k,\zeta_k)$ appears at most once.  Denote $c_j=(\frac i2)^j$.  We have for $k=d-1$ the expression
\[
   \{ \sigma_z^{-1},\sigma_z\} ^{d-1}_f  = \dsum_{\rho \in \mS_{d-1}}  \{ \sigma_z^{-1},\sigma_z\}_{\rho_1} \ldots \{ \sigma_z^{-1},\sigma_z\}_{\rho_{d-1}}
\]
where the sum is over $\rho\in \mS_{d-1}$ the set of permutations of $\{1,\ldots,d-1\}$. We thus obtain with $\R_h=[-h^{-1},h^{-1}]^{2d-1}$
\[
  2\pi \sigma_I = \lim_{h\to0} \frac{2c_{d-1}}{(2\pi)^d} \dint_{\R_h\times Z_\delta}  \bar\partial \tilde\varphi'(z) \  {\rm tr} \ \sigma_z^{-1}\partial_\xi \sigma_z \{ \sigma_z^{-1},\sigma_z\} ^{d-1}_f  d^2z dY d\xi
\]
by cyclicity of the trace. Note that $\delta=\delta(h)$. We denote $Z_{\delta\pm}=Z_\delta\cap \{\pm\omega>0\}$. Applying the Stokes theorem on each connected component of $Z_\delta$ realizing that $\sigma_z$ and $\sigma_z^{-1}$ are analytic in $z$, we find
\[
  2\pi \sigma_I = \lim_{h\to0} \frac{-i c_{d-1}}{(2\pi)^d} \dint_{\R_h\times \partial Z_{\delta\pm}}  \tilde{\varphi}'(z) \  {\rm tr} \ \sigma_z^{-1}\partial_\xi \sigma_z \{ \sigma_z^{-1},\sigma_z\} ^{d-1}_f  dzdY d\xi
\]
where $z=\lambda \pm \delta$ on the two disconnected components of $\partial Z_\delta$ where $\tilde\varphi'(z)$ does not vanish. Also, $dz$ is oriented in the counterclockwise direction. Note that $\tilde{\varphi'}(\lambda+i\delta)= \varphi'(\lambda) + O(h^N)$ for any $N>0$ by construction of the almost analytic extension for $\delta=h^\gamma$ for any $\gamma>0$. We will obtain that the above integrand is independent of $\lambda$ and recall that $\varphi'(\lambda)$ integrates to $1$. We recast the above as 
\[
   2\pi \sigma_I = \lim_{h\to0}   \dint_{\Rm} \varphi'(\lambda)  \sum_{\eps=\pm1} \eps \Phi_h(\lambda+ i\eps\delta) d\lambda;\
     \Phi_h(z)  =  \frac{-i c_{d-1}}{(2\pi)^d} \dint_{\R_h} {\rm tr} \ \sigma_z^{-1}\partial_\xi \sigma_z \{ \sigma_z^{-1},\sigma_z\} ^{d-1}_f  dY d\xi.
\]
The term $\eps$ in front of $\Phi_h$ indicates the orientation of the integral along the boundary.  We are therefore interested in computing the limit as $h\to0$ of $\Phi_h(\lambda+i\delta) -\Phi_h(\lambda-i\delta)$. 

The computation of the invariant involves that of $(\sigma_z^{-1}d\sigma_z)^{2d-1}$. For $\sigma=\sigma(x)$ in $n$ dimensions, we have
\[
   (\sigma^{-1}d\sigma)^n = \dsum_{\rho \in \mS_n}   (-1)^\rho\ \sigma^{-1} \partial_{\rho_1}\sigma \ldots  \sigma^{-1} \partial_{\rho_n}\sigma\  dx_1 \wedge \ldots \wedge dx_n,
\]
where  $(-1)^\rho=\epsilon_{\rho_1,\ldots,\rho_n}$ is the signature of the permutation $\rho: (1,\ldots,n)\to (\rho_1,\ldots \rho_n)$. By cyclicity of the trace, the term $\sigma_z^{-1}\partial_\xi\sigma_z$ can always be brought to the left of the product. However, $ \{ \sigma_z^{-1},\sigma_z\} ^{d-1}_f$ involves a summation over only specific permutations of the variables $Y$. It is where having a symbol in an isotropic class with $g=g^i$ is used. We have seen in Lemma \ref{lem:conttf} that any rotation in the variables $Y$ did not change $\sigma_I$. Therefore, any permutation of the variables in $Y$ with positive determinant leads to the same $\sigma_I$ and any permutation with negative determinant leads to $-\sigma_I$. 

Note that 
\[
   \{ \sigma_z^{-1},\sigma_z\} _j= -\sigma_z^{-1} \partial_{\zeta_j} \sigma_z \sigma_z^{-1} \partial_{y_j} \sigma_z +\sigma_z^{-1} \partial_{y_j} \sigma_z \sigma_z^{-1} \partial_{\zeta_j} \sigma_z .
\]
So, all terms of the form $\sigma_z^{-1} \partial_{y_j} \sigma_z \sigma_z^{-1} \partial_{\zeta_j} \sigma_z$ come with positive orientation while the terms with $(y_j,\zeta_j)$ permuted come with negative orientation.

Let $\sigma_I(Y)$ be computed as above. We then find that 
\[
  \sigma_I = \frac{1}{(2d-2)!} \dsum_{\rho\in\mS_{2d-2}} (-1)^\rho \sigma_I(\rho (Y))
\]
where summation is over all permutations of ${1,\ldots,2d-2}$ and $(-1)^\rho$ is the signature of the permutation. Combining the permutations generating $\{\cdot,\cdot\}^{d-1}_f$, we observe that each term $\prod_{j=1}^{2d-2}\sigma_z^{-1}\partial_{\rho_j(Y)}\sigma_z  (-1)^\rho$ appears $\gamma_d=2^{d-1}(d-1)!$ times, where $2^{d-1}$ comes from the difference of products in each Poisson bracket and $(d-1)!$ from the possible permutations of the variables.  Therefore we have $2\pi\sigma_I$ as above with now, using $c_{d-1}2^{d-1} = i^{d-1}$,
\[
   \Phi_h(z) 
    = \dfrac{-i^{d}}{(2\pi)^d}\dfrac{(d-1)!}{(2d-1)!} \dint_{\R_h}  {\rm tr}\ (\sigma_z^{-1}d \sigma_z)^{2d-1}.
\]
The term $(2d-1)^{-1}$ comes from the fact that we placed the variables $\xi$ first.

We recall that $z=\lambda +i\delta$ for $\delta>0$. Let us consider the hypersurfaces $\{\pm \delta\}\times \R_h$ in the variables $(\omega,\xi,Y)$. When $h$ is sufficiently small, $\sigma^{-1}$ is defined on the hypersurface $C_h:=(-\delta,\delta)\times\partial \R_h$. Moreover, $(\sigma_z^{-1}d \sigma_z)^{2d-1}$ is integrable over that cylinder with negligible integral as $h\to0$. The reason is that $d\sigma_z$ is small there by assumption of $\sigma_z$ in an isotropic class. 
Let us define $S_h=\cup \{\pm \delta\}\times \R_h \cup C_h$ a closed hypersurface. Then, to leading order, we have
\[ 
  \lim_{h\to0} \Big(  \Phi_h(\lambda+i\delta) -  \Phi_h(\lambda-i\delta) \Big) =\lim_{h\to0} \dfrac{(-1)^{d-1}}{(2\pi i)^d}\dfrac{(d-1)!}{(2d-1)!}  \dint_{S_h}  {\rm tr}\ (\sigma_z^{-1}d \sigma_z)^{2d-1},
\]
where the right-hand side is in fact independent of $h$ as we now prove. Indeed, we find that $d{\rm tr}\ (\sigma_z^{-1}d \sigma_z)^{2d-1}=0$ as a $2d-$form so that by the Stokes theorem, the above integral remains unchanged if $S_h$ is continuously deformed, for instance, to the sphere $\Sm_R$ of radius $R$ in the variables $(\omega,\xi,Y)$ so long as $\sigma^{-1}_z$ is defined. We then realize that $z=\lambda+i\omega$ and that the above integral is also independent of $t\lambda$ for $t\in[0,1]$ by continuity and integrality of the winding number. Since $\sigma_{i\omega}(Y,\xi)=-a(\omega,\xi,Y)$, this concludes the proof of the result modulo a sign. Upon inspection, we observe that the above integral has been computed for the orientation $dx_d \wedge d\xi_d \wedge dx_1\wedge d\xi_1 \ldots dx_{d-1}\wedge d\xi_{d-1}>0$ with $\omega\equiv x_d$ the first variables from the orientation $\Phi_h(\lambda+i\delta) -  \Phi_h(\lambda-i\delta)$. The latter orientation is $(-1)^{d} d\xi_1\wedge dx_1 \wedge \ldots d\xi_d \wedge dx_d$. With the latter choice of orientation, we obtain the topological charge conservation between the topological charge given by the index of $F$ and the transport asymmetry given by the conductivity $2\pi\sigma_I$.
\end{proof}

The above topological charge conservation generalizes to arbitrary dimension the bulk-interface correspondence that applies in dimension $d=2$. It is shown in \cite{bal2022topological} that the Fedosov-\horm formula may be interpreted as a bulk-difference invariant. Indeed, the three-dimensional sphere in \eqref{eq:tcc} may be continuously deformed to the union of two hyperplanes evaluated at $x_1\equiv y=\pm y_0$, corresponding to two $y-$independent (bulk) invariants. The line conductivity is thus given as a difference of bulk-invariants; see  \cite{bal2022topological}  for a more detailed presentation.

%
\section{Generalized Dirac operators and degree theory}
\label{sec:deg}
%



%

%
%
We assume in this section that  $a_k\in ES^m_k$ has the following form
\begin{equation}\label{eq:cliffordrep}
  a_k (X)= \rh^k(X) \cdot \Gamma_k
\end{equation}
where $\Gamma_k$ is a collection of matrices  in the representation of the Clifford algebra ${\rm Cl}_{n_k}(\Cm)$  that may be defined as follows. For $0\leq k\leq d$, define $\kappa:=\kappa_k=\lfloor \frac{d+k}2 \rfloor$ and $n_k=2^{\kappa_k}$. We assume that the matrices $\Gamma_k=(\gamma_\kappa^j)_j$ for $1\leq j\leq d+k$ satisfy the  commutation relations
\begin{equation}\label{eq:clifcomm}
  \gamma_\kappa^i \gamma_\kappa^j +  \gamma_\kappa^j \gamma_\kappa^i = 2\delta_{ij} I_{n_k}.
\end{equation}
These properties imply that $a_k^2=|\rh^k|^2I_{n_k}$ is proportional to identity.

Specifically, the matrices $\gamma_\kappa^j$ of level $\kappa$ are constructed starting from $\gamma^{1,2,3}_1=\sigma_{1,2,3}$ the standard Pauli matrices and then iteratively  as
\[
   \gamma_{\kappa+1}^j = \sigma_1 \otimes  \gamma^j_\kappa,\ \ 1\leq j\leq 2\kappa+1,\quad \gamma_{\kappa+1}^{2\kappa+2} = \sigma_2 \otimes I_{n_k} , \qquad  \gamma_{\kappa+1}^{2\kappa+3} = \sigma_3 \otimes I_{n_k} .
\]
The last matrix plays the role of the chiral symmetry matrix in even dimension $d+k=2\kappa_k$. The construction of the above matrices mimics the construction of the augmented Hamiltonians $H_j$ for $k<j\leq d$. When $d+k$ is even, the chiral symmetry is given by
\[
   \gamma_\kappa^0 a_k + a_k \gamma_\kappa^0 =0 ,\qquad \gamma_\kappa^0 := \gamma_\kappa^{2\kappa+1}.
\]
 For $a_k=\rh^k\cdot \Gamma_k$ as above, we denote by $\rh^j$ for $k\leq j\leq d$ the vector fields of dimension $d+j$ such that the augmented Hamiltonians constructed in section \ref{sec:local} satisfy $H_j=\ow a_j$ with, as we verify, $a_j=\rh^j\cdot\Gamma_j$.
 
 Dirac operators are in the form \eqref{eq:cliffordrep}. In two dimensions, we have explicitly $\Gamma_0=(\sigma_1,\sigma_2)$ while $\rh^0(X)=(\xi_1,\xi_2)$ and $\Gamma_1=(\sigma_1,\sigma_2,\sigma_3)$ while $\rh^1(X)=(\xi_1,\xi_2,x_1)$. In dimension $d=3$, we have  $\Gamma_0=(\sigma_1,\sigma_2,\sigma_3)$ while $\rh^0(X)=(\xi_1,\xi_2,\xi_3)$, next $\Gamma_1=(\sigma_1\otimes\sigma_1,\sigma_1\otimes\sigma_2,\sigma_1\otimes\sigma_3,\sigma_2\otimes I_2)$ while  $\rh^1(X)=(\xi_1,\xi_2,\xi_3,x_1)$, and finally $\Gamma_2=(\sigma_1\otimes\sigma_1,\sigma_1\otimes\sigma_2,\sigma_1\otimes\sigma_3,\sigma_2\otimes I_2,\sigma_3\otimes I_2)$ while  $\rh^2(X)=(\xi_1,\xi_2,\xi_3,x_1,x_2)$. When $d=3$, then $\kappa_0=1$ while $\kappa_1=\kappa_2=2$ for a maximum of matrices satisfying \eqref{eq:clifcomm} equal to $2\kappa_2+1=5$. Several other examples will be presented in the next section.

\paragraph{Topological charge computation}
For elliptic operators that admit the above Clifford representation \eqref{eq:cliffordrep}, the computation of the index in \eqref{eq:FH} significantly simplifies as the computation of the degree of the map $\rh^k$. 

We recall the definition of the degree of a map following \cite[Chapters 1.3\&1.4]{nirenberg1974topics}; see also \cite[Chapters 13\&14]{DFN-SP-1985}. Let $C$ be an open set in $\Rm^n$ with compact closure $\bar C=C\cup \partial C$. Let $\rh:\bar C\to\Rm^n$ be a sufficiently smooth map such that $|\rh(\zeta)|>0$ for $\zeta\in \partial C$.  There are regular values $y_0$ of $\rh$ arbitrarily close to $0$ by Sard's theorem that allow us to define the degree of $\rh$ as 
\begin{equation}\label{eq:indexsum}
  \deg (\rh,\bar C,0) = \dsum_{\zeta\in \rh^{-1}(y_0)} {\rm sgn} \det  J_{\rh}(\zeta).
\end{equation}
The above sum ranges over a finite set and is independent of the regular value $y_0$ in an open vicinity of $0$.

The definition of the index of a map from $M$ to $N$ depends on the chosen orientation on $M$. We consider two natural orientation in the context of topological insulators.  Let $B_d\subset\Rm^{2d}$ the ball of radius $R$ given by $\{|X|\leq R\}$. By ellipticity assumption, $|\rh^d|>0$ on $\partial B_d$ for $R$ large enough. We now define degrees for $\rh^d$ with two possible orientations:
\begin{equation}\label{eq:degrees}
\begin{array}{rcl}
   \tdeg \rh^d &: =& \deg (\rh^d, \bar B_d,0) \ \mbox{ with $\bar B_d$ oriented as } d\xi_1 dx_1\ldots d\xi_d dx_d>0, \\
    \deg \rh^d &: =& \deg (\rh^d, \bar B_d,0) \ \mbox{ with $\bar B_d$ oriented as } d\xi_1\ldots d\xi_d dx_1 \ldots dx_d>0.
\end{array} 
\end{equation}
We observe that 
\begin{equation}\label{eq:changeorient}
    \deg \rh^d = (-1)^{\frac12 d(d-1)} \tdeg \rh^d.
\end{equation}
The degree $\tdeg$ is naturally related to $\ind F$ while the degree $\deg$ is more naturally related to that of $\rh^k$ as we now describe.

Using Lemma \ref{lem:contind}, we obtain that the index is unchanged if $\rh^k(x_k',x_k'',\xi)$ is replaced by $\rh^k(x_k',\xi):=\rh^k(x_k',0,\xi)$ in the definition of the symbol. We may therefore see $\rh^k$ as a map from $\Rm^{d+k}$ to $\Rm^{d+k}$ such that, by ellipticity, $|\rh^k|\geq h_0>0$ for $|(x_k',\xi)|\geq R$.  Let $B_k=\{|(x_k',\xi)|<R\}$. Then we define 
\begin{equation}\label{eq:degk}
  \deg \rh^k := \deg (\rh^k, \bar B_k,0) \ \mbox{ with $\bar B_d\subset\Rm^{2d}$ oriented as } d\xi_1\ldots d\xi_d dx_1 \ldots dx_k>0.
\end{equation}
The orientation of $B_k$ is inherited from that of $B_d$ as the subset $x_k''=0$.
With these definitions, we obtain the main result of this section:
\begin{theorem} \label{thm:tccp}
  We have 
  \[
    2\pi\sigma_I =  \ind  F = (-1)^{d-1}  \tdeg (\rh^d) = (-1)^{\frac 12 d(d+1)+1} \deg (\rh^k).
  \]
\end{theorem}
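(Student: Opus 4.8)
The identity $2\pi\sigma_I=\ind F$ is exactly Theorem \ref{thm:tcc}, so the plan is to prove the two remaining equalities $\ind F=(-1)^{d-1}\tdeg(\rh^d)$ and $(-1)^{d-1}\tdeg(\rh^d)=(-1)^{\frac12 d(d+1)+1}\deg(\rh^k)$. First I would normalize the data: by Lemma \ref{lem:contind} and the paragraph following it, deform $a_{d-1}(x_{d-1}',x_d,\xi)$ to $a_{d-1}(x_{d-1}',0,\xi)$ and deform each domain wall $m_j$ to the linear function $x_j$; since $t\mapsto tx_j+(1-t)\aver{x_j}^{m-1}x_j$ stays within orientation preserving diffeomorphisms of $\Rm$, this changes none of $\ind F$, $\tdeg(\rh^d)$, $\deg(\rh^k)$. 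After this normalization each $\rh^j$, $k\le j\le d$, is a smooth proper map $\Rm^{d+j}\to\Rm^{d+j}$, nonvanishing outside a large ball, and the augmentation of section \ref{sec:local} simply appends one coordinate: $\rh^{j+1}(x_j',\xi,x_{j+1})=(\rh^j(x_j',\xi),\,x_{j+1})$.

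For the second equality I would use \eqref{eq:changeorient} to write $\tdeg(\rh^d)=(-1)^{\frac12 d(d-1)}\deg(\rh^d)$, which is purely the sign of the permutation carrying $(\xi_1,x_1,\dots,\xi_d,x_d)$ to $(\xi_1,\dots,\xi_d,x_1,\dots,x_d)$, and then establish $\deg(\rh^d)=\deg(\rh^k)$. The latter is the multiplicativity of the Brouwer degree under suspension: $\rh^{j+1}=(\rh^j,\mathrm{id}_{\Rm})$ has block--diagonal Jacobian, the appended factor is the identity of $\Rm$ (degree $1$), and the orientations fixed in \eqref{eq:degk}--\eqref{eq:degrees} are chosen precisely so that the new variable is appended last on both domain and range; hence $\deg(\rh^{j+1})=\deg(\rh^j)$, and iterating from $j=k$ to $j=d$ gives $\deg(\rh^d)=\deg(\rh^k)$. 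The arithmetic identity $d-1+\tfrac12 d(d-1)=\tfrac12(d-1)(d+2)=\tfrac12 d(d+1)-1\equiv \tfrac12 d(d+1)+1 \pmod 2$ then finishes this part.

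The core is $\ind F=(-1)^{d-1}\tdeg(\rh^d)$, and here the plan is to feed the Clifford structure into the Fedosov--\horm formula \eqref{eq:FH} of Theorem \ref{thm:FH}. Writing $a=a_{d-1}-ix_d=\rh^{d-1}\cdot\Gamma_{d-1}-ix_d\,I$, I would factor $a=b\circ\rh^d$, where $\rh^d=(\rh^{d-1},x_d):\Rm^{2d}\to\Rm^{2d}$ and $b(z',t)=z'\cdot\Gamma_{d-1}-it\,I$ is the universal Clifford map on $\Rm^{2d}$, invertible off the origin since $bb^*=(|z'|^2+t^2)\,I$. Then $a^{-1}da=(\rh^d)^*(b^{-1}db)$, and since ${\rm tr}(b^{-1}db)^{2d-1}$ is a closed $(2d-1)$-form on $\Rm^{2d}\setminus\{0\}$ it is cohomologous to its period over $\Sm^{2d-1}$ times the normalized solid-angle form; with the orientation fixed in Theorem \ref{thm:FH}, which is exactly the one defining $\tdeg$ in \eqref{eq:degrees}, the pullback formula gives
\[
 \int_{\Sm^{2d-1}_R}{\rm tr}\,(a^{-1}da)^{2d-1}=\tdeg(\rh^d)\int_{\Sm^{2d-1}}{\rm tr}\,(b^{-1}db)^{2d-1}.
\]
Substituting into \eqref{eq:FH} yields $\ind F=\kappa_d\,\tdeg(\rh^d)$ with $\kappa_d:=-\dfrac{(d-1)!}{(2\pi i)^d(2d-1)!}\displaystyle\int_{\Sm^{2d-1}}{\rm tr}\,(b^{-1}db)^{2d-1}$ a constant depending only on $d$, and since $\tdeg(\rh^d)=\pm1$ is attained within the class, $\kappa_d\in\{-1,+1\}$.

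The main obstacle is pinning down $\kappa_d=(-1)^{d-1}$. I see two routes and would carry out whichever is cleaner to write up. (i) Evaluate the universal integral directly: on $\Sm^{2d-1}$ one has $b^{-1}=b^*$, and expanding $(b^{-1}db)^{2d-1}$ using $\gamma^i\gamma^j=-\gamma^j\gamma^i$ $(i\neq j)$, $(\gamma^i)^2=I$, and the fact that $\gamma^1\cdots\gamma^{2d-1}$ is central (hence scalar) in the irreducible representation, collapses the form to a constant multiple of the Euclidean volume form of $\Sm^{2d-1}$ and recovers the classical normalization $\int_{\Sm^{2d-1}}{\rm tr}\,(b^{-1}db)^{2d-1}=(-1)^{d}(2\pi i)^d(2d-1)!/(d-1)!$ of the winding number of the Bott generator (sign fixed by the $d=1$ case $b(z_1,z_2)=z_1-iz_2$; cf.\ \cite[Ch.~19]{H-III-SP-94}), which is $\kappa_d=(-1)^{d-1}$. (ii) Fix $\kappa_d$ from one explicit model $F=\sum_{j=1}^{d-1}x_j\gamma^j+\sum_{j=1}^{d}D_j\gamma^{d-1+j}-ix_d$ with $\gamma$'s the Clifford matrices $\Gamma_{d-1}$: here $F^*F=\mathcal D^2+D_d^2+x_d^2-\gamma^{2d-1}$ and $FF^*=\mathcal D^2+D_d^2+x_d^2+\gamma^{2d-1}$ with $\mathcal D^2=\sum_{j=1}^{d-1}(x_j^2+D_j^2)+i\sum_{j=1}^{d-1}\gamma^j\gamma^{d-1+j}\ge 0$, and the relation $\prod_{j=1}^{d-1}(i\gamma^j\gamma^{d-1+j})=(-1)^{\frac12 d(d-1)}\gamma^{2d-1}$, which I would verify inductively from the recursive construction of $\Gamma$, shows that exactly one of $\ker F,\ker F^*$ is one-dimensional, spanned by $e^{-|X|^2/2}$ tensored with the common $(-1)$-eigenvector of the commuting involutions $i\gamma^j\gamma^{d-1+j}$; this gives $\ind F=(-1)^{\frac12 d(d-1)}(-1)^{d-1}$ for the model, while a permutation-sign computation gives $\tdeg(\rh^d)=(-1)^{\frac12 d(d-1)}$ for it, so again $\kappa_d=(-1)^{d-1}$. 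Either way $\ind F=(-1)^{d-1}\tdeg(\rh^d)$, and combining with the two preceding steps completes the proof.
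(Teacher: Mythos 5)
Your proposal is correct in substance, and for the central step it takes a genuinely different route from the paper. The paper does not factor $a=b\circ\rh^d$ through a universal Clifford map and pull back the winding form; instead it first rewrites $\ind F$ as a $\gamma^0_d$-twisted integral for the chirally symmetric symbol $a_d=\sigma_+\otimes a+\sigma_-\otimes a^*$ (Lemma \ref{lem:indad}, using $\ind F^*=-\ind F$), then exploits $a_d^2=|\rh^d|^2 I$ to reduce $(a_d^{-1}da_d)^{2d-1}$ to $|a_d|^{-2d}a_d(da_d)^{2d-1}$ modulo terms killed by the trace, and finally uses the Clifford trace identity ${\rm tr}\,\gamma^0_d\gamma^{i_1}_d\cdots\gamma^{i_{2d}}_d=(2i)^d\epsilon_{i_1\ldots i_{2d}}$ (Lemma \ref{lem:adhd}) to land directly on the Kronecker integrand, identified with the Gauss-map degree in Lemma \ref{lem:inddeg}. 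Your route instead packages all the Clifford combinatorics into one dimensional constant $\kappa_d$ via functoriality of the closed form ${\rm tr}(b^{-1}db)^{2d-1}$ on $\Rm^{2d}\setminus\{0\}$ (the orientation in Theorem \ref{thm:FH} is indeed the one defining $\tdeg$ in \eqref{eq:degrees}), and then fixes $\kappa_d$ either by the classical Bott-generator integral or by a one-model index computation; your oscillator argument with $F^*F=\mathcal D^2+D_d^2+x_d^2-\gamma^{2d-1}$ and the identity $\prod_{j}(i\gamma^j\gamma^{d-1+j})=(-1)^{\frac12 d(d-1)}\gamma^{2d-1}$ is a clean supersymmetric-style kernel count, and I checked both that identity and your claimed normalization of the universal integral against the paper's recursive $\Gamma$'s at low $d$; they are consistent with $\kappa_d=(-1)^{d-1}$. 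What each approach buys: yours avoids the $dw$-cancellation and the $\epsilon$-tensor bookkeeping in exchange for a single model computation, while the paper's derivation produces the Gauss-map integrand explicitly, which is what it reuses in the applications of section \ref{sec:appli}. The remaining steps coincide: your suspension/multiplicativity argument $\deg\rh^{j+1}=\deg\rh^j$ is the paper's product-of-degrees lemma (where the wall map is shown to have degree one directly, so your preliminary deformation $m_j\to x_j$ is harmless but unnecessary), and the sign arithmetic via \eqref{eq:changeorient} matches. The only caveat is that the sign content has not vanished, merely moved: the value of the central element $\gamma^1\cdots\gamma^{2d-1}$ (equivalently the signed normalization of $\int_{\Sm^{2d-1}}{\rm tr}(b^{-1}db)^{2d-1}$) must still be verified inductively for the paper's specific recursion, which is exactly the work done by Lemma \ref{lem:adhd}; the $d=1$ anchor alone does not fix it for all $d$, so that induction should be written out in a final version.
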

In other words, $\ind F= \deg (\rh^k)$ when $d=1,2 \mod 4$ and $\ind F=-\deg (\rh^k)$ when $d=3,4 \mod 4$.


The rest of this section is devoted to the proof of the theorem. Its main steps are as follows: (i) write \eqref{eq:FH} in terms of $\sigma_d$; (ii) next in terms of $\rh^d$; (iii) identify \eqref{eq:FH} with the degree of $\rh^d$ on the sphere $\Sm_R$; (iv) identify it with the degree ${\rm deg}(\rh^d,\bar B_d,0)$; (v) decompose $\rh^d=(\rh^k, \tilde \rh)$ with $\tilde \rh$ the augmentation map. The degree of $\rh^d$ is then the product of the other two degrees. Now the degree of $\tilde\rh$ is one and this gives the result. 

\begin{lemma} \label{lem:indad}
We have:
\[
  \ind F = \frac{-1}2    \dfrac{-(d-1)!}{(2\pi i)^d (2d-1)!}  \dint_{\Sm_R^{2d-1}} {\rm tr} \gamma_0^d(a_d^{-1}da_d)^{2d-1}. 
\]
\end{lemma}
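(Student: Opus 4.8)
The plan is to reduce the identity to the Fedosov-\horm formula \eqref{eq:FH} for $F=\ow a$ by evaluating the integrand ${\rm tr}\,\gamma_0^d(a_d^{-1}da_d)^{2d-1}$ in the block basis supplied by the chiral symmetry. Since $H_d$ has the form \eqref{eq:chsym}, in a suitable unitary basis (which affects none of the traces below) the symbol reads $a_d=\begin{pmatrix}0&a\\ a^*&0\end{pmatrix}$ with $a$ the symbol of $F$, and the chiral matrix is $\gamma_0^d=\begin{pmatrix}I&0\\0&-I\end{pmatrix}$. On $\Sm_R^{2d-1}$, where $a$ (hence $a^*$) is invertible, $a_d^{-1}da_d={\rm diag}\big((a^*)^{-1}da^*,\,a^{-1}da\big)$ is block diagonal, hence so is its $(2d-1)$-st exterior power, and therefore
\[
  {\rm tr}\,\gamma_0^d(a_d^{-1}da_d)^{2d-1}={\rm tr}\big(((a^*)^{-1}da^*)^{2d-1}\big)-{\rm tr}\big((a^{-1}da)^{2d-1}\big).
\]

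The next step is to show that after integration over $\Sm_R^{2d-1}$ the two terms on the right are negatives of one another, so that the left side integrates to $-2\int_{\Sm_R^{2d-1}}{\rm tr}((a^{-1}da)^{2d-1})$. This rests on two sign computations. First, componentwise adjunction of a wedge product of $2d-1$ matrix-valued $1$-forms reverses their order with sign $(-1)^{(2d-1)(2d-2)/2}=(-1)^{d-1}$, so $\big((a^{-1}da)^{2d-1}\big)^{*}=(-1)^{d-1}\big((da^{*})(a^{*})^{-1}\big)^{2d-1}$; conjugating the right side by the $0$-form $a^{*}$ (harmless under the trace) turns it into $((a^{*})^{-1}da^{*})^{2d-1}$, and since ${\rm tr}(M^{*})$ is the coefficientwise complex conjugate of ${\rm tr}(M)$, we get ${\rm tr}\big(((a^{*})^{-1}da^{*})^{2d-1}\big)=(-1)^{d-1}\,\overline{{\rm tr}\big((a^{-1}da)^{2d-1}\big)}$. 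Second, \eqref{eq:FH} identifies $\int_{\Sm_R^{2d-1}}{\rm tr}((a^{-1}da)^{2d-1})$ with $-(2\pi i)^d(2d-1)!/(d-1)!$ times the integer $\ind F$, so its complex conjugate equals $(-1)^d$ times itself; hence the first term integrates to $(-1)^{d-1}(-1)^d=-1$ times the second, which, combined with the minus sign already coming from the $\gamma_0^d$ block structure, produces the factor $-2$.

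Substituting $\int_{\Sm_R^{2d-1}}{\rm tr}\,\gamma_0^d(a_d^{-1}da_d)^{2d-1}=-2\int_{\Sm_R^{2d-1}}{\rm tr}((a^{-1}da)^{2d-1})$ back into \eqref{eq:FH} then gives $\ind F=\frac{-1}{2}\cdot\frac{-(d-1)!}{(2\pi i)^d(2d-1)!}\int_{\Sm_R^{2d-1}}{\rm tr}\,\gamma_0^d(a_d^{-1}da_d)^{2d-1}$, as claimed. The only genuine difficulty is the sign-bookkeeping: three independent contributions --- the $(-1)^{d-1}$ from reordering $2d-1$ one-forms, the $(-1)^d$ from conjugating $(2\pi i)^d$, and the $-1$ from ${\rm diag}(I,-I)$ --- must conspire to $-2$ rather than to $0$, so I would double-check the outcome against the base case $d=1$, $F=D_x-ix$, $\ind F=1$, where both sides reduce to elementary contour integrals. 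Note that the argument uses only the chiral symmetry of $a_d$, not the full Clifford representation \eqref{eq:cliffordrep}.
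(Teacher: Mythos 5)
Your proof is correct, and its first half is exactly the paper's: writing $a_d=\begin{pmatrix}0&a\\ a^*&0\end{pmatrix}$, $\gamma_0^d=\sigma_3\otimes I$, so that $a_d^{-1}da_d$ is block diagonal and ${\rm tr}\,\gamma_0^d(a_d^{-1}da_d)^{2d-1}={\rm tr}\,((a^*)^{-1}da^*)^{2d-1}-{\rm tr}\,(a^{-1}da)^{2d-1}$. The difference lies in how the adjoint block is evaluated. The paper observes that $a^*$ is the Weyl symbol of $F^*$, applies Theorem \ref{thm:FH} to $F^*$, and uses $\ind F^*=-\ind F$, so the two integrals add up to twice the one for $F$. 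You instead establish the identity ${\rm tr}\big(((a^*)^{-1}da^*)^{2d-1}\big)=(-1)^{d-1}\,\overline{{\rm tr}\big((a^{-1}da)^{2d-1}\big)}$ (your sign $(-1)^{(2d-1)(2d-2)/2}=(-1)^{d-1}$ and the conjugation by the $0$-form $a^*$ are correct), and then use \eqref{eq:FH} for $F$ itself, together with the integrality of $\ind F$, to conclude that the conjugate of the integral is $(-1)^d$ times the integral; this yields the same relation between the two integrals, and, importantly, only after integration, consistent with the paper's remark that the two traces are not pointwise proportional. Your route avoids introducing $F^*$ and the adjoint-index identity, at the cost of the sign bookkeeping you flag (which checks out, and which your $d=1$ sanity check, also carried out in section \ref{sec:appli} of the paper, confirms); the paper's route is shorter but relies on $F^*=\ow(a^*)$ and $\ind F^*=-\ind F$. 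Both arguments use only the chiral block structure of $a_d$, not the full Clifford representation.
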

\begin{proof}
  We have $a=a_{d-1}-im_d$ and $a_d=\sigma_+\otimes a+ \sigma_-\otimes a^*$. Therefore $a_d^{-1}da_d={\rm Diag}(a^{-*} da^*,a^{-1}da)$ and hence $(a_d^{-1}da_d)^{2d-1}={\rm Diag}((a^{-*} da^*)^{2d-1},(a^{-1}da)^{2d-1})$.  Thus with $\gamma^0_d=\sigma_3\otimes I$,
\[
    {\rm tr} \gamma^0_d (a_d^{-1} da_d)^{2d-1}  = {\rm tr} (a^{-*} da^*)^{2d-1} - {\rm tr} ( a^{-1}da)^{2d-1}.
\]
The above traces are not necessarily linearly dependent.
However, by Theorem \ref{thm:FH},  the integral  of the second term gives $-\ind F$ while the integral of the first term gives $\ind F^*=-\ind F$ since $a^*$ is the symbol of $F^*$. This gives the result.
\end{proof}

\begin{lemma}
  Let $a_d$ as above. Then for $|X|$ large enough,
\[
   {\rm tr} \gamma^0_d (a_d^{-1} da_d)^{2d-1} = (-1)^{d-1} |a_d|^{-2d} {\rm tr}  \gamma^0_d a_d (da_d)^{2d-1}.
\]  
\end{lemma}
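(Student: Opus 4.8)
The identity to prove is purely algebraic and local in $X$ (for $|X|$ large, where $a_d$ is invertible and $a_d^{-1} = |a_d|^{-2} a_d^*$ by the Clifford relation $a_d^2 = |\rh^d|^2 I$ applied to the Hermitian matrix $a_d$, giving $a_d^{-1} = |\rh^d|^{-2} a_d$). So the first step is to replace every occurrence of $a_d^{-1}$ in $(a_d^{-1} da_d)^{2d-1}$ by $|a_d|^{-2} a_d$, where I write $|a_d|^2 := |\rh^d|^2$ for the scalar $a_d^2$. After pulling the $2d-1$ scalar factors $|a_d|^{-2}$ out front, the form $(a_d^{-1}da_d)^{2d-1}$ becomes $|a_d|^{-2(2d-1)} (a_d\, da_d)^{2d-1}$, and then inside the trace I must move one more factor of $a_d^{-1} = |a_d|^{-2} a_d$ past to reach the asserted $a_d (da_d)^{2d-1}$ on the right: indeed $\gamma_d^0 (a_d^{-1}da_d)^{2d-1} = |a_d|^{-2(2d-1)} \gamma_d^0 a_d^{-1} a_d(a_d\,da_d)^{2d-2} da_d$... let me instead organize it as $(a_d^{-1}da_d)^{2d-1} = a_d^{-1}(da_d\, a_d^{-1})^{2d-2} da_d$ and push all $a_d^{-1}$'s out. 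This accounts for the power $|a_d|^{-2(2d-1)}$, but the claimed power is $|a_d|^{-2d}$; the discrepancy $|a_d|^{-2(2d-1)+2d} = |a_d|^{-2(d-1)}$ must be absorbed by the remaining $2d-2$ factors of $a_d$ that I still need to convert — wait, I have to be careful: in $a_d(da_d)^{2d-1}$ there is exactly one leftover $a_d$, so I should convert $2d-2$ of the $2d-1$ inverse factors into $|a_d|^{-2}a_d$ and keep one as a genuine $a_d$, but that still leaves $2d-1$ scalar factors. The clean bookkeeping is: write $a_d^{-1} = |a_d|^{-2} a_d$ everywhere, extract the $2d-1$ scalars to get $|a_d|^{-2(2d-1)} \gamma_d^0 (a_d\,da_d)^{2d-1}$, then note $(a_d\,da_d)^{2d-1}$ contains $2d-1$ copies of $a_d$ but I want only one; however $a_d\, da_d\, a_d = a_d\, d(a_d^2) - a_d^2\, da_d = a_d\, d(|a_d|^2) - |a_d|^2\, da_d$, and the exact term $a_d\, d(|a_d|^2)$ is a scalar multiple of $a_d$ wedged against $da_d$'s — but under the trace against $\gamma_d^0$ one checks $\mathrm{tr}\,\gamma_d^0 a_d = 0$ by the chiral anticommutation $\gamma_d^0 a_d = -a_d \gamma_d^0$, so only an even number of $a_d$'s survive... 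This is the genuine content, and I will work it out via the contraction $a_d\, da_d\, a_d = -|a_d|^2 da_d + (\text{scalar})\, a_d$ applied repeatedly.

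\textbf{Main argument.} Concretely: using $a_d\, da_d\, a_d = d(|a_d|^2)\, a_d - |a_d|^2\, da_d$ and the fact that, after multiplying by $\gamma_d^0$ and taking the trace, any monomial with an \emph{odd} total number of $a_d$-factors vanishes (since $\gamma_d^0$ anticommutes with $a_d$ and commutes with $da_d$, so $\mathrm{tr}\,\gamma_d^0 M = -\mathrm{tr}\,\gamma_d^0 M$ when $M$ has odd $a_d$-parity), I iterate the contraction to collapse the word $(a_d\,da_d)^{2d-1}$ down to a single $a_d$ times $(da_d)^{2d-1}$, picking up exactly $d-1$ replacements of a paired $a_d\cdots a_d$ by $-|a_d|^2$. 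Each such replacement contributes a factor $(-1)$ and a factor $|a_d|^2$; the terms proportional to $d(|a_d|^2)$ that also appear either have the wrong $a_d$-parity or vanish when wedged (since $d(|a_d|^2)\wedge d(|a_d|^2) = 0$ and, combined with the parity constraint, every such branch dies). Collecting: $\gamma_d^0 (a_d^{-1}da_d)^{2d-1} = |a_d|^{-2(2d-1)}(-1)^{d-1} |a_d|^{2(d-1)} \gamma_d^0 a_d (da_d)^{2d-1} = (-1)^{d-1} |a_d|^{-2d} \gamma_d^0 a_d (da_d)^{2d-1}$, which is the claim after taking traces. I should double-check the count of sign-producing contractions: the word $a_d\,da_d\,a_d\,da_d\cdots a_d\,da_d$ has $2d-1$ letters $a_d$ and $2d-1$ letters $da_d$; to reduce to one $a_d$ I eliminate $2d-2$ of them, i.e. $d-1$ pairs, hence $(-1)^{d-1}$ and $|a_d|^{2(d-1)}$.

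\textbf{Expected obstacle.} The routine but error-prone part is the bookkeeping of which intermediate monomials vanish: the chiral-parity argument ($\mathrm{tr}\,\gamma_d^0 M = 0$ for $M$ with an odd number of $a_d$-factors) has to be invoked consistently at each stage of the contraction, and one must also verify that the scalar $1$-form $d(|a_d|^2)$ branches genuinely drop out rather than contributing extra terms — this uses both parity and the nilpotency $d(|a_d|^2)\wedge d(|a_d|^2)=0$, together with the observation that a single uncontracted $d(|a_d|^2)$ sitting among an even number of $a_d$'s still leaves odd $a_d$-parity in some factor. Getting the orientation/sign $(-1)^{d-1}$ exactly right (as opposed to, say, $(-1)^{\binom{2d-1}{2}}$ from reordering) is where I would be most careful; I would sanity-check it against $d=1$ (where the identity reads $\mathrm{tr}\,\gamma_1^0 a_1^{-1}da_1 = |a_1|^{-2}\,\mathrm{tr}\,\gamma_1^0 a_1\,da_1$, immediate from $a_1^{-1}=|a_1|^{-2}a_1$) and $d=2$.
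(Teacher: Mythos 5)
Your overall strategy is essentially the paper's: use $a_d^2=|\rh^d|^2 I$ to write $a_d^{-1}=|a_d|^{-2}a_d$, contract the word $(a_d\,da_d)^{2d-1}$ down to a single $a_d$ times $(da_d)^{2d-1}$ via $a_d\,da_d\,a_d=d(|a_d|^2)\,a_d-|a_d|^2\,da_d$, and discard the branches carrying the scalar one-form $d(|a_d|^2)$ under the trace against $\gamma^0_d$. The paper organizes the same computation as an induction on powers of $a_d^{-1}da_d$ (writing $a_d^{-1}=w\,a_d$ with $w=a_d^{-2}$ and tracking the $dw$-terms), and your sign/power bookkeeping $(-1)^{d-1}|a_d|^{-2(2d-1)}|a_d|^{2(d-1)}=(-1)^{d-1}|a_d|^{-2d}$ is correct.

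However, the step that kills the $d(|a_d|^2)$-branches is wrongly justified, and this is a genuine gap. You claim $\gamma^0_d$ anticommutes with $a_d$ but \emph{commutes} with $da_d$, and deduce that any monomial with an odd number of $a_d$-factors has vanishing trace against $\gamma^0_d$. Both statements are false: $da_d=d\rh^d\cdot\Gamma_d$ is again a combination of the generators $\gamma^1_d,\dots,\gamma^{2d}_d$, all of which anticommute with the chirality matrix $\gamma^0_d$, so $\gamma^0_d$ anticommutes with $da_d$ exactly as with $a_d$; and your criterion would annihilate the target term ${\rm tr}\,\gamma^0_d\,a_d(da_d)^{2d-1}$ itself, which has a single (odd) $a_d$-factor and is nonzero (it is precisely the quantity evaluated in the following lemma). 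The vanishing mechanism you need is the one the paper invokes: $\gamma^0_d$ is, up to a phase, the product of all $2d$ generators, so ${\rm tr}\,\gamma^0_d\gamma^{i_1}_d\cdots\gamma^{i_p}_d=0$ whenever the product involves fewer than $2d$ (necessarily distinct, after the antisymmetrization forced by the wedge products) generators. Each branch containing one factor $d(|a_d|^2)$ leaves at most $2d-2$ one-form factors $da_d$; finishing the reduction on such a branch (terms with two factors $d(|a_d|^2)$ die by $d(|a_d|^2)\wedge d(|a_d|^2)=0$) leaves a product of fewer than $2d$ gamma factors, whose trace against $\gamma^0_d$ vanishes. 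With this replacement of your parity argument, your contraction scheme closes and reproduces the paper's proof.
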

\begin{proof}
  We first observe that $a_j^2=|\rh^j|^2$ for $k\leq j\leq d$  are proportional to identity thanks to \eqref{eq:clifcomm}. Since $a_d^2=|h^d|^2$ is scalar, then $a_d^{-1}=wa_d$ for $w=a_d^{-2}$ scalar so that 
\[
  a_d^{-1} da_d = -da_d^{-1} a_d = -dw a_d^2 - wda_d a_d
\]
and hence
$
  (a_d^{-1}da_d)^2 =  -dw a_d da_d - w (da_d)^2.
$
Using $dw da_d^2=0$ so that $dw da_d a_d=-dwa_d da_d$, we find
$
  (a_d^{-1}da_d)^4 =  dw^2 a_d (da_d)^3 +w^2 (da_d)^4
$
and more generally
\[
  (a_d^{-1}da_d)^{2n} = (-1)^n  [dw^n a_d (da_d)^{2n-1} +w^2 (da_d)^{2n}]
\]
as well as (for $d\geq2$)
\[
  (a_d^{-1}da_d)^{2d-1} = (-1)^{d-1}  [dw^{d-1}  (da_d)^{2d-2} +w^da_d (da_d)^{2d-1}].
\]
Thus we obtain for $d\geq1$ that
\[
   {\rm tr} \gamma^0_d  (a_d^{-1}da_d)^{2d-1}  = (-1)^{d-1}  |a_d|^{-2d} {\rm tr} \gamma^0_d a_d (da_d)^{2d-1}
\]
with the term in $dw$ vanishing since it involves the trace of a product of an even number of necessarily different (because of the product of exterior differentiations) gamma matrices.  Such traces necessarily vanish for Clifford matrices.
\end{proof}
\begin{lemma}\label{lem:adhd}
  For $a_d=\rh^d\cdot\Gamma_d$, we have
\[
   {\rm tr}  \gamma_d^0 a_d (da_d)^{2d-1} = (2i)^d \dsum_{\rho\in\mS_{2d}}  (-1)^\rho \rh^d_{\rho_1} d  \rh^d_{\rho_2} \wedge \ldots \wedge d\rh^d_{\rho_{2d}}, \quad (2i)^d \epsilon_{i_1,\ldots i_{2d}} = {\rm tr} \gamma^0_d \gamma^{i_1}_d \ldots \gamma^{i_{2d}}_d. 
\]
\end{lemma}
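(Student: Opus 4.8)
The plan is to substitute the Clifford decomposition $a_d=\rh^d\cdot\Gamma_d=\dsum_{j=1}^{2d}\rh^d_j\gamma^j_d$ and use that the matrices $\gamma^j_d$ are constant, so $da_d=\dsum_{j=1}^{2d}d\rh^d_j\,\gamma^j_d$. Expanding the $(2d-1)$-fold exterior power (keeping the matrix factors in the same order as the one-forms), multiplying on the left by $\gamma^0_d$ and taking the trace turns the left-hand side into
\[
  {\rm tr}\,\gamma^0_d a_d(da_d)^{2d-1}=\dsum_{i_1,\ldots,i_{2d}=1}^{2d}\rh^d_{i_1}\,d\rh^d_{i_2}\wedge\cdots\wedge d\rh^d_{i_{2d}}\ {\rm tr}\big(\gamma^0_d\gamma^{i_1}_d\cdots\gamma^{i_{2d}}_d\big).
\]
So everything reduces to evaluating the matrix trace $t(i_1,\ldots,i_{2d}):={\rm tr}(\gamma^0_d\gamma^{i_1}_d\cdots\gamma^{i_{2d}}_d)$; once the second asserted identity $t(i_1,\ldots,i_{2d})=(2i)^d\epsilon_{i_1,\ldots,i_{2d}}$ is in hand, substituting it back and collapsing the sum onto permutations of $(1,\ldots,2d)$ gives the stated formula.

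The first ingredient is the identification of $\gamma^0_d$ with the normalized volume element, $\gamma^0_d=(-i)^d\,\gamma^1_d\gamma^2_d\cdots\gamma^{2d}_d$. I would prove the analogue $\gamma^0_\kappa=(-i)^\kappa\gamma^1_\kappa\cdots\gamma^{2\kappa}_\kappa$ by induction on $\kappa$ along the recursive construction recalled just above. The base case $\kappa=1$ is $\sigma_3=-i\sigma_1\sigma_2$. For the step, write the ordered product of the $2(\kappa+1)$ level-$(\kappa+1)$ generators as a tensor product: the first tensor slot collapses via $\sigma_1^{2\kappa+1}\sigma_2=\sigma_1\sigma_2=i\sigma_3$, and the second slot equals $\gamma^1_\kappa\cdots\gamma^{2\kappa+1}_\kappa=(\gamma^1_\kappa\cdots\gamma^{2\kappa}_\kappa)\gamma^0_\kappa=(-i)^{-\kappa}(\gamma^0_\kappa)^2=(-i)^{-\kappa}I$ by the inductive hypothesis and $(\gamma^0_\kappa)^2=I$. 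Tracking the phase, using $i=(-i)^{-1}$, gives $\gamma^1_{\kappa+1}\cdots\gamma^{2(\kappa+1)}_{\kappa+1}=(-i)^{-(\kappa+1)}\gamma^0_{\kappa+1}$, which is the relation at level $\kappa+1$.

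The second ingredient is the evaluation of $t$. If $(i_1,\ldots,i_{2d})$ is a permutation $\rho$ of $(1,\ldots,2d)$, anticommutativity of distinct generators gives $\gamma^{i_1}_d\cdots\gamma^{i_{2d}}_d=(-1)^\rho\gamma^1_d\cdots\gamma^{2d}_d=(-1)^\rho(-i)^{-d}\gamma^0_d$, whence $t=(-1)^\rho(-i)^{-d}\,{\rm tr}\,(\gamma^0_d)^2=(-1)^\rho(-i)^{-d}2^d=(-1)^\rho(2i)^d$, using $(\gamma^0_d)^2=I_{2^d}$ and $(-i)^{-d}2^d=(2i)^d$. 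If some index is repeated (so $\epsilon_{i_1,\ldots,i_{2d}}=0$), I would reduce $\gamma^{i_1}_d\cdots\gamma^{i_{2d}}_d$ modulo the relations $\gamma^i_d\gamma^j_d=-\gamma^j_d\gamma^i_d$ ($i\ne j$) and $(\gamma^j_d)^2=I$ to $\pm$ the ordered product of the generators occurring an odd number of times; their number $r$ has the same parity as $2d$ (hence is even) and is $\le 2d-2$ here. Multiplying by $\gamma^0_d=(-i)^d\gamma^1_d\cdots\gamma^{2d}_d$ and cancelling the repeated factors leaves $\pm(-i)^d$ times an ordered product of the $2d-r\ge 2$ remaining distinct generators; such a nonempty product is traceless, since conjugating it by one of its factors multiplies it by $(-1)^{(2d-r)-1}=-1$. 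Hence $t=(2i)^d\epsilon_{i_1,\ldots,i_{2d}}$ in every case, and the lemma follows.

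I do not anticipate a genuine obstacle: the whole argument is bookkeeping with gamma matrices. The two places demanding care are the precise phase $(-i)^d$ in $\gamma^0_d=(-i)^d\gamma^1_d\cdots\gamma^{2d}_d$ (the exponent interacts with $i=(-i)^{-1}$ through the induction), and the vanishing argument for repeated indices, where one must verify that the residual product of distinct generators has even positive length so that conjugation by one of its factors produces the sign $-1$.
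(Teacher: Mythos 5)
Your proof is correct and is exactly the gamma-matrix bookkeeping the paper leaves implicit: its "proof" is the one-line remark that the identity follows from the construction of the Clifford matrices and generalizes ${\rm tr}\,\sigma_3\sigma_1\sigma_2=2i$. Your induction establishing $\gamma^0_d=(-i)^d\gamma^1_d\cdots\gamma^{2d}_d$, the evaluation ${\rm tr}\,\gamma^0_d\gamma^{\rho_1}_d\cdots\gamma^{\rho_{2d}}_d=(-1)^\rho(2i)^d$, and the traceless-product argument for repeated indices supply precisely the details the paper omits, with the phases checking out (e.g.\ $d=2$ gives $-{\rm tr}(\gamma^0_2)^2=(2i)^2$).
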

We recall that $\mS_{n}$ the set of permutations of $\{1,\ldots,n\}$.
The proof of the lemma directly comes from the construction of Clifford matrices  (and their orientation) and generalizes that ${\rm tr} \sigma_3 \sigma_1\sigma_2=2i$. The above three lemmas show that the index of $F$ is related to an appropriate integral of $\rh^d$.
\begin{lemma}\label{lem:inddeg}
 We have:
 \[
   \ind F = (-1)^{d-1} \tdeg \rh^d.
 \]
\end{lemma}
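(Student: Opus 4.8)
The plan is to chain the three preceding lemmas, reduce $\ind F$ to a boundary integral in which only the vector field $\rh^d$ appears, and then recognise that integral as the classical de Rham expression for the degree of $\rh^d$ on $\Sm_R^{2d-1}$.

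First I would combine Lemma~\ref{lem:indad}, the identity ${\rm tr}\,\gamma^0_d(a_d^{-1}da_d)^{2d-1}=(-1)^{d-1}|a_d|^{-2d}\,{\rm tr}\,\gamma^0_d a_d(da_d)^{2d-1}$ proved just above, and Lemma~\ref{lem:adhd}. Since $a_d^2=|\rh^d|^2 I_{n_d}$ by \eqref{eq:clifcomm}, we have $|a_d|^{-2d}=|\rh^d|^{-2d}$, and these three facts give
\[
  \ind F=\frac{(-1)^{d-1}(d-1)!\,(2i)^d}{2\,(2\pi i)^d\,(2d-1)!}\dint_{\Sm_R^{2d-1}}\frac{1}{|\rh^d|^{2d}}\dsum_{\rho\in\mS_{2d}}(-1)^\rho\,\rh^d_{\rho_1}\,d\rh^d_{\rho_2}\wedge\cdots\wedge d\rh^d_{\rho_{2d}}.
\]
For $d=1$ the wedge manipulations in the lemma preceding Lemma~\ref{lem:adhd} degenerate, but the identity $a_1^{-1}da_1=|a_1|^{-2}a_1\,da_1$ is immediate and the displayed formula still holds, so one may treat all $d\ge1$ uniformly.

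Second I would use the de Rham description of the topological degree. Denoting by $\omega_{2d-1}$ the surface-area form on $\Sm^{2d-1}\subset\Rm^{2d}$, i.e.\ the contraction of $dv_1\wedge\cdots\wedge dv_{2d}$ with the radial field divided by $|v|^{2d}$, one has $\int_{\Sm^{2d-1}}\omega_{2d-1}=|\Sm^{2d-1}|=2\pi^d/(d-1)!$, and a straightforward antisymmetrisation shows that for any smooth $\rh\colon\Rm^{2d}\to\Rm^{2d}$ nonvanishing near $\Sm_R^{2d-1}$,
\[
  \rh^{*}\omega_{2d-1}=\frac{1}{(2d-1)!\,|\rh|^{2d}}\dsum_{\rho\in\mS_{2d}}(-1)^\rho\,\rh_{\rho_1}\,d\rh_{\rho_2}\wedge\cdots\wedge d\rh_{\rho_{2d}}.
\]
Combined with the classical formula $\int_{\Sm_R^{2d-1}}\rh^{*}\omega_{2d-1}=|\Sm^{2d-1}|\,\deg(\rh,\bar B_d,0)$ (the equivalence of the preimage-counting definition \eqref{eq:indexsum} with this integral being standard, see \cite{nirenberg1974topics}), and observing that the orientation of $\Sm_R^{2d-1}$ fixed in \eqref{eq:FH} and inherited in Lemma~\ref{lem:indad} is exactly the one defining $\tdeg$ in \eqref{eq:degrees}, the boundary integral above equals $(2d-1)!\,|\Sm^{2d-1}|\,\tdeg\rh^d$. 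Substituting and simplifying,
\[
  \frac{(-1)^{d-1}(d-1)!\,(2i)^d}{2\,(2\pi i)^d\,(2d-1)!}\cdot(2d-1)!\cdot\frac{2\pi^d}{(d-1)!}=(-1)^{d-1}\,\frac{(2i)^d\pi^d}{(2\pi i)^d}=(-1)^{d-1},
\]
so $\ind F=(-1)^{d-1}\tdeg\rh^d$, which is the claim.

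The step I expect to require the most care is the orientation bookkeeping. Lemma~\ref{lem:adhd} fixes an ordering of the Clifford generators through $(2i)^d\epsilon_{i_1\ldots i_{2d}}={\rm tr}\,\gamma^0_d\gamma^{i_1}_d\cdots\gamma^{i_{2d}}_d$, and the components of $\rh^d$ inherit the order $(\xi_1,\dots,\xi_d,x_1,\dots,x_d)$ from the iterative construction of the $H_j$; one must verify that, paired with the orientation convention of \eqref{eq:FH}, the de Rham formula produces $\tdeg\rh^d$ rather than $\deg\rh^d$, the two differing by the sign $(-1)^{d(d-1)/2}$ recorded in \eqref{eq:changeorient}. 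Once the orientation is pinned down the numerical factors collapse exactly as displayed above, and no further input is needed.
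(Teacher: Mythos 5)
Your proposal is correct and follows essentially the same route as the paper: it chains Lemma \ref{lem:indad}, the identity ${\rm tr}\,\gamma^0_d(a_d^{-1}da_d)^{2d-1}=(-1)^{d-1}|a_d|^{-2d}\,{\rm tr}\,\gamma^0_d a_d(da_d)^{2d-1}$, and Lemma \ref{lem:adhd}, and then identifies the resulting boundary integral with the degree of the Gauss map $X\mapsto \rh^d(X)/|\rh^d(X)|$, hence with $\deg(\rh^d,\bar B_d,0)=\tdeg \rh^d$ for the orientation fixed in \eqref{eq:FH}. The only cosmetic difference is that you express the antisymmetrized sum as the pullback $\rh^{*}\omega_{2d-1}$ of the solid-angle form, while the paper writes the same quantity as $(2d-1)!\,{\rm Det}\,L(u)\,du_1\wedge\cdots\wedge du_{2d-1}$ and cites the corresponding classical degree formulas; your normalization constants and orientation bookkeeping check out.
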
 
\begin{proof}
Let $\Sigma$ be a smooth hypersurface in $\Rm^{2d}$ locally parametrized by $X=X(u)$ for $u\in \Rm^{2d-1}$. We introduce the $2d\times 2d$ matrix $L(u)$ constructed as follows (see \cite[Corollary 14.2.1]{DFN-SP-1985}). The first row is $L_{1i}(u)=\rh^d_i \circ X(u)$ while the following rows are $L_{j+1,i}(u) =\partial_{u_j} \rh^d_i \circ X(u)$ for $1\leq j\leq 2d-1$. We then observe that 
\begin{equation}\label{eq:Lu}
  \dsum_{\rho\in\mS_{2d}} (-1)^\rho \rh^d_{\rho_1} d  \rh^d_{\rho_2} \wedge \ldots \wedge d\rh^d_{\rho_{2d}} = (2d-1)!\  {\rm Det }L(u)  du_1\wedge \ldots \wedge du_{2d-1}.
\end{equation}
Indeed, with $\rh'$ the vector $(\rh_{\rho_2},\ldots, \rh_{\rho_{2d}})$ and $\nabla_u\rh'$ the Jacobian matrix, we have
\[
  d  \rh^d_{\rho_2} \wedge \ldots \wedge d\rh^d_{\rho_{2d}}  = {\rm Det} \nabla_u \rh' du_1\wedge \ldots \wedge du_{2d-1}.
\]
For any permutation $\rho$, we observe that $(-1)^\rho$ and ${\rm Det} \nabla_u \rh'$ change signs together so that denoting by $L_{\rho_1}$ the matrix $L$ with the first row and the $\rho_1$ column deleted, we have 
\[
  \dsum_{\rho\in\mS_{2d}} (-1)^\rho \rh^d_{\rho_1} d  \rh^d_{\rho_2} \wedge \ldots \wedge d\rh^d_{\rho_{2d}} = (2d-1)! \dsum_{\rho_1=1}^{2d} (-1)^{\rho_1}  {\rm Det} L_{\rho_1} du_1\wedge \ldots \wedge du_{2d-1},
\]
which gives \eqref{eq:Lu}, noting that $(2d-1)!$ is the number of permutations in $\mS_{d-1}$.

Collecting the above results we obtained that 
\[
  {\rm tr} (a^{-1} da)^{2d-1} = \frac12(-2i)^d (2d-1)!  |\rh^d|^{-2d} {\rm Det} L du_1 \wedge\ldots \wedge du_{2d-1}.
\]
Therefore,
\[
   \dfrac{-(d-1)!}{(2\pi i)^d (2d-1)!}  {\rm tr} (a^{-1}da)^{2d-1}  = \dfrac{(-1)^{d-1}}{\gamma_{2d-1}}  |\rh^d|^{-2d} {\rm Det} L du_1 \wedge\ldots \wedge du_{2d-1},\ \gamma_{2d-1} = \frac{2\pi^d}{(d-1)!}
\]
where $\gamma_{2d-1}$ is the volume of the unit sphere $\Sm^{2d-1}$.

Let $f$ be the Gauss map associated to $\rh^d$ and given by $f(X)=|\rh^d(X)|^{-1} \rh^d(X)$ for $X\in \Sm_R^{2d-1}$.  Then we recognize in the integration of the above term over $\Sm_R^{2d-1}$ the degree of $f$ \cite[Corollary 14.21]{DFN-SP-1985}. Moreover, the degree of the Gauss map $f$ is given in \cite[Theorem 14.4.4]{DFN-SP-1985} precisely by the sum in \eqref{eq:indexsum} and so equals $\deg (\rh^d;\bar B_d,0)$ when $0$ is a regular value of $\rh^d$. When $0$ is not a regular value, we appy the result of $\rh^d-y_0$ for $y_0$ small with result independent of $y_0$ as recalled above; see also \cite[Remark 1.5.10]{nirenberg1974topics}. With the chosen orientation to define $\tdeg$ and Theorem \ref{thm:FH}, we thus obtain the result of the lemma.
\end{proof}

\begin{lemma}
  We have  $\deg (\rh^d) = \deg (\rh^k)$.
\end{lemma}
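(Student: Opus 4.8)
The plan is to recognize $\rh^d$ as a product map $\rh^k\times\tilde\rh$ in two disjoint blocks of phase-space variables, $\tilde\rh$ being the ``augmentation map'' carrying the domain walls, and then to split the degree of a product map as the product of the degrees, using that $\deg\tilde\rh=1$.

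First I would record the precise form of $\rh^d$. Recall that \eqref{eq:degk} already invokes Lemma \ref{lem:contind} to replace $\rh^k(x_k',x_k'',\xi)$ by $\rh^k(x_k',\xi)$, so $\rh^k$ depends only on the $d+k$ variables $(x_k',\xi)$. Reading off the construction of section \ref{sec:local} together with the recursion $\gamma_{\kappa+1}^j=\sigma_1\otimes\gamma_\kappa^j$ for $j\le 2\kappa+1$, $\gamma_{\kappa+1}^{2\kappa+2}=\sigma_2\otimes I$, and $\gamma_\kappa^0=\gamma_\kappa^{2\kappa+1}$, one checks by induction on $j$ that each step $H_j\mapsto H_{j+1}$ (which is $a_{j+1}=a_j+m_{j+1}\gamma_{\kappa_j}^0$ when $d+j$ is even, and $a_{j+1}=\sigma_1\otimes a_j+m_{j+1}\sigma_2\otimes I$ when $d+j$ is odd) appends $m_{j+1}(x_{j+1})$ to $\rh^j$ as a new last coordinate. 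Hence
\[
  \rh^d(x,\xi)=\big(\rh^k(x_k',\xi),\,m_{k+1}(x_{k+1}),\ldots,m_d(x_d)\big)=\big(\rh^k\times\tilde\rh\big)(x,\xi),
\]
a product of $\rh^k$ acting on $(x_k',\xi)\in\Rm^{d+k}$ and $\tilde\rh=(m_{k+1},\ldots,m_d)$ acting on the complementary variables $(x_{k+1},\ldots,x_d)\in\Rm^{d-k}$, where $m_j(x_j)=\aver{x_j}^{m-1}x_j$ by \eqref{eq:dwall}.

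Next I would compute the degree. Each $m_j$ is a strictly increasing smooth bijection of $\Rm$ with $m_j^{-1}(0)=\{0\}$, so $\tilde\rh:\Rm^{d-k}\to\Rm^{d-k}$ is a diffeomorphism with everywhere-positive diagonal Jacobian and $\tilde\rh^{-1}(0)=\{0\}$. Pick (Sard) a regular value $y_0$ of $\rh^k$ near $0$; then $(y_0,0)$ is a regular value of $\rh^d$ near $0$ with preimage $(\rh^k)^{-1}(y_0)\times\{0\}$, a finite set, and at each of its points $(\zeta,0)$ the Jacobian of $\rh^d$, written in the coordinate orders $(\xi_1,\ldots,\xi_d,x_1,\ldots,x_k\mid x_{k+1},\ldots,x_d)$ on the domain and $(\rh^k_1,\ldots,\rh^k_{d+k}\mid m_{k+1},\ldots,m_d)$ on the target, is block-diagonal, whence ${\rm sgn}\det J_{\rh^d}(\zeta,0)={\rm sgn}\det J_{\rh^k}(\zeta)$. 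Since the extra coordinates $x_{k+1},\ldots,x_d$ sit at the tail of the domain orderings of both \eqref{eq:degrees} (the $\deg$, not the $\tdeg$, orientation) and \eqref{eq:degk}, and match the appended target components $m_{k+1},\ldots,m_d$, summing \eqref{eq:indexsum} over the preimage yields
\[
  \deg\rh^d=\sum_{\zeta\in(\rh^k)^{-1}(y_0)}{\rm sgn}\det J_{\rh^k}(\zeta)=\deg\rh^k .
\]
If $0$ is not itself a regular value of $\rh^k$, one works with a nearby regular value, the answer being independent of the choice, just as after \eqref{eq:indexsum} and in \cite[Remark 1.5.10]{nirenberg1974topics}.

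The only delicate point is the bookkeeping of the middle paragraph: checking that the Clifford augmentation really does append $m_{j+1}$ as the \emph{last} coordinate of $\rh^{j+1}$, in the orientation conventions fixed in \eqref{eq:degrees} and \eqref{eq:degk}, so that the product structure of $\rh^d$ splits the degree as $\deg\rh^k\cdot\deg\tilde\rh=\deg\rh^k\cdot 1$ with no residual sign. Everything else is routine: that $|\rh^d|>0$ outside a large ball $B_d$ (so the degree is defined) follows from $|\rh^d|^2=|\rh^k|^2+\sum_{j>k}m_j(x_j)^2$ together with the ellipticity bound $|\rh^k|\ge C_1\aver{x_k',\xi}^m-C_2$, and independence of the radius $R$ is the usual homotopy invariance on the region where $\rh^d$ is nonvanishing.
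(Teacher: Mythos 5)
Your proposal is correct and takes essentially the same approach as the paper: both split $\rh^d=(\rh^k,\tilde\rh)$ with $\tilde\rh=(m_{k+1},\ldots,m_d)$ acting on the complementary variables, note that the Jacobian is block-diagonal so the degree factorizes, and use that the domain-wall factor has degree one in the orientations of \eqref{eq:degrees} and \eqref{eq:degk}. The only cosmetic difference is that the paper states a general product-of-degrees identity on a product of balls and then invokes invariance of the degree under domain changes to pass to the ball $B_R$, whereas you exploit directly that $\tilde\rh$ is a diffeomorphism to choose the regular value $(y_0,0)$ and evaluate \eqref{eq:indexsum} on $\bar B_d$ itself.
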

\begin{proof}
%
%
%
Let $\zeta_j\mapsto \rh_j(\zeta_j)$ for $j=1,2$ be two smooth functions from $\Rm^{n_j}$ to itself with $|\rh_j(\zeta_j)|\geq c_0>0$ for $|\zeta_j|\geq R_j$. Let $B_j=B_j(0,R_j)$ be the centered balls of radius $R_j$ for the Euclidean metric in $\Rm^{n_j}$ for $j=1,2$. Let now $(\zeta_1,\zeta_2)=\zeta\mapsto \rh(\zeta)$ be the function from $\Rm^{n}$ to itself with $n=n_1+n_2$ defined by
\[
  \rh(\zeta) = (\rh_1(\zeta_1),\rh_2(\zeta_2)).
\]
We find that $|\rh(\zeta)|\geq c_0>0$ for $\zeta\in \partial C$ where $C=B_1\times B_2$. Since $0$ does not belong to the range of $\rh$ or $\rh_j$ on the respective boundaries, we can define the degrees
\[
  {\rm deg}(\rh,\bar C,0)\qquad \mbox{ and } \qquad  {\rm deg}(\rh_j,\bar B_j,0),\ \ j=1,2.
\] 
Let $y_0$ be a regular value of $\rh$, i.e., a point in $\rh(\bar C)\backslash \rh(\partial C)$ such that $\rh^{-1}(y_0)=\{\zeta\in \bar B_R;\ \rh(\zeta)=y_0\}$ are a finite number of isolated regular points (where $\nabla\rh$ is invertible). Note that $y_0=(y_1,y_2)$ with $y_j\in B_j$. By Sard's theorem, regular values exist. Then independently of such a $y_0$, 
\[
  {\rm deg}(\rh,\bar C,0) =\dsum_{\zeta\in \rh^{-1}(y_0)} {\rm sgn\  det\ } J_\rh(\zeta)
\]
where $J_\rh$ is the non-vanishing Jacobian of the map $\zeta\to \rh(\zeta)$. Now by construction,
\[
  {\rm det\ } J_\rh(\zeta) = {\rm det\ } J_{\rh_1}(\zeta_1) \ {\rm det\ } J_{\rh_2}(\zeta_2) .
\]
Moreover, $\rh(\zeta)=y_0$ means $(\rh_1(\zeta_1),\rh_2(\zeta_2))=(y_1,y_2)$ so that $\rh^{-1}(y_0)= \rh_1^{-1}(y_1) \times  \rh_2^{-1}(y_2)$ and hence
\[
  \dsum_{\zeta\in \rh^{-1}(y_0)} {\rm sgn\  det\ } J_\rh(\zeta) = \Big(\dsum_{\zeta_1\in \rh_1^{-1}(y_1)} {\rm sgn\  det\ } J_{\rh_1}(\zeta_1)\Big)\Big( \dsum_{\zeta_2\in \rh_2^{-1}(y_2)} {\rm sgn\  det\ } J_{\rh_2}(\zeta_2)\Big).
\]
We recognize the product of degrees for the regular values $y_j\in \rh_j(\bar B_j)\backslash \rh_j(\partial B_j)$. Since degrees are independent of such regular values locally, we obtain that 
\[
   \deg(\rh,\bar C,0) = \prod_{j=1}^2  \deg(\rh_j,\bar B_j,0).
\] 
We observe that $C\subset B_R$ the ball of radius $R=\sqrt{R_1^2+R_2^2}$. Since $|\rh|\geq c_0>0$ on $\bar B_ R \backslash C$, invariance of the results with respect to (continuous) domain changes (see \cite[Proposition 1.4.4]{nirenberg1974topics}) show that 
\begin{equation}\label{eq:eqdegd}
  {\rm deg}(\rh,\bar C,0) = {\rm deg}(\rh,\bar B_R,0)  = \prod_{j=1}^2  {\rm deg}(\rh_j,\bar B_j,0).
\end{equation}

We know apply the above construction to $\rh_1=\rh^k$ and $\rh_2$ the vector so that $\rh^d=(\rh^k,\rh_2)$ with $n_1=d+k$ and $n_2=d-k$ where $\Rm^{2d}$ is oriented using $d\xi_1\ldots d\xi_d dx_1 \ldots dx_d>0$ and the subspaces $\Rm^{d+k}$ (for $\rh^k$) and $\Rm^{d-k}$ (for $\rh_2$) with the induced orientation.  We observe that the degree of $\rh_2=\rh_2(x_k'')=(m_{k+1},\ldots, m_d)(x_k'')$ equals $1$ since the only point in $\rh_2^{-1}(0)=0$ and the Jacobian is identity there with the above orientation. Using \eqref{eq:eqdegd} and the definitions \eqref{eq:degrees} and \eqref{eq:degk} proves the result. 
\end{proof}
The above lemmas together with the change of orientation relation \eqref{eq:changeorient} conclude the proof of Theorem \ref{thm:tccp}.

\section{Applications}
\label{sec:appli}

The classification presented in section \ref{sec:local} applies to Hamiltonians in class A (Hermitian symbols) and AIII (Hermitian symbols with chiral symmetry) in arbitrary dimension. The Fedosov-\horm formula \eqref{eq:FH} shows that the index is controlled by the symbol $a$ of $F$, and hence that of $H_k$, restricted to any sphere with a sufficiently large radius $R$. This implies that the topological charge is independent of the symbol $a$ in the complement of that ball.  The main assumptions to apply \eqref{eq:FH} are that: (i) the symbol $a$ of $F$ is uniformly invertible for $|X|\geq R$ for some $R>0$, in which case: (ii) the topological charge solely depends on $a$ restricted to the sphere $|X|=R$. 

The theory of section \ref{sec:local} applies only to operators whose symbols satisfy the ellipticity constraint \eqref{eq:ellip}, which combined with the growth condition \eqref{eq:Sk} imply that the symbol $a$ grows to infinity as $|X|\to\infty$. This should be contrasted to the two-dimensional results  in \cite{bal2022topological,QB-NUMTI-2021} where the domain wall $m(x_1)$ is assumed to be bounded and constant away from a compact domain. 

For any symbol $a_k$ such that (i) and (ii) hold, we allow for the following modifications of the symbol $a_k$ in order to apply the theory of section \ref{sec:local}. Let $\eps>0$ and $r\to \aver{r}_\eps$ a smooth non-decreasing function from $\Rm_+$ to $\Rm_+$ such that 
\begin{equation}\label{eq:avereps}
   \aver{r}_\eps = \left\{ \begin{array}{rl} 1 & \qquad \eps r\leq 1 \\ \eps r & \qquad \eps r \geq2. \end{array}\right.
\end{equation}
We use the same notation for the smooth function $\Rm^p \ni y\to \aver{y}_\eps:=\aver{|y|}_\eps$. This function has the same leading asymptotic behavior as $\aver{\eps y}$ for $|y|\to\infty$. We consider the above regularization for $y$ being one or several of the variables in $X$. Such modifications of $a_k$ preserve (i)-(ii) and allow us to satisfy  \eqref{eq:ellip} as well as \eqref{eq:Sk} so that the theory of section \ref{sec:local} applies. 

Consider for instance the regularized `Dirac' operator $H_1=D_1\sigma_1+D_2\sigma_2 + (m-\eta D\cdot D)\sigma_3$ with here $D\cdot D=D_1^2+D_2^2$ the (positive) Laplacian and $\Rm\ni \eta\not=0$. The definition of a bulk invariant is ambiguous when $\eta=0$ while it yields a Chern number $\frac12 (\sgn m + \sgn \eta)$ when $\eta\not=0$ \cite{B-BulkInterface-2018,bernevig2013topological}. As indicated above, this paper does not consider bulk invariants but rather topological charges and interface invariants that may be related to bulk-difference (rather than bulk) invariants \cite{bal2022topological}. To define a topology in the class of symbols analyzed in this paper and satisfy \eqref{eq:ellip}, we modify the Hamiltonian as
\[
  H_1 = D_1\sigma_1+D_2\sigma_2 + (\mu(x_1)-\eta \aver{D}_\eps^{-1} D\cdot D)\sigma_3
\]
where we assume that $\mu(x_1)$ equals $x_1$, say, outside of a compact set in $\Rm$. We then verify that the topological charge of $H_1$ equals $1$ and is independent of $\eps$ and of $\eta$ as expected since $\eta$ affects the bulk invariants but not the bulk-difference invariant \cite{B-BulkInterface-2018}.  Note that (i) and (ii) now hold with $m=1$. Alternatively, we can introduce $H_1=\aver{D}_\eps(D_1\sigma_1+D_2\sigma_2) + (\aver{x_1}_\eps x_1 - \eta D\cdot D)\sigma_3$ satisfying (i) and (ii) with now $m=2$.

We next consider several examples of topological insulators and superconductors in dimensions $d=1,2,3$ \cite{bernevig2013topological,RevModPhys.83.1057,sato2017topological,schindler2018higher,volovik2009universe} where the theories of both sections \ref{sec:local} and \ref{sec:deg} apply. While the theory leading to Theorems \ref{thm:tcc} and \ref{thm:tccp} applies to a large class of practical settings, there are important exceptions, in particular the $3\times3$ Hamiltonian \eqref{eq:water} describing fluid waves (before it is appropriately regularized) and the scalar Hamiltonian \eqref{eq:magsch} that appears in the analysis of the integer quantum Hall effect; these cases will be treated in more detail below. We also refer to \cite{bal2022multiscale} for an application to Floquet topological insulators where a variation on Theorem \ref{thm:tcc}  is used to compute invariants for operators that are not in the form \eqref{eq:cliffordrep}.

\paragraph{Dirac operator.}
The first example is the Dirac operator with $H_0=\ow a_0$ for $a_0(X)=\rh^0(X)\cdot\Gamma_0\in S^1_0(g^s)$ in dimension $d$, where
\[
  \rh^0(X)= (\xi_1,\ldots,\xi_d)
\]
and $\Gamma_0$ are Clifford matrices acting on spinors in $\Cm^{2^{\kappa_0}}$ with $\kappa_0=\lfloor \frac d2 \rfloor$. These generalize the cases $d=1,2,3$ considered in the introduction.  We then observe from \eqref{eq:indexsum} that  $\deg \rh^0=1$ since $(\rh^0)^{-1}(0)=\{0\}$ and $\nabla \rh^0(0)=I_d$, and that the topological charge of $H_0$ is given by $\ind F=2\pi\sigma_I(H_{d-1})=(-1)^{\frac12 d(d+1)+1}$, i.e., $\ind F=1$ in dimensions $1,2\mod 4$ and $\ind F=-1$ in dimensions $3,4\mod 4$.

The above topological charge is multiplied by $\sign \det A$ if $A$ is a non-singular (constant) matrix in $\Mm_d(\Rm)$ and $\rh^0(\xi)$ above is replaced by  $A\rh^0(\xi)$. 

The topological charge is given equivalently by the degree of $\rh^0$ or by that of $\rh^d=(\xi_1,\ldots,\xi_d,x_1,\ldots x_d)$ with $\deg \rh^0=\deg \rh^d=1$.

\medskip

The topological charge is also stable against large classes of (smooth) perturbations of arbitrary amplitude. For instance if $\rh^0_j$ for $1\leq j\leq d$ is replaced by $\tilde \rh^0_j(X)=b_j(x) \rh^0_j(\xi)$ for $b_j(x)$ smooth and equal to $1$ outside of a compact set in $\Rm^d$, then the corresponding symbol $\tilde\rh^0\cdot\Gamma_0\in S^1_0(g^s)$ though not necessarily in $S^1_0(g^i)$. Note that a more isotropic perturbation of the form  $b_j(X) \rh^0_j$ for $b_j(X)$ smooth and equal to $1$ outside of a compact set in $\Rm^{2d}$ would generate a perturbation in $S^1_0(g^i)$ although one that is no longer a differential operator. This illustrates the reason why we considered the (reasonably large) classes $S^m_k(g^s)$. 

The model Hamiltonian in the presence of one domain wall is $\rh^1(X)=(\xi_1,\ldots,\xi_d,x_1)$. Its topological charge is then again computed based on $\deg \rh^1=1$. Domain walls of the form $b_j(x)x_1$ even with $b_j(x)=1$ outside of a compact domain no longer necessarily generate perturbations such that $a_1$ remains in $S^1_1$ and are therefore not allowed in the theoretical framework of this paper. We may however replace $x_1$ by $m(x_1)$ equal to $x_1$ outside of a compact set. Inside that compact set, the level set $m(x_1)=0$ is then arbitrary. 

For a time-dependent picture on how wavepackets propagate along curved interfaces for two-dimensional Dirac equations, see also \cite{bal2022magnetic,bal2021edge}.

\medskip

For concreteness and illustration, we spell out the details of the calculations and comparisons used in Theorems \ref{thm:FH} and \ref{thm:tccp} when $d=1$. We then have $a_0=\xi=\rh^0$ while $a=\xi-ix$ and $a_1=\xi\sigma_1+x\sigma_2$. We then observe that $a^{-1}da=(\xi^2+x^2)^{-1}(\xi d\xi+x dx + i(\xi dx-xd\xi))$. In polar coordinates $\xi=r\cos\theta$ and $x=r\sin\theta$, we observe that $a^{-1}da=r^{-1}dr-id\theta$ whose integral along the curve $r=1$ gives $-2\pi i$ and hence $\ind \ow a=1$ as a direct application of the Fedosov-\horm formula \eqref{eq:FH}. 

We now observe that the index may be computed as in Lemma \ref{lem:indad} from $a_1=\xi\sigma_1+x\sigma_2$ with ${\rm tr} \sigma_3 a_1^{-1}da_1= a^{-*}da^*-a^{-1}da$ and $a^*=\xi+ix$ so that $a^{-*}da^*=r^{-1}dr+id\theta$. This shows that ${\rm tr} \sigma_3 a_1^{-1}da_1= 2id\theta$ whose appropriate integral gives the topological charge. Now, $a_1^{-1}=|\rh^1|^{-2}(\rh^1_1\sigma_1+\rh^1_2\sigma_2)$ for $\rh^1=(\xi,x)$.  Thus, as in Lemma \ref{lem:adhd}, ${\rm tr} \sigma_3 a_1^{-1}da_1=|\rh^1|^{-2}{\rm tr} \sigma_3 (\rh^1_1\sigma_1+\rh^2_2\sigma_2)(d\rh^1_1\sigma_1+d\rh^1_2\sigma_2)=2i(\rh^1_1d\rh^1_2-\rh^1_2d\rh^1_1)$.  We recognize in the integral of the latter form over the circle an expression for the degree of $\rh^1$ written as the degree of the Gauss map which to $X\in \Sm^1$ associates $\rh^1(X)/|\rh^1(X)|$. Using the expression \eqref{eq:indexsum} of the degree over the unit disc $C$ gives $\deg \rh^1=1$ since $\nabla \rh^1=I_2$ at the unique point $X=0$ where $\rh^1=0$. 

\medskip

The above orientation of the vector fields $\rh^k$ is natural in the context of topological insulators or superconductors, which are typically first written for spatially-independent coefficients. A different orientation helps to better display the invariance of the indices of Dirac operators across spatial dimensions (see \cite[Proposition 19.2.9]{H-III-SP-94} for a related construction).  We start with $F_1=D_1-ix_1=-i \fa_1$ and then define iteratively
\[
  F_{n+1} = \sigma_-\otimes F_n + \sigma_+ \otimes F_n^* + \sigma_3\otimes D_{n+1} -i x_{n+1}.
\]
The above construction is an example of the more general structure
\[
  f\sharp g := \begin{pmatrix} f\otimes I & I \otimes g^* \\ I \otimes g & -f^* \otimes I \end{pmatrix}
\]
where we verify that  $\ind{ f\sharp g} =\ind f \, \ind g$. We apply it with $g=F_{n-1}$ and $f=D_n-ix_n$. It is then straightforward to obtain that $\ind F_n=1$ for all $n\geq1$. We then observe that $F_n (1,0,\ldots 0)^t e^{-\frac12 |x|^2}=0$ with spinor $(1,0,\ldots 0)^t$ of dimension $2^{n-1}$.

\paragraph{Dirac operator with magnetic field.} We now incorporate constant magnetic fields at infinity for magnetic potentials written in an appropriate gauge. Let us consider the case $d=2$ for concreteness and the (minimal coupling) operator
\[
  H_0 = (D_1-A_1)\sigma_1 + (D_2-A_2)\sigma_2 +V
\]
with $A=(A_1,A_2)$ the magnetic vector potential and $V$ a bounded scalar potential with compact support, say. The magnetic field is given by $B=\nabla\times A=\partial_1 A_2-\partial_2 A_1$. We choose the gauge such that $A_2=B_0x_1+\tilde A_2$ and $A_1=\tilde A_1$ for $\tilde A$ an arbitrary (smooth) compactly supported perturbation. In that gauge, we obtain that 
\[
  H_1 =  (D_1-A_1)\sigma_1 + (D_2-A_2)\sigma_2 +V + x_1\sigma_3
\]
is an operator $H_1=\ow a_1$ with $a_1\in ES^1_1$ for $n_1=2$. Note that for $H_0=\ow a_0$, we do not have that $a_0$ belongs to $ES^1_0$ because of the presence of the unbounded magnetic potential. We would also not have that $a_1$ belongs to $ES^1_1$ if $A=(-\frac12 B_0 x_2,\frac12 B_0x_1)$ were chosen in the Landau gauge, for instance. While physical phenomena have to be independent of the choice of a gauge, the appropriate functional setting to handle constant magnetic fields, and hence unbounded magnetic potentials, is not. With the above construction, we obtain that $2\pi\sigma_I(H_1)=\ind F=1$ for $F=H_1-ix_2$ since the topological charge is given by
\[
  \deg (\xi_1,\xi_2-B_0x_1,x_1,x_2) = \deg (\xi_1,\xi_2,x_1,x_2)=1.
\]
We could more generally consider a magnetic field with constant values as $x_1\to\infty$, for instance with $A_2= B_0 \frac2\pi \arctan(x_1) x_1$. The topological charge of $H_1$ remains equal to $1$.
The magnetic field therefore has no influence on the topological charge in this setting.

\paragraph{Higher-order topological insulator.} Let us consider the Weyl operator $D\cdot\sigma$ in dimension $d=3$.  As we considered in the introduction, the operator $H_2=\sigma_1\otimes D\cdot\sigma+\sigma_2\otimes I x_1+\sigma_3\otimes I x_2$ generates a hinge in the third direction along which asymmetric transport is possible. With our choice of orientation, we have $2\pi\sigma_I(H_2) = - \deg (\xi_1,\xi_2,\xi_3,x_1,x_2)=-1$. 

By implementing more general domain walls, an arbitrary number of asymmetric modes may be obtained. This is done by considering for $p\in\Zm$,
\[
  H_2 := \sigma_1\otimes D\cdot\sigma+\sigma_2\otimes I_2 \,{\rm Re} (x_1+ix_2)^p +\sigma_3\otimes I_2 \,  {\rm Im} (x_1+ix_2)^p.
\]
We thus deduce from Theorem \ref{thm:tccp} that
\begin{equation}\label{eq:hotip}
  2\pi\sigma_I(H_2) = - \deg ({\rm Re} (x_1+ix_2)^p,{\rm Im} (x_1+ix_2)^p) = -p.
\end{equation}
The last result is most easily obtained by identifying, as we did in the proof of Lemma \ref{lem:inddeg}, the degree of $\rh^2$ on the unit ball with the degree of the Gauss map $x\to \hat \rh^2=\rh^2/|\rh^2|$ from the unit circle $\Sm^1$ to itself and then to the degree (winding number) of the map $x_1+ix_2\to (x_1+ix_2)^p$ from the unit circle to itself, which equals $p$.

By an appropriate construction of the coefficients in the Hamiltonian $H_2$ acting on $\Cm^4$, we thus obtain a low-energy model for a coaxial cable with an arbitrary number of asymmetric protected modes along the hinge \cite{schindler2018higher}.

\paragraph{Topological superconductors.} Several superconductors and superfluids \cite{bernevig2013topological,volovik2009universe} are modeled by Hamiltonians of the form 
\[
 H_1 = \big(\eta D\cdot D-\mu\big)\sigma_3\otimes I_2 + H_\Delta
\]
with coupling term $H_\Delta=\sum_{i,j=1,2} \Delta_{ij}(X) \sigma_i\otimes \sigma_j$ for scalar operators $\Delta_{ij}$ and $\eta=(2m^*)^{-1}$ for a mass of the quasi-particle $m^*>0$. For the above choice of the order parameter\footnote{We use $\Delta$ for the order parameter as is customary in the superconductor literature. The (positive) Laplace operator is denoted by $D\cdot D$.} $\Delta$, these Hamiltonians acting on $\Cm^4$ separate into two $2\times2$ Hamiltonians (acting on the first and fourth components, and the second and third components, respectively). We now consider several such examples in one and two space dimensions. 

\paragraph{One dimensional examples. }
For $d=1$, an example with the order parameter $\Delta$ proportional to $D_x$ gives
\[
  \tilde H_1 =  (\eta D_x^2-\mu)\sigma_3 + D_x{\rm Re}\Delta \sigma_1 - D_x {\rm Im}\Delta \sigma_2 
\]
with $0\not=\Delta\in\Cm$.  Let $\Delta=|\Delta|e^{i\theta}$ and $g=e^{i\frac\theta2\sigma_3}$. We then verify that
\[
  g\tilde H_1  g^* =  (\eta D_x^2-\mu)\sigma_3 + |\Delta| D_x \sigma_1
\]
and so we may assume $\Delta$ real-valued. Define $g_2=e^{i\frac\pi4 \sigma_2}$ and $g_1=e^{i\frac\pi4 \sigma_1}$.  We verify that  $g_1g_2 \sigma_{1,2,3}(g_1g_2)^*=\sigma_{3,1,2}$ so that 
\[
  g_1g_2g\tilde H_1 (g_1g_2g)^* =  (\eta D_x^2-\mu)\sigma_1 + \Delta D_x \sigma_2 = \tilde \rh^1 \cdot \Gamma_1,\quad \tilde \rh^1=(\eta \xi^2-\mu,\Delta \xi).
\]
This is of the form $\sigma_-\otimes F^* + \sigma_+\otimes F$ with $F= (\eta D_x^2-\mu) -i\Delta D_x$. 

 In order for $F$ to be a Fredholm operator, we need to introduce a domain wall. This may be achieved in two different ways: it may be implemented by either the chemical potential $\mu=\mu(x)$ or by the order parameter $\Delta=\Delta(x)$.
As we mentioned in the introduction, the symbol of $H_1$ has to be asymptotically homogeneous for the ellipticity condition \eqref{eq:ellip} to hold and the theories developed in the preceding sections to apply. We thus regularize the operator using functions of the form $\Rm^p\ni y\to \aver{y}_\eps=\aver{|y|}_\eps$ in \eqref{eq:avereps}. The regularization does not modify the symbol on compact domains in $\Rm^{2d}$ for $0<\eps$ sufficiently small and hence does not affect the computations of the index in \eqref{eq:FH} and \eqref{eq:degexplicit}. 

When $\eta>0$, we consider two regularized operators, one with a domain wall in the chemical potential
\begin{equation}\label{eq:1dcpwall}
   H_1 = (\eta \aver{D_x}_\eps^{-1} D_x^2 - \mu x )\sigma_1 + \Delta D_x  \sigma_2= \rh^1\cdot\Gamma_1,\quad \rh^1=(\eta \aver{\xi}_\eps^{-1} \xi^2-\mu x, \Delta \xi ),
\end{equation}
and one with a domain wall in the order parameter
\begin{equation}\label{eq:1dorderwall}
 H_1 = (\eta D_x^2 - \mu \aver{x}_\eps^2)\sigma_1 + \Delta \frac{D_x  x +xD_x}2 \sigma_2= \rh^1\cdot\Gamma_1,\quad \rh^1=(\eta \xi^2-\mu \aver{x}_\eps^2, \Delta \xi x ) .
\end{equation}
We observe that for $H_1=\ow a_1$, then $a_1\in ES^m_1$ is elliptic with $m=1$ in the first example and $m=2$ in the second example. Consider the second case \eqref{eq:1dorderwall}. We wish to show that $|\rh^1|^2\geq C(|X|-1)^4$. This is clear for $|x|\leq1$ and for $|x|\geq1$, then $|x|\geq C\aver{x}_\eps$ for $C>0$ so that $|\rh^1|^2\geq (\eta\xi^2-\mu\aver{x}^2_\eps)^2 + C\xi^2\aver{x}_\eps^2$ for $C>0$. The latter expression is homogeneous in $(\xi,\aver{x}_\eps)$ and non-vanishing on the unit sphere in these variables. This shows that $a_1\in ES_1^2$. A similar computation shows that $a_1=\rh^1\cdot\Gamma\in ES^1_1$ in \eqref{eq:1dcpwall}. Note that we could also have used the following regularization for the first example: $ \rh^1=(\eta \xi^2-\mu x \aver{x}_\eps, \Delta \xi \aver{\xi}_\eps)$, in which case $\rh^1\cdot\Gamma_1\in ES^2_1$.

We now compute the topological charges of the regularized operators starting with \eqref{eq:1dcpwall}. We observe that $\rh^1(\xi,x)$ vanishes only at $x=\xi=0$. The Jacobian there has determinant equal to $\mu\Delta$. The topological charge of  \eqref{eq:1dcpwall} is therefore equal to $\ind F=\deg \rh^1=\sgn{\mu\Delta}$ assuming $\mu\Delta\not=0$. Here and below, $F$ is defined as usual by the relation $H_1=\sigma_-\otimes F^* + \sigma_+\otimes F$.

We next turn to \eqref{eq:1dorderwall}, where $\rh^1(\xi,x)$ vanishes when $\eta\xi^2=\mu$ and $x=0$. When $\mu<0$, there is no real solution to this equation and the topological charge vanishes. When $\mu>0$, we have two solutions $\xi=\pm\sqrt{\mu/\eta}$. At these points,  the Jacobian matrix $\nabla \rh^1$ has components $(2\eta\xi;0;\Delta x;\Delta \xi)$ with determinant equal to $2\eta\Delta \xi^2$. The topological charge of $H_1$ in  \eqref{eq:1dorderwall} is therefore equal to  $\ind F=\deg \rh^1=2\sgn{\Delta}$.

Let us finally consider the asymptotic regime $\eta=0$ for a mass term $m^*\to\infty$ and a corresponding Hamiltonian $H_1=-\mu\sigma_1+\Delta D_x \sigma_2$. A domain wall in the chemical potential is then modeled by $\mu(x)=\mu x$. We then observe that $H_1=\ow a^1$ with $a^1\in ES^1_1$ and a topological charge equal to $\ind F=\deg \rh^1=\sgn{\mu\Delta}$ as in the setting $\eta>0$.  A domain wall in the order parameter requires the following regularized Hamiltonian $H_1 = - \mu \aver{x}_\eps^2\sigma_1 + \Delta \frac12(D_x  x+xD_x)  \sigma_2$, which is however gapped for $\mu\not=0$ and hence topologically trivial.

\medskip

The regularization of the above Hamiltonians is necessary for the symbol $a_1$ to have eigenvalues of order $|X|^m$ as $|X|\to\infty$.  The degree of the corresponding field $\rh^1$ is independent of the regularizing parameter $\eps$. The corresponding analysis with bounded domain walls, for differential operators and under suitable assumptions, is considered in \cite{bal2022topological,QB-NUMTI-2021}.

\paragraph{Two-dimensional examples.} We now consider two-dimensional examples of the above superconductor models. The $p+ip$ (or p-wave) model with order parameter proportional to momentum, is of the form 
\[
  H_1 = \Delta_1D_1 \sigma_1 + \Delta_2 D_2 \sigma_2 +  (\eta D\cdot D-\mu)\sigma_3.
\]
We assume here that $\Delta_1$ and $\Delta_2$ are real-valued. The case $\eta=0$ is a Dirac operator and was treated earlier. We thus assume $\eta>0$. A domain wall in the chemical potential is then implemented as 
\begin{equation}\label{eq:2dcpwall}
   H_1 =   \Delta_1 D_1  \sigma_1 +  \Delta_2 D_2 \sigma_2 + (\eta \aver{D}_\eps^{-1} D\cdot D- \mu x_1)\sigma_3.
\end{equation}
The symbol of this operator is $\rh^1\cdot\Gamma_1$  with  $\rh^1(\xi_1,\xi_2,x_1) =(\Delta_1 \xi_1 ,\Delta_2 \xi_2 , \eta \aver{\xi}_\eps^{-1} |\xi|^2-\mu x_1)$.
This regularization ensures that $a_1\in ES^1_1$ is elliptic.  We could have defined a regularization in $ES^2_1$ instead with $\rh^1=(\Delta_1 \xi_1 \aver{\xi_1}_\eps,\Delta_2 \xi_2 \aver{\xi_2}_\eps, \eta|\xi|^2-\mu x_1 \aver{x_1}_\eps).$

It remains to compute the degree of $\rh^1$. We find that $\rh^1=0$ when $\xi_1=\xi_2=x_1=0$ and that the Jacobian determinant there is equal to $-\mu\Delta_1\Delta_2$. The topological charge of the above operator is therefore $2\pi\sigma_I(H_1)=\deg \rh^1 =-\sgn{\mu\Delta_1\Delta_2}$, which is consistent with \cite{bernevig2013topological}.

In \cite[Chapter 22]{volovik2009universe}, the domain wall is implemented in the order parameter $\Delta_1$, which after appropriate regularization, gives:
\[
  H_1 = \Delta_1 \frac{x_1D_1+D_1x_1}2 \sigma_1 + \Delta_2 D_2 \aver{D_2}_\eps \sigma_2 +  (\eta D\cdot D-\mu \aver{x_1}^2_\eps)\sigma_3.
\]
The constants $\Delta_1$, $\Delta_2,$ and $\mu$ are assumed not to vanish and $\mu>0$.
The symbol $a_1=\rh^1\cdot\Gamma_1\in ES^2_1$ is given by
\[
   \rh^1(\xi_1,\xi_2,x_1) = (\Delta_1 \xi_1 x_1,\Delta_2 \xi_2 \aver{\xi_2}_\eps, \eta|\xi|^2-\mu \aver{x_1}^2_\eps).
\]
We have $\rh^1=0$ when $\xi_2=0$, $x_1=0$, and $\eta\xi_1^2=\mu$. At each of the two solutions, the Jacobian of $\rh^1$ is given by
\[
  \nabla \rh^1 = \begin{pmatrix} 0 & 0 & \Delta_1 \xi_1 \\ 0 & \Delta_2  & 0 \\ 2\eta \xi_1 &0 & 0  \end{pmatrix},\qquad |\nabla \rh^1| = -2 \eta \Delta_1 \Delta_2 \xi_1^2 .
\]
Therefore, the topological charge of $H_1$ is equal to $2\pi \sigma_I(H_1) = \deg \rh^1= -2\sgn{\Delta_1\Delta_2}$ as in \cite[Chapter 22]{volovik2009universe}. When $\mu<0$, the find $\deg \rh^1=0$ again. 

\medskip

In \cite[Chapter 22]{volovik2009universe}, a model for a d-wave superconductor is given as 
\[
  H_1 =\Delta_1 D_1D_2 \sigma_1 + \Delta_2 (D_1^2-D_2^2) \sigma_2 + (\eta D\cdot D-\mu) \sigma_3.
\]
Following  \cite[Chapter 22]{volovik2009universe}, we implement a domain wall in $\Delta_1(x_1)$ and a regularization that gives the operator
\[
  H_1 = \ow \rh^1\cdot\Gamma_1,\quad \rh^1 = (\Delta_1 \frac{x_1}{\aver{x_1}_\eps} \xi_1\xi_2, \Delta_2(\xi_1^2-\xi_2^2), \eta|\xi|^2 - \mu \aver{x_1}_\eps^2).
\]
This generates a symbol $a_1=\rh^1\cdot\Gamma_1\in ES_1^2$ as may be verified. Then $\rh^1=0$ when $\xi_1^2=\xi_2^2=\frac{\mu}{2\eta}$ while $x_1=0$. At each of these four roots, we compute
\[
  \nabla \rh^1 = \begin{pmatrix} 0 & 0 & \Delta_1 \xi_1\xi_2 \\ 2\Delta_2 \xi_1 & -2\Delta_2 \xi_2 & 0 \\ 2\eta \xi_1 &2\eta \xi_2 & 0  \end{pmatrix},\qquad |\nabla \rh^1| = 4 \eta \Delta_1 \Delta_2 \xi_1^2 \xi_2^2.
\]
The sign of the Jacobian is the same at each of the roots so that $\deg \rh^1= 4 \sgn{\Delta_1\Delta_2}$. Therefore, we obtain a topological charge $2\pi\sigma_I(H_1)=\deg \rh^1 =4\sgn{\Delta_1\Delta_2}$ when both $\mu>0$ and $\eta>0$. When $\mu<0$, the operator is gapped and topologically trivial again.


\paragraph{Three-dimensional example.} Following \cite[(17.24)]{bernevig2013topological}, we consider the time-reversal invariant superconductor (or superfluid) model
\[
  H_1 =  \Delta \sigma_1 \otimes (D\cdot\sigma) + (\eta D\cdot D-\mu) \sigma_3\otimes I
\]
acting on $\Cm^4$. When $\eta=0$ and $\mu(X)=\mu x_1$, we obtain a standard Dirac operator with a topological charge $\sgn{\mu \Delta}$ as may be verified (see also the following calculations). When $\eta>0$, we conjugate the above operator by $g_1\otimes I$ (which maps $\sigma_3$ to $-\sigma_2$) and after regularization and domain wall $\mu(X)=\mu x_1$ obtain the Hamiltonian
\[
  H_1 =  \Delta  \sigma_1 \otimes (D\cdot\sigma)  +  (\mu x_1-\eta \aver{D}_\eps^{-1} D\cdot D ) \sigma_2\otimes I.
\]
The operator has an elliptic symbol in $ES^1_1$. We can then introduce as earlier $H_2=H_1 + \sigma_3\otimes I x_2 $ and $F=H_2-ix_3$. Following Theorem \ref{thm:tccp}, the topological charge of $H_1$ is then defined as $\ind F = 2\pi\sigma_I(H_2) = -\deg \rh^1$ with 
\[
   \rh^1 = (\Delta  (\xi_1,\xi_2,\xi_3), \mu x_1 -\eta \aver{\xi}_\eps^{-1}|\xi|^2).
\]
We find $\rh^1=0$ at the point $\xi=0$ and $x_1=0$.  The Jacobian $\nabla \rh^1$ at this point has determinant $\Delta^3\mu$ so that the topological charge is given by $\ind F=-\sgn{ \mu\Delta}$.
\bigskip

\paragraph{Other Hamiltonians.} The above examples all fit within the framework of operators with symbols $a_k=\rh^k\cdot\Gamma_k$ verifying that $a_k^2$ is a scalar operator resulting in two energy bands. The ellipticity requirement is that the energies tend to infinity as $|X|$ goes to infinity with a prescribed power $m>0$. In this setting, the topological charge can conveniently be computed as the degree of the field $\rh^k$ as shown in the preceding examples. 

The computations easily extend to operators of the form $H_k\oplus \tilde H_k$ or more general direct sums of operators that are in the above form. More generally, the topological charge conservation result in Theorem \ref{thm:tcc} applies to operators  beyond those of the form $a_k=\rh^k\cdot\Gamma_k$ provided that the symbol has eigenvalues appropriately converging to $\infty$ as $|X|\to\infty$. 

The topological charge conservation in Theorem \ref{thm:tcc} also applies to the sequence of effective Hamiltonians one obtains for continuous models of two-dimensional Floquet topological insulators. Such effective Hamiltonians are not in the form \eqref{eq:cliffordrep} and their asymmetric transport properties are most easily estimated by bulk-difference invariants related to the Fedosov-\horm formula; see \cite{bal2022multiscale}. 

There are natural examples of topologically non-trival Hamiltonians to which the theory presented in this paper does not apply directly. A typical example is based on the shallow water wave (two-dimensional) Hamiltonian \cite{delplace2017topological,souslov2019topological}
\begin{equation}\label{eq:water}
  H_0 = \begin{pmatrix} 0 &D_1 & D_2 \\ D_1 & 0 & -if \\ D_2 & if & 0 \end{pmatrix} = \ow a_0,\qquad a_0=  \begin{pmatrix} 0 &\xi_1 & \xi_2 \\ \xi_1 & 0 & -if \\ \xi_2 & if & 0 \end{pmatrix}
\end{equation}
where $f=f(x_1)$ represents a (real-valued) Coriolis force. The symbol of that operator has two eigenvalues $\pm\lambda(x,\xi)$ with $\lambda(x,\xi)=\sqrt{\xi_1^2+\xi_2^2+f^2(x_1)}$ similar to those of a Dirac operator and a third uniformly vanishing eigenvalue. Therefore $H_0+\alpha$ is gapped for $\alpha\not=0$ but with a gap independent of $\xi$. The presence of this flat band of essential spectrum creates difficulties that are not only technical: the topological charge conservation (a bulk-interface correspondence in dimension $d=2$) in Theorem \ref{thm:tcc} does not always hold although it does for certain profiles $f(x)$; see \cite{bal2022topological,graf2020topology,tauber2019bulk}. 

A regularized version of the above Hamiltonian, however, fits into the framework of the current paper. We observe that the kernel of $a_0$ is associated to the eigenvector $\psi_0=(\xi_1^2+\xi_2^2+f^2)^{-\frac12} (if,\xi_2,-\xi_1)^t$. We define the projector $\Pi_0=\psi_0\psi_0^*$ and for $0\not=\mu\in\Rm$ the regularized (pseudo-differential) Hamiltonian $H_\mu=H_0+ \mu \ow \lambda^2(1+\lambda^2)^{-\frac12} \Pi_0$. The symbol now has eigenvalues given by $\pm\lambda$ and $\mu \lambda^2(1+\lambda^2)^{-\frac12}$ (ensuring that the symbol of $H_\mu$ is smooth). If we choose $f(x_1)=\nu x_1$ with $\nu\not=0$ to generate a domain wall in the first variable, we observe that $H_\mu$ is elliptic with symbol in $ES_1^1$ (with $m=1$). Following computations in, e.g., \cite{bal2022topological,QB-NUMTI-2021}, which we do not reproduce here, we find that the topological charge of $H_\mu$ equals $2\sgn{\nu}$ independently of the choice of $\mu\not=0$. This is the topological charge obtained when $\mu=0$ under smallness constraints in \cite{bal2022topological}.

A second type of Hamiltonian for which the theory cannot possibly apply is the ubiquitous two-dimensional scalar magnetic \schr operator
\begin{equation}\label{eq:magsch}
  H = (D_1-A_1)^2 + (D_2-A_2)^2 +V,
\end{equation}
for instance for $A=(0,Bx_1)$ so that $\nabla\times A=B$ is a constant magnetic field. In this case, the spectral decomposition of this operator gives rise to a countable number of infinitely degenerate flat bands, the Landau levels, which are incompatible with the elliptic structure we impose on the symbol of the Hamiltonian in this paper. Even the notion of a domain wall is not immediate for the above model. Note that the integral in \eqref{eq:FH} vanishes for $a$ scalar-valued in dimension $d\geq2$ since then $(a^{-1}da)^{\wedge 2}=0$. For the numerous applications of this model, both discrete and continuous, to the understanding of the integer quantum Hall effect, we refer the reader to, e.g., \cite{PhysRevLett.65.2185,avron1994,bellissard1994noncommutative,combes2005edge,dombrowski2011quantization,elbau2002equality,hatsugai1993chern,schulz2000simultaneous}.


\section*{Acknowledgment} This work was funded in part by NSF Grants DMS-1908736 and EFMA-1641100.

%
%
{\small

}

\appendix

\section{Notation, operators and functional calculus}
\label{sec:notation} 
%

This appendix recalls results summarized in \cite{bony2013characterization} allowing us to characterize spaces of symbols $a_k$ for $H_k=\ow a_k$ adapted to operators modeling unbounded domain walls, domains of definition for $H_k$, as well as functional calculus showing in particular that $(z-H_{d-1})^{-1}$ and $\varphi'(H_{d-1})$ are pseudo-differential operators. We also recall results on semiclassical calculus and the Helffer-Sj\tio strand formula following \cite{dimassi1999spectral}.

\medskip\noindent{\bf Symbol spaces \cite{bony2013characterization}.} See also \cite[Chapters 18\&19]{H-III-SP-94},  \cite{bony1996caracterisations,bony1994espaces,bony1989quantification} and \cite[Chapter 8.3]{zworski2012semiclassical}.

On phase space $\Rm^{2d}$ in $d$ spatial dimensions parametrized by $X=(x,\xi)$ with $x\in \Rm^d$ and $\xi\in  (\Rm^d)^*\simeq \Rm^d$, we define a Riemannian metric $g$ in the  Beals-Fefferman form by
\[
  g_X(dx,d\xi) = \Phi_x^{-2}(X) dx^2 + \Phi_\xi^{-2}(X) d\xi^2.
\]
We use the notation $\aver{u}=\sqrt{1+|u|^2}$ for $|\cdot|$ the Euclidean norm applied to a vector $u$. For $u=(u_1,u_2)$, we use the notation $\aver{u_1,u_2}=\sqrt{1+|u_1|^2+|u_2|^2}$. Associated to the above metric $g$, we define the Planck function $h(X)$ and its inverse $\lambda(X)$ by $1\leq h^{-1}(X)=\lambda(X)=\Phi_\xi(X)\Phi_x(X)$. 

In this paper, we consider two metrics: $g^i$ and $g^s$. The metric $g^i$ is defined by $\Phi_x^i(X)=\Phi_\xi^i(X)=\aver{X}\geq1$ with then $h^i=\aver{x,\xi}^{-2}$. The metric $g^s$ is defined by $\Phi_x^s(X)=\aver{x}\geq1$ and $\Phi_\xi^s(X)=\aver{\xi}\geq1$ with Planck function $h^s=\aver{x}^{-1}\aver{\xi}^{-1}$.

We define a weight (order function) $M$ and a class of symbols $S(M,g)$ such that  $M\in S(M,g)$ is {\em admissible} for $g$; see \cite[Definition 2.3]{bony2013characterization}. For $0\leq k\leq d$, we decompose $x=(x_k',x_k'')$ with  $x_k'\in \Rm^{k}$ and $x_k''\in\Rm^{d-k}$. We define the weights 
\begin{equation}\label{eq:weights}
w_k(X)=\aver{x_k',\xi} \qquad \mbox{ and }\qquad M_k(X)=w_k^m(X),\quad \mbox{ for } \quad m\in \Nm_+.
\end{equation}
Following classical calculations \cite{bony2013characterization}, the weights $M_k$ are admissible for $g\in \{g^i,g^s\}$ and satisfy that $M_k\leq C\lambda^p$ for some $C>0$ and $p<\infty$ when $\lambda\in\{\lambda^i,\lambda^s\}$. This implies in particular that $M_kh^N$ goes to $0$ as $X\to\infty$ for $N$ sufficiently large when $h\in \{h^i,h^s\}$.

For $g=g^s$, a symbol $b\in S(M,g^s)$ precisely when \eqref{eq:Sk} holds \cite{bony2013characterization,H-III-SP-94}. For $g=g^i$, a symbol $b\in S(M,g^i)$ when \eqref{eq:Sk} holds with $\aver{x}^{|\alpha|}\aver{\xi}^{|\beta|}$ replaced by $\aver{X}^{|\alpha|+|\beta|}$. The metric $g^i$ is referred to as the isotropic metric with $S(M,g^i)$ a subspace of $S(M,g^s)$.  Since they appear repeatedly in the derivations, we define for $0\leq k\leq d$  the spaces:
\begin{equation}\label{eq:symbols}
  S_k^m(g)= S(w_k^m,g)\otimes \Mm(n_k) \qquad \mbox{ and } \qquad \tilde S^m(g)= S(w^m_d,g)\otimes \Mm(n_{d-1}).
\end{equation}
We also define $S_k^m:=S_k^m(g^s)$ and $\tilde S^m:= \tilde S^m(g^s)$. Here, $n_k$ and $n_{d-1}$ are the dimensions of the spinors given in the introduction and in the construction of the augmented Hamiltonians in section \ref{sec:local} while $\Mm(n)$ is the space of $n\times n$ matrices with complex coefficients.

\medskip\noindent{\bf Hilbert Spaces \cite[\S2.2.3]{bony2013characterization}\cite{bony1994espaces}.} See also  \cite{bony1996caracterisations,bony1989quantification,nicola2011global} and \cite[Chapter 8.3]{zworski2012semiclassical}.

Associated to the weights $M_k$ are the Hilbert spaces $\mH(M_k,g)$ of $u\in\mS'(\Rm^d)$ such that $\ow a \,u\in L^2(\Rm^d)$ whenever $a\in S(M_k,g)$. These spaces are in fact independent of $g\in \{g^s,g^i\}$ \cite{bony2013characterization} and hence referred to as $\mH(M_k)$. We observe that $\mH(1,g)= L^2(\Rm^d)$. The Hilbert spaces associated to $S_k^m$ and $\tilde S^m$ are denoted for $0\leq k\leq d$ by
\begin{equation}\label{eq:Hspaces}
  \mH_k^m= \mH(w_k^m) \otimes \Mm(n_k)\qquad \mbox{ and } \qquad\tilde \mH^m= \mH(w_d^m)\otimes \Mm(n_{d-1}).
\end{equation}
When $a_k\in S^m_k(g)$, we thus obtain that $\ow a_k$ maps $\mH^m_k$ to $\mH^0_k = L^2(\Rm^d)\otimes \Mm(n_k)$. Note that $\mH^0_{d-1}=\tilde \mH^0=L^2(\Rm^d)\otimes \Mm(n_{d-1})$. Pseudo-differential operators with Weyl quantization are defined in \eqref{eq:weyl} with integrals defined as oscillatory integrals.

\medskip\noindent{\bf Ellipticity \cite[\S 2.3.3]{bony2013characterization}.}
We say that $a\in S(M,g)\otimes \Mm(n)$ Hermitian valued is {\bf elliptic} when
\[
  |{\rm det}\ a(X)| ^{\frac1n} \geq C_1M(X)- C_2
\]
for some positive constants $C_{1,2}$. This is equivalent to imposing that each eigenvalue of $a(X)$ is bounded away from $0$ by at least $CM(X)$ outside of a compact set.  We then say that $a\in ES(M,g)\otimes \Mm(n)$ and define the corresponding spaces of Hermitian elliptic symbols as $ES^m_k(g)$ for $0\leq k\leq d$ and $E\tilde S^m(g)$.

Since $M(X)\leq C\lambda^p(X)$, ellipticity implies that $H=\ow a \in \ow ES(M,g)\otimes \Mm(n)$ is a self-adjoint operator with domain of definition $\mD(H)=\mH(M)\otimes \Mm(n)$ and such that for some positive constant $C$ \cite{bony2013characterization}
\[
  \|H\psi \|_{L^2(\Rm^d)\otimes \Mm(n)} \geq C ( \|\psi\|_{\mH(M)\otimes \Mm(n)} - \|\psi\|_{L^2(\Rm^d)\otimes \Mm(n)}).
\]
For $a_k\in ES^m_k$, we thus obtain that $H_k=\ow a_k$ is a self-adjoint operator from its domain of definition $\mD(H_k)=\mH^m_k$ to $\mH^0_k$. Similarly, for $a\in E\tilde S^m$, then $F=\ow a$  is an unbounded operator from its domain of definition $\mD(F)=\tilde\mH^m$ to $\tilde\mH^0$.

\paragraph{Functional calculus \cite[\S 2.3]{bony2013characterization}.}
For $H=\ow a$ and $a$ elliptic, the above results show  that the resolvent $(z-H)^{-1}$ is an isomorphism from $L^2(\Rm^d)\otimes \Mm(n)$ to $\mH(M)\otimes \Mm(n)$ for $z\in\Cm$ when ${\rm Im}(z)\not=0$.

With the above assumptions, we have the {\em Wiener property} \cite{bony2013characterization} stating that: (i) $A\in \ow S(1,g)$ invertible in $\mL(L^2)$ implies that $A^{-1}\in \ow S(1,g)$; and (ii) $A\in \ow S(M,g)$ bijection from $\mH(M_1,g)$ to $\mH(M_1/M,g)$, then $A^{-1}\in \ow S(M^{-1},g)$.

This allows us to state the following result:
\begin{lemma}\label{lem:resiso}
 Let $H\in \ow ES_{d-1}^m$. Then  
 $(\pm i+H)^{-1}\in \ow ES_{d-1}^{-m}(g)$ is an isomorphism from  $\mH_{d-1}^0$ to $\mH_{d-1}^m$.
\end{lemma}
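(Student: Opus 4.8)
The plan is to combine the ellipticity hypothesis with the Wiener property quoted just above the statement. First I would observe that since $a = a_{d-1} \in ES^m_{d-1}$ is Hermitian-valued, the operator $H = \ow a$ is self-adjoint with domain $\mH^m_{d-1}$, so that $\pm i$ is not in its spectrum and $(\pm i + H)^{-1}$ is a bounded bijection from $\mH^0_{d-1} = L^2 \otimes \Mm(n_{d-1})$ onto $\mH^m_{d-1}$. This already gives the isomorphism claim at the level of Hilbert spaces; the real content is that the resolvent is again a pseudodifferential operator whose symbol lies in the elliptic class $ES^{-m}_{d-1}(g)$.

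Next I would identify the symbol class. The symbol $b(X) := (\pm i + a(X))^{-1}$ is well-defined pointwise because $a(X)$ is Hermitian and hence $\pm i + a(X)$ is invertible for every $X$; moreover, from $a \in S^m_{d-1}$ together with the ellipticity bound $|\det a(X)|^{1/n_{d-1}} \geq C_1 w_{d-1}^m(X) - C_2$, one checks by differentiating the matrix identity $b(\pm i + a) = I$ and using the Leibniz rule that $b \in S^{-m}_{d-1}(g)$ for $g \in \{g^s, g^i\}$, with the correct weight $w_{d-1}^{-m}$; in fact $b$ is elliptic in that class since all eigenvalues of $b(X)$ are comparable to $w_{d-1}^{-m}(X)$ away from a compact set, so $b \in ES^{-m}_{d-1}(g)$. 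The point is then that $(\pm i + H)^{-1}$ and $\ow b$ differ by a smoothing operator: $\ow a \cdot \ow b = I + \ow r$ with $r$ of negative order, and the Wiener property (ii) — applied with $M = w_{d-1}^m$, $M_1 = w_{d-1}^m$, so that $\pm i + H$ is a bijection from $\mH(w_{d-1}^m)$ to $\mH(1) = L^2$ — gives directly that the inverse operator lies in $\ow S(w_{d-1}^{-m}, g)$. Ellipticity of the resulting symbol can then be read off either from the parametrix construction or from the fact that $\ow b$ is its inverse modulo smoothing.

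Concretely, the cleanest route is: (1) note $\pm i + H \in \ow ES^m_{d-1}$ (sum of the elliptic $H$ and a bounded multiple of the identity does not spoil ellipticity since the $C_2$ term in \eqref{eq:ellip} absorbs it); (2) $\pm i + H$ is self-adjoint with a bounded everywhere-defined inverse, hence a bijection $\mH(w_{d-1}^m) \to \mH(1)$; (3) invoke the Wiener property (ii) to conclude $(\pm i + H)^{-1} \in \ow S(w_{d-1}^{-m}, g)$, mapping $\mH^0_{d-1} \to \mH^m_{d-1}$ isomorphically; (4) verify the symbol is elliptic by composing with $a$ and tracking the leading term, or simply by pointwise inversion of the principal symbol. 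I would note that the choice $g \in \{g^s, g^i\}$ is immaterial since the associated Hilbert spaces coincide.

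The main obstacle is really a bookkeeping one rather than a conceptual one: making sure that the weight in the Wiener property is applied correctly so that the inverse lands in order $-m$ and not merely in some negative order, and that the pointwise matrix inverse $b(X) = (\pm i + a(X))^{-1}$ genuinely satisfies the Beals–Fefferman-type derivative estimates \eqref{eq:Sk} with weight $w_{d-1}^{-m}$. This requires a short inductive argument on $\partial^\alpha_x \partial^\beta_\xi b$ using $\partial b = -b(\partial a)b$ and the ellipticity lower bound $\|b(X)\| \leq C w_{d-1}^{-m}(X)$ for $|X|$ large, together with the fact that on the (compact) bad set everything is trivially bounded; this is the only place where ellipticity is used in an essential way beyond invertibility of $\pm i + H$ as an operator. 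Everything else follows formally from the functional-calculus and Wiener-property results already cited in the excerpt.
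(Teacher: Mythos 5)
Your proposal is correct and follows essentially the same route as the paper: use self-adjointness and ellipticity to see that $\pm i+H$ is a bijection from its domain $\mH^m_{d-1}$ onto $\mH^0_{d-1}$, then invoke the Wiener property (ii) with $M=M_1=w_{d-1}^m$ to place the inverse in $\ow S^{-m}_{d-1}(g)$. Your extra care about the pointwise symbol estimates and the ellipticity of the inverse symbol is a reasonable elaboration of details the paper leaves implicit, not a different argument.
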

\begin{proof}
The proof follows \cite{bony1996caracterisations,bony2013characterization}.  Associated to $H$ is a resolvent operator $R_z=(z-H)^{-1}$, which is always defined as a bounded operator by spectral theory. 
When $H$ is elliptic, then the domain $\mD(H)=\mH^m_{d-1}$. Moreover, $R_z$ is a bijection from $\mH^0_{d-1}=L^2(\Rm^n)\otimes\Mm(n_{d-1})$ to that domain. We then apply above Wiener property \cite{bony1996caracterisations,bony2013characterization,bony1994espaces,bony1989quantification} to obtain that $R_z^{-1}\in\ow S^{-m}_{d-1}(g)$. 
\end{proof}
The above shows that $(I+H^2)^{-1}$ maps $\mH_{d-1}^0$ to $\mH_{d-1}^{2m}$ and has a symbol in $ES_{d-1}^{-2m}$. Moreover, using the Helffer-Sj\tio strand formula as done in \cite[Theorem 4]{bony2013characterization} using $p\to-\infty$ in the notation there, we obtain the following result on the functional calculus:
\begin{lemma}\label{lem:fccalc}
  Let $\phi\in C^\infty_c(\Rm)$ and $H\in \ow ES_{d-1}^m$. Then $\phi(H)\in \ow S_{d-1}^{-\infty}$.
\end{lemma}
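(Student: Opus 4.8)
The plan is to reduce the claim to the Helffer–Sjöstrand representation of $\phi(H)$ together with the resolvent estimates already available from Lemma \ref{lem:resiso} and the Wiener property. First I would fix an almost analytic extension $\tilde\phi\in C_c^\infty(\Cm)$ of $\phi$, i.e. $\tilde\phi_{|\Rm}=\phi$, $\operatorname{supp}\tilde\phi$ compact, and $|\bar\partial\tilde\phi(z)|\leq C_N|\operatorname{Im} z|^N$ for every $N$. Then the functional calculus gives
\begin{equation}\label{eq:HSrep}
  \phi(H) = \frac{1}{\pi}\int_{\Cm} \bar\partial\tilde\phi(z)\,(z-H)^{-1}\,dz\wedge d\bar z.
\end{equation}
The strategy is to show that the integrand, as a symbol, lies in $S_{d-1}^{-m}(g)$ with seminorms controlled by negative powers of $|\operatorname{Im} z|$, and then to integrate in $z$: the rapid vanishing of $\bar\partial\tilde\phi$ near the real axis beats any fixed negative power of $|\operatorname{Im} z|$, so after integration one gains arbitrarily fast decay in the weight $w_{d-1}$, i.e. a symbol in $S_{d-1}^{-\infty}=\bigcap_N S_{d-1}^{-N}$. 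This is exactly the scheme carried out in \cite[Theorem 4]{bony2013characterization}, where one lets the order parameter tend to $-\infty$; I would invoke that result essentially as a black box once the setup is in place.

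Concretely, the key steps are: (1) By Lemma \ref{lem:resiso} (and more precisely the quantitative form of the Wiener property recalled before it), for $\operatorname{Im} z\neq 0$ the resolvent $(z-H)^{-1}$ is a pseudodifferential operator with symbol $r_z\in S_{d-1}^{-m}(g)$, and one has an elliptic-parametrix construction: $r_z \approx (z-a)^{-1}$ modulo lower-order terms, since $a$ is elliptic Hermitian so $(z-a(X))^{-1}$ is well defined and $|(z-a(X))^{-1}|\lesssim |\operatorname{Im} z|^{-1}$. (2) Track the $z$-dependence of the symbol seminorms: each $x$- or $\xi$-derivative of $(z-a)^{-1}$ produces an extra factor that is at worst $|\operatorname{Im} z|^{-1}$ times a bounded (in $z$) symbol factor, and the Moyal-product corrections in the parametrix only improve the order in $w_{d-1}$ while costing further negative powers of $|\operatorname{Im} z|$; the upshot is a bound of the form $\|r_z\|_{S_{d-1}^{-m-j}(g),N} \leq C_{j,N}\,|\operatorname{Im} z|^{-M(j,N)}$ for the portion of $r_z$ that is of order $-m-j$. (3) Insert into \eqref{eq:HSrep}: since $|\bar\partial\tilde\phi(z)|\leq C_{N'}|\operatorname{Im} z|^{N'}$ for all $N'$ and the integration is over a compact set, for each target order $-N$ one chooses $N'$ large enough that the $z$-integral converges and lands in $S_{d-1}^{-N}(g)$. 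Hence $\phi(H)=\ow b$ with $b\in\bigcap_N S_{d-1}^{-N}(g) = S_{d-1}^{-\infty}(g)=S_{d-1}^{-\infty}$. (4) Note $S_{d-1}^{-\infty}(g^s)=S_{d-1}^{-\infty}(g^i)$ here because the distinction between the two metrics only affects the polynomial growth/decay rates in $x$ versus $\xi$, and demanding decay faster than every power of $w_{d-1}=\aver{x'_{d-1},\xi}$ subsumes both.

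The main obstacle is step (2): making the symbolic parametrix for the resolvent uniform in $z$ with explicit, at-worst-polynomial blow-up in $|\operatorname{Im} z|^{-1}$. The naive bound $\|(z-H)^{-1}\|_{L^2\to L^2}\leq |\operatorname{Im} z|^{-1}$ is immediate from self-adjointness, but promoting this to control of \emph{all} symbol seminorms — in particular that the $w_{d-1}$-order genuinely improves at each parametrix iteration while the $z$-loss stays polynomial — is the technical heart, and it is precisely what \cite{bony2013characterization} supplies via the calculus adapted to the admissible metrics $g^i,g^s$; I would cite that rather than redo it. A secondary point worth stating carefully is that $H\in\ow ES_{d-1}^m$ with $m\in\Nm_+$ has spectrum that is discrete near $0$ only if $w_{d-1}\to\infty$, but in any case ellipticity guarantees $(z-a(X))^{-1}$ exists pointwise for $\operatorname{Im} z\neq 0$ with the stated bound, which is all \eqref{eq:HSrep} needs; no spectral-gap assumption is used, and $\phi$ may be supported anywhere on $\Rm$.
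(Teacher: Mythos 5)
Your proposal is correct and follows essentially the same route as the paper, which likewise obtains the result from the Helffer--Sj\"ostrand representation combined with the resolvent/Wiener-property estimates and then invokes \cite[Theorem 4]{bony2013characterization} (letting the order parameter $p\to-\infty$) for the uniform symbolic control of $(z-H)^{-1}$. Note only that your step (4) is unnecessary: the lemma asserts membership in $S_{d-1}^{-\infty}=S_{d-1}^{-\infty}(g^s)$, which your steps (1)--(3) already deliver, and the claimed identity $S_{d-1}^{-\infty}(g^s)=S_{d-1}^{-\infty}(g^i)$ is not needed (and is delicate, since $w_{d-1}$ does not control $x_d$).
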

\begin{remark}\label{rem:unifbd}
  The above result means the following in terms of seminorms: for each $N\in\Nm$ and each seminorm $k$ defining the topology on the space of symbols, there is a seminorm $l$ such that $\phi(H)$ is bounded for the seminorm $k$ uniformly in the seminorm $l$ applied to $a$. 
 For a sequence of operators $H(\eps)= \ow a(\eps)$ with $a(\eps)$ with seminorms of $S^m_{d-1}$ uniformly bounded in $\eps$, this implies that the symbol of $\phi(H_\eps)$ is bounded in any $S^{-N}_{d-1}$ uniformly in $\eps$ as well.
\end{remark} 


\medskip
\noindent{\bf Semiclassical calculus \cite{dimassi1999spectral}.} The computation of several topological invariants, as in the proof of \cite[Theorem 19.3.1]{H-III-SP-94}, simplifies in the semiclassical regime. Let $0<h\leq h_0\leq 1$ be the semiclassical parameter. We define semiclassical operators in the Weyl quantization as 
\begin{equation}\label{eq:weylh}
  H_h \psi= \ow_h a\  \psi := \dfrac{1}{(2\pi h)^d} \dint_{\Rm^{2d}} e^{i\frac1h\xi\cdot (x-y)} a(\frac{x+y}2,\xi;h)  \psi(y)dy d\xi,
\end{equation}
for $a(X;h)$ a matrix-valued symbol in $\Mm(n)$ for each $X\in\Rm^{2d}$ and $h\in(0,h_0]$ and $\psi(x)$ a spinor in $\Cm^n$.  The semi-classical symbol $a(X;h)$ is related to the Schwartz kernel $K(x,y;h)$ of $H_h$ as
\[
 a(x,\xi;h) = \dint_{\Rm^{d}} e^{-i\frac1h \xi\cdot y} K(x+\frac y2,x-\frac y2;h) dy.
\]
Note that $\ow a(x,h\xi;h)=\ow_h a(x,\xi;h)$. We define the classes of semi-classical symbols as $\rS^j(M)$ constructed with the semi-classical metric in Beals-Fefferman form with $\Phi_x(X)=1$ and $\Phi_\xi(X)=h^{-1}$, and for $M$ an order function, i.e., in this context a non-negative function on $\Rm^{2d}$ satisfying $M(x,\xi)\leq C(1+|x-y|+|\xi-\zeta|)^N M(y,\zeta)$ uniformly in $(x,y,\xi,\zeta)$ for some $C(M)$ and $N(M)$. Then $a\in \rS^j(M)\otimes \Mm(n)$ if for each component $b$ of $a$, we have for each $2d-$dimensional multi-index $\alpha$, a constant $C_\alpha$ such that
\begin{equation}\label{eq:Sjh}
   h^{-j}\,  | \partial^\alpha_X b(X;h) | \leq C_\alpha   M(X),\quad \forall X\in\Rm^{2d},\ \forall h\in (0,h_0].
\end{equation}
We will mostly use the case $j=0$. 

For two operators $\ow_h a$ and $\ow_h b$ with symbols $a\in S^0(M_1)$ and $b\in S^0(M_2)$, we then define the composition $\ow_h c = \ow_h a \ow_h b$ with symbol $c\in S^0(M_1M_2)$ given by the (Moyal) product \cite[Thm. 7.9]{dimassi1999spectral}
\begin{equation}\label{eq:sharph}
c(x,\xi) =(a\sharp_h b) (x,\xi) := \Big( e^{i\frac h2(\partial_x\cdot\partial_\zeta - \partial_y\cdot\partial_\xi)} a(x,\xi) b(y,\zeta)\Big)_{|y=x;\zeta=\xi}.
\end{equation}

For $a\in S^0(1)$, we obtain (\cite[Thm. 7.11]{dimassi1999spectral},\cite[Prop. 1.4]{bolte2004semiclassical}) that $\ow_h a$ is bounded as an operator in ${\mathcal L}(L^2(\Rm^d)\otimes \Cm^n)$ with bound uniform in $0<h\leq h_0$ so that $I-h\ow_h a$ is invertible on that space when $h$ is sufficiently small.

An operator is said to be semiclassically elliptic when the symbol $a=a(x,\xi;h)\in S^0(M)$ is invertible in $\Mm_n$ for all $(x,\xi)\in \Rm^{2d}$ and $h\in (0,h_0]$ with then $a^{-1}\in S^0(m^{-1})$. 

Following \cite{dimassi1999spectral} (see \cite[Lemma 4.14]{bal2022topological}), we obtain the following results on resolvent operators. Let $H_h=\ow_h a$ with $a\in S^0(m)$. Let $z=\lambda+i\omega\in\Cm$ with $\omega\not=0$. Then $(z-H_h)^{-1}$ is a bounded operator and there exists an analytic function $z\to r_z=r_z(y,\zeta;h)$ such that $(z-H_h)^{-1}=\ow_h r_z$ (compare to Lemma \ref{lem:resiso}). Moreover, the symbol $r_z\in S^0(1)$ satisfies
\begin{equation}\label{eq:resbd}
  |\partial^\beta_{(y,\eta)} r_z| \leq C_\beta | \omega|^{-1-|\beta|} (1+h^{\frac{2d+1}2}|\omega|^{-2d-1}),
\end{equation}
for all multi-indices $\beta=(\beta_y,\beta_\zeta)$ and a constant $C_\beta$ independent of $z\in Z$ a compact set in $\Cm$ and $0<h\leq h_0$.

\medskip
\noindent{\bf Helffer-Sj\tio strand formula \cite{davies_1995,dimassi1999spectral}.} Finally, we recall some results on spectral calculus and the Helffer-Sj\"ostrand formula following \cite{davies_1995,dimassi1999spectral}; see also \cite{bolte2004semiclassical} for the vectorial case. For any self-adjoint operator $H$ from its domain $\mD(H)$ to $L^2(\Rm^d)\otimes\Cm^n$ and any bounded continuous function $\phi$ on $\Rm$, then $\phi(H)$ is uniquely defined as a bounded operator on $L^2(\Rm^d)\otimes\Cm^n$ \cite[Chapter 4]{dimassi1999spectral}. Moreover, for $\phi$ compactly supported, we have the following representation
\begin{equation}\label{eq:hs}
  \phi(H) = -\frac1\pi \dint_{\Cm} \bar\partial \tilde\phi(z) (z-H)^{-1} d^2z,
\end{equation}
where, for $z=\lambda+i\omega$, $d^2z:=d\lambda d\omega$, $\bar\partial=\frac12\partial_\lambda+\frac1 2\partial_\omega$, and where $\tilde\phi(z)$ is an almost analytic extension of $\phi$.
The extension $\tilde\phi$ is compactly supported in $\Cm$.  Moreover, $\tilde\phi(\lambda+i0)=\phi(\lambda)$ and $\bar\partial \tilde\phi(\lambda+i0)=0$, whence the name of {\em almost} analytic extension.  We can in fact choose the almost analytic extension such that $|\bar\partial\tilde\phi| \leq C_N |\omega|^N$  for any $N\in\Nm$ in the vicinity of the real axis uniformly in $(\lambda,\omega)$ on compact sets.    Several explicit expressions, which we do not need here, for such extensions are available in \cite{davies_1995,dimassi1999spectral}.

\end{document}